\newtheorem{theorem}{Theorem}
\newtheorem{lemma}{Lemma}[section]
\theoremstyle{definition}
\newtheorem{remark}{Remark}[section]
\renewenvironment{proof}[1][\proofname]{%
   \par\pushQED{\qed}\normalfont%
   \topsep6\p@\@plus6\p@\relax
   \trivlist\item[\hskip\labelsep\bfseries#1\@addpunct{.}]%
   \ignorespaces
}{%
   \popQED\endtrivlist\@endpefalse
}
\newcommand{\p}{\partial}
\newcommand{\ups}{\upsilon}
\newcommand{\beq}{\begin{equation}}
\newcommand{\eeq}{\end{equation}}
\newcommand{\eeeem}{\end{multline}}
\newcommand{\bem}{\begin{multline}}
\newcommand{\bqa} {\begin{eqnarray}}
\newcommand{\eqa} {\end{eqnarray}}
\newcommand{\eps}{\varepsilon}
\newcommand{\bmul}{\begin{multline}}
\newcommand{\emul}{\end{multline}}
\DeclareMathOperator{\End}{End}
\DeclareMathOperator{\CT}{CT}
\DeclareMathOperator{\ad}{ad}
\DeclareMathOperator{\dist}{dist}
\def \ra {\rightarrow}
\newcommand{\CC}{{\mathcal C}}
\newcommand{\CA}{{\mathcal A}}
\newcommand{\CB}{{\mathcal B}}
\renewcommand{\CT}{{\mathcal T}}
\newcommand{\CH}{{\mathcal H}}
\newcommand{\CI}{{\mathcal I}}
\newcommand{\CO}{{\mathcal O}}
\newcommand{\CU}{{\mathcal U}}
\newcommand{\CV}{{\mathcal V}}
\newcommand{\CS}{{\mathcal S}}
\newcommand{\CW}{{\mathcal W}}
\newcommand{\ZZ}{{\mathbb Z}}
\newcommand{\RR}{{\mathbb R}}
\newcommand{\SA}{{\mathscr A}}
\newcommand{\SI}{{\mathscr I}}
\newcommand{\SAl}{{{\mathscr A}_\ell}}
\newcommand{\SAal}{{{\mathscr A}_{a\ell}}}
\newcommand{\OL}{{O(L^{-\infty})}}
\newcommand{\Or}{{O(r^{-\infty})}}
\newcommand{\Ot}{{O(|t|^{-\infty})}}
\newcommand{\HP}{{G_0}}
\newcommand{\Ups}{\Upsilon}
\newcommand{\tUps}{{\tilde\Upsilon}}
\newcommand{\tPsi}{{\tilde\Psi}}
\newcommand{\tPi}{{\tilde\Pi}}
\newcommand{\upsp}{{\upsilon^{(+)}}}
\def \l {\left(}
\def \r {\right)}
\def \lal {\langle}
\def \ral {\rangle}
\title{Hall conductance and the statistics of flux insertions in gapped interacting lattice systems}
\author{Anton Kapustin, Nikita Sopenko \smallskip\\ 
{\it California Institute of Technology, Pasadena, CA 91125, USA}}
\begin{document}

\maketitle

\abstract{We study charge transport for zero-temperature infinite-volume gapped lattice systems in two dimensions with short-range interactions. We show that the Hall conductance is locally computable and is the same for all systems which are in the same gapped phase. We provide a rigorous versions of Laughlin's flux-insertion argument which shows that for short-range entangled systems the Hall conductance is an integer multiple of $e^2/h$. We show that the Hall conductance determines the statistics of flux insertions. For bosonic short-range entangled systems, this implies that the Hall conductance is an even multiple of $e^2/h$. Finally, we adapt a proof of quantization of the Thouless charge pump to the case of infinite-volume gapped lattice systems in one dimension.}

\section{Introduction}

Quantization of the Hall conductance of insulating two-dimensional materials at low temperatures is one of the most remarkable phenomena in condensed matter physics. Starting with the seminal work of Laughlin \cite{Laughlin}, many theoretical explanations of this phenomenon have been proposed which vary in their assumptions and degree of rigor. For systems of non-interacting  fermions with either an energy gap or a mobility gap there are several proofs that  zero-temperature Hall conductance $\sigma_{Hall}$ times $2\pi\hbar/e^2$ is an integer in the infinite-volume limit \cite{TKNN,bellissard1985,bellissard1986,bellissard1994,avron1994}. In particular, Laughlin's flux-insertion argument was made rigorous in this setting by Avron, Seiler and Simon \cite{avron1994}. The case of interacting systems is more involved, since quantization of Hall conductance generally holds only in the absence of topological order. Much progress can be made using the relation between $\sigma_{Hall}$ and the curvature of the Berry connection of the system compactified on a large torus. This line of work originated with Avron and Seiler \cite{avronseiler} and culminated in the proof by Hastings and Michalakis that for a gapped system on a large torus with a non-degenerate ground state the difference between $2\pi\hbar\sigma_{Hall}/e^2$ and the nearest integer is almost-exponentially small in the size of the system $L$, as defined in \cite{hastingsmichalakis} (see also \cite{bachmannetal}).


Despite all the progress in understanding quantization of the Hall conductance in the interacting case, there is still room for improvement. From a modern perspective, $\sigma_{Hall}$ is a topological invariant of a quantum phase of gapped 2d systems with a $U(1)$ symmetry. Since the distinction between phases becomes sharp only in infinite volume, it would be desirable to have a formalism which can deal with systems on a 2d Euclidean space rather than a torus. Within such a formalism it should be possible to see that in a gapped system $\sigma_{Hall}$ is locally computable, i.e. can be approximated well by an expectation value of a local observable. This would clarify the role of $\sigma_{Hall}$ as an obstruction to having a gapped edge. It should also be possible to prove that $\sigma_{Hall}$ is the same for all systems in the same gapped phase.

Another recent development inspired by quantum information theory is the viewpoint that different quantum phases of matter are  distinguished by different patterns of entanglement in the ground state \cite{QImeets}. If this is the case, then it should be possible to extract the value of $\sigma_{Hall}$ from the infinite-volume ground state, without specifying a concrete Hamiltonian. Indeed, in the case of non-interacting fermions, it is well-known how to extract the zero-temperature Hall conductance using only the projector to the ground state, see e.g. \cite{avron1994,kitaev2006anyons}. From this viewpoint, a system without topological order is a system with only short-range entanglement. One convenient definition of this notion (recalled in Section \ref{sec:InvPhases}) was proposed by A. Kitaev \cite{kitaevInv} under the name {\it an invertible gapped  phase}. Gapped systems of free fermions are invertible \cite{hastings:free}, but there are many interacting systems of this kind as well. It should be possible to prove quantization of $2\pi\hbar\sigma_{Hall}/e^2$ for an arbitrary system in an invertible gapped phase.

Finally, it has been argued using the statistics of flux insertions that for bosonic short-range entangled gapped 2d systems with a $U(1)$ symmetry $2\pi \hbar\sigma_{Hall}/e^2$ is always an {\it even} integer \cite{LevinSenthil}. It would be desirable to prove this, and studying flux-insertion for systems on a 2d Euclidean space is a natural approach. 

Another phenomenon closely related to the quantization of the Hall conductance is quantized charge transport in gapped non-degenerate 1d systems with a $U(1)$ symmetry. It was argued by Thouless \cite{Thouless} that one can attach a numerical invariant to a loop in the space such systems. This invariant is integral in the infinite-volume limit and is equal to the net charge pumped through a section of the system as it adiabatically cycles through the loop. Some of the questions mentioned above (such as local computability and independence of the particular Hamiltonian) can be also asked about the Thouless pump invariant. Recently Bachmann et al. \cite{bachmann2019many} proved approximate quantization of the Thouless pump invariant for gapped 1d systems on a large circle using quasi-adiabatic evolution and sub-exponential filter functions \cite{hastings2010quasi}. This proof makes explicit that the Thouless pump invariant is locally computable. One can hope to use similar methods to achieve a better understanding of the Hall conductance of gapped 2d systems.

In this paper we use the methods of \cite{bachmann2019many,hastings2010quasi,bachmann2020many,bachmann2020rational} to study both the Thouless pump invariant and the Hall conductance for gapped lattice systems in infinite volume. In the case of 1d systems, we merely adapt the approach of \cite{bachmann2019many} to the infinite-volume setting. In the case of 2d systems our main results can be summarized as follows (using the units where $\hbar=e^2=1$):
\begin{theorem}
Zero-temperature Hall conductance is locally computable for any 2d lattice system (either bosonic or fermionic) with an on-site $U(1)$ symmetry, exponentially decaying interactions, and a unique  gapped ground state on $\RR^2$. It is the same for all systems in the same gapped phase. If the system is an invertible gapped phase, $2\pi \sigma_{Hall}\in\ZZ$. If the system is also bosonic, then $2\pi \sigma_{Hall}\in 2\ZZ$.
\end{theorem}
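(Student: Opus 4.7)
My strategy is to produce a single locally-computable Kubo-type formula for $\sigma_{Hall}$ from the infinite-volume ground state, and then to use this formula throughout. Let $Q_x$ be the on-site charge and $X, Y \subset \ZZ^2$ two half-planes whose boundaries meet transversely at a point $p$. Applying the quasi-adiabatic evolution generator $\CI(H)$ of \cite{hastings2010quasi} to the (formally divergent) sums $Q_X = \sum_{x \in X} Q_x$ and $Q_Y = \sum_{y \in Y} Q_y$ one obtains quasi-local operators $K_X, K_Y$, and
\[
\sigma_{Hall} = i \lal \psi_0 | [ K_X, K_Y ] | \psi_0 \ral.
\]
Thanks to the sub-exponential decay of the kernel of $\CI(H)$ and the exponential decay of ground-state correlations (which follows from the spectral gap), the contribution to this expectation from outside a ball around $p$ is negligible, so $\sigma_{Hall}$ is determined up to rapidly decaying error by the restriction of $\psi_0$ to a neighborhood of $p$. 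This gives the local-computability statement.

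Phase invariance will follow by considering a smooth path $H_s$ of gapped Hamiltonians with unique ground states $\psi_s$. Quasi-adiabatic continuation provides a locally generated one-parameter family of automorphisms $\alpha_s$ of the quasi-local algebra with $\alpha_s^* \psi_s = \psi_0$. Differentiating the Kubo formula above along $s$, the derivative becomes an expectation of a nested quasi-local commutator that, by the Jacobi identity combined with the infinite-planar geometry, telescopes into the integral of a total derivative and vanishes. Hence $\sigma_{Hall}(\psi_s)$ is constant along any path within a gapped phase.

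For integer quantization in an invertible phase I would execute a rigorous version of Laughlin's flux-insertion argument. Gauging the $U(1)$ symmetry by a twist $\theta \in [0, 2\pi]$ supported on the bonds crossing a half-line $\ell$ attached to a point $p$ produces a family $H(\theta)$ with the same spectral gap, and the charge pumped across $\ell$ during adiabatic transport from $\theta = 0$ to $\theta = 2\pi$ equals $2\pi\sigma_{Hall}$ by the very Kubo formula above. Invertibility allows me to trivialize $\psi_0(2\pi)$ outside a ball containing $p$ by a locally generated automorphism, and uniqueness of the ground state forces $\psi_0(2\pi)$ to differ from $\psi_0$ by a local unitary. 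The transported charge therefore equals the $U(1)$-charge difference between two states that agree outside a bounded region, a difference of integers, yielding $2\pi\sigma_{Hall} \in \ZZ$. The even-quantization for bosonic systems then follows from a statistics computation in the spirit of \cite{LevinSenthil}: composing two commuting $2\pi$-flux insertions produces a local unitary whose topological spin on $\psi_0$ evaluates to $e^{i\pi \cdot 2\pi\sigma_{Hall}}$ via the integer charge bookkeeping of the previous step; bosonicity forces this phase to be $+1$, whence $2\pi\sigma_{Hall} \in 2\ZZ$.

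The main obstacle will be the last two parts, where the heuristic Laughlin and Levin--Senthil arguments must be turned into precise statements about quasi-local operators on the infinite-volume algebra. In particular, defining the flux-inserted state $\psi_0(\theta)$ and the double-flux statistics operator as honest quasi-local unitaries, and proving that the invertibility hypothesis supplies the trivializations needed to identify their action on $\psi_0$ with well-defined integer (respectively half-integer) invariants, will require the bulk of the technical work; the phase-invariance and local-computability parts are essentially consequences of the Kubo formula together with standard Lieb--Robinson estimates.
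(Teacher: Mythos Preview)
Your overall plan is in the right spirit and follows the paper's strategy closely for local computability and phase invariance (the paper's Kubo expression is $i\langle [K_{X\bar X}, K_{Y\bar Y}]\rangle$ with $K_{jk}=\CI_\Delta(J_{jk})$ applied to the \emph{current}, not $\CI$ applied directly to the half-plane charge $Q_X$; your $K_X$ is not a quasi-local observable, so you should be careful to define things at the level of the boundary currents). The phase-invariance argument in the paper is also somewhat different from your Jacobi-identity sketch: the paper repackages $\sigma_{Hall}$ as the evaluation $h_{ABC}$ of a closed real-valued $2$-current, then uses closedness plus local computability to move the evaluation point far from the support of any compactly truncated generator. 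Your differentiation approach might be made to work, but ``telescopes into a total derivative'' is where the content is, and you have not said why the boundary terms at infinity vanish.

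The genuine gap is in the quantization step. You write that ``uniqueness of the ground state forces $\psi_0(2\pi)$ to differ from $\psi_0$ by a local unitary.'' This is false as stated: the flux-inserted (vortex) state is locally indistinguishable from the ground state away from the insertion point for \emph{any} gapped system, invertible or not, yet in a topologically ordered model (e.g.\ the toric code) such a state lies in a different superselection sector and cannot be reached by any quasi-local unitary. Uniqueness of the ground state gives no control here. The paper's actual mechanism is: pass to the composite system (original $\otimes$ inverse), use the defining homotopy to a \emph{factorized} ground state, and then invoke a nontrivial result (Matsui's theorem on bounded entanglement) to prove that factorized states have no nontrivial superselection sectors of the relevant kind. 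Only then does the vortex become the image of the vacuum under an almost local unitary. Without this step your Laughlin argument does not close, because you cannot conclude that the $2\pi$-flux automorphism approximately preserves $\psi_0$ far from $p$ in the strong sense required.

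A second gap is your claim that the transported charge is ``a difference of integers.'' Even once the vortex state is $\CV\psi_0\CV^{-1}$ for almost local $\CV$, the expectation $\langle Q_\Gamma\rangle$ in either state is not an integer, and their difference is not obviously one. The paper obtains integrality by an independent argument (its Section~3): it shows that for any $U(1)$-preserving locally generated automorphism which approximately preserves the ground state, the expected transported charge $\langle T_{\CT\CS}\rangle$ through a large surface is approximately integral, using the auxiliary unitaries $e^{i\phi\tilde Q_\Gamma}$ and a factorization lemma. Your proposal needs either this ingredient or an equivalent one. The even-integer statement for bosonic systems then rests on the same missing foundation: the exchange computation requires the vortex-creating operators $\CV_1,\CV_2$ to be \emph{almost local} (so that they approximately commute at large separation), and this is exactly what the superselection-triviality step provides.
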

In fact, we give two proofs of the quantization of Hall conductance. One of them is a version of Laughlin's flux-insertion argument adapted to the case of systems in an invertible phase. The other one uses the relation between Hall conductance and the statistics of flux insertions  \cite{LevinSenthil}. We make this relation precise by defining local operators which transport flux and showing that their large-scale properties are controlled by $\sigma_{Hall}$. We also show that Hall conductance can be determined from the ground-state of a gapped Hamiltonian without specifying the Hamiltonian and that it is invariant under a certain equivalence relation on states. This equivalence relation is induced by automorphisms of the algebra of observables which are "fuzzy" analogues of finite-depth local unitary quantum circuits.

The paper is organized as follows. In Section 2 we give  definitions and describe some constructions pertaining to gapped lattice systems with $U(1)$ symmetry in any dimension. These definitions and constructions are used throughout the remainder of the paper. In particular, we define the notions of a gapped  phase and an invertible gapped phase. In Section 3 we adapt the results of  \cite{bachmann2019many,bachmann2020rational} on charge pumping to  infinite-volume systems. In Section 4 we study Hall conductance. We show that zero-temperature Hall conductance is locally computable, define vortex states and transport of vortices, show that the Hall conductance controls the statistics of vortices, and use this to argue that for systems in an invertible phase $2\pi\sigma_{Hall}$ is an integer, while for bosonic systems in an invertible phase $2\pi\sigma_{Hall}$ is an even integer. We also show that $\sigma_{Hall}$ is determined by the state and is invariant under a certain class of automorphisms with good locality properties (locally generated automorphisms). In Appendix A we collect some technical results used in the paper. In Appendix B we present a version of  Lauglin's  flux-insertion argument which shows that $2\pi\sigma_{Hall}$ is quantized for interacting systems in an invertible phase. It uses some of the results of Sections 3 and 4. In Appendix C we prove triviality of superselection sectors of certain states produced from a factorized state by locally generated automorphisms.\\



\section{Preliminaries}

\subsection{Basic definitions}

In this paper we study lattice many-body systems (bosonic or fermionic) in $d$ dimensions. We follow the operator-algebraic framework described in the monograph \cite{bratteli2012operator}.  A lattice in $d$ dimensions is an infinite subset $\Lambda$ of the Euclidean space $\RR^d$ which is uniformly discrete (that is, there is an $r>0$ such that for any two distinct $j,k\in\Lambda$ $\dist(j,k)\geq r$) and uniformly filling (that is, there is an $r'>0$ such that for any $x\in\RR^d$ $\dist(x,\Lambda)\leq r'$). The algebra of observables on a site $j\in\Lambda$ is a matrix algebra $\SA_j=\End(\CH_j)$, where $\CH_j$ is a finite-dimensional complex vector space with the dimension $\dim \CH_j = d_j^2$ that grows at most polynomially with the distance from the origin. In the fermionic case the vector spaces $\CH_j$ are $\ZZ_2$-graded by fermion parity, so the algebras  $\CA_j$ are also $\ZZ_2$-graded. In the bosonic case for any finite subset $\Gamma\subset\Lambda$ we let $\SA_\Gamma=\otimes_{j\in\Gamma} \SA_j$. Then for any inclusion of finite subsets  $\Gamma\subset \Gamma'$ we have an obvious injective homomorphism $\SA_\Gamma\ra\SA_{\Gamma'}$, and the algebras $\SA_\Gamma$ form a directed system over the directed set of finite subsets of $\Lambda$.  A normed  $*$-algebra $\SAl$ of local observables is defined as the direct limit of this directed system:
\beq\label{dirlim}
\SAl= \cup_{\Gamma} \SA_\Gamma .
\eeq 
Then the $C^*$-algebra $\SA$ is defined as the norm completion of $\SAl$. Elements of $\SA$ will be referred to as quasi-local observables, or simply observables. In the fermionic case, we do the same, except we define $\SA_\Gamma$ using the graded tensor product. Also, since all observables appearing in this paper will be bosonic, in the fermionic case by a (quasi-local) observable we will always mean an even element of $\SA$.

A quasi-local observable $\CA$ is called local with a compact localization set $\Gamma\subset\Lambda$ if $\CA\in\SA_\Gamma.$ We may also say that such an $\CA$ is localized on $\Gamma$. Equivalently, $\CA$ commutes with any observable $\CB\in \SA_j$ for any site $j \in \bar{\Gamma} = \Lambda \backslash \Gamma$. We call an observable $\CA\in\SA$ almost local if it can be approximated well by a local observable. To be more precise, let us denote by $B_r(j)$ a ball of radius $r>0$ with the center at $j\in\Lambda$. That is, $B_r(j)=\{k\in\Lambda, \dist(k,j)<r\}.$ Also, let us pick a monotonically decreasing positive (MDP) function $a(r)$ on $\RR_+=[0,+\infty)$ which has superpolynomial decay, i.e. it is of order $\Or$. An observable $\CA$ will be called $a$-localized on a site $j$ if for any $r>0$ there is a local observable $\CA^{(r)}\in\SA_{B_r(j)}$ such that $||\CA-\CA^{(r)}||\leq||\CA|| a(r)$. An observable will be called almost local if it is $a$-localized on $j$ for some MDP function $a(r)=\Or$ and some $j\in\Lambda$. Almost local observables approximately commute when they are localized far from each other. More precisely, if $\CA$ is $a$-localized on $j$ and $\CB$ is $b$-localized on $k$, then $||[\CA,\CB]||\leq ||\CA||\cdot ||\CB|| c(\dist(j,k)),$ where $c(r)=2(a(r/2)+b(r/2)+3 a(r/2) b(r/2))=\Or.$ Almost local observables can also be characterized in terms of their commutators with local observables, see Lemma \ref{lma:approximation}. 
We will denote the $*$-algebra of almost local observables $\SAal$. By definition, $\SAl$ is a dense sub-algebra of $\SA$. Since $\SAal\supset \SAl$, $\SAal$ is also dense in $\SA$.

A Hamiltonian $H$ for $\SA$ is a formal sum
\beq
H = \sum_{j \in \Lambda} H_{j},
\eeq
where the interactions $H_j\in\SA$ are assumed to be self-adjoint, uniformly bounded and exponentially decaying, i.e. there are $J>0$ and  $R>0$ such that $\|H_{j}\| \leq J$ and for any local $\CA$ localized on any site $k$ we have $\|[H_{j},\CA]\|\leq J \|\CA\| e^{-\text{dist}(j,k)/R}.$ It is well-known \cite{bratteli2012operator2} that such an $H$ gives rise to a strongly-continuous one-parameter family of automorphisms of $\SA$ (the time evolution). We denote this family $\tau_t$, $t\in\RR$. 

More generally, we will consider formal linear combination of the form  $F=\sum_{j\in\Lambda} F_{j}$, where $F_j\in\SA$ satisfy the following two  conditions. First, the observables $F_j$ are uniformly bounded, i.e. there exists $C>0$ such that $\|F_j\|\leq C$ for all $j$. Second, there is an MDP function $f(r)$ on $\RR_+$ such that $f(r)=\Or$ and for all $j\in\Lambda$ the observable $F_j$ is $f$-localized on $j$. In particular, $F_j$ is an almost local observable.  Such a formal linear combination $F$ will be called a 0-chain. If we want to specify the function $f$, we will say that $F$ is an $f$-local 0-chain. The Hamiltonian $H$ is an example of a 0-chain.

We will be also using 1-chains (currents) $J_{jk}$ and 2-chains (2-currents) $M_{jkl}$. A current is a skew-symmetric function  $J:\Lambda\times\Lambda\ra \SA$ satisfying three conditions. First, $\|J_{jk}\|$ are uniformly bounded: $\|J_{jk}\| \leq C$ for some $C>0$. Second, there exists an MDP function $f(r)=\Or$ such that for any $j,k\in\Lambda$  $\|J_{jk}\| \leq C \, f(\dist(j,k)).$ Third, for any $j,k\in\Lambda$ and any $r>0$ there is a local $J^{(r)}_{jk}\in\SA_{B_r(j)\cup B_r(k)}$
such that $||J_{jk}-J^{(r)}_{jk}||\leq C f(r).$ If we want to specify the function $f$, we will say that $J$ is an $f$-local current. An $n$-chain (or $n$-current) for general $n$ is defined similarly: it is a skew-symmetric function $M_{j_0\ldots j_n}$ on the Cartesian product of $n+1$ copies of $\Lambda$ which is valued in $\SA$, is uniformly bounded, is rapidly decaying when the arguments are far apart, and such that for any $r>0$  $M_{j_0\ldots j_n}$ can be approximated well by a local observable on $B_r(j_0)\cup \ldots \cup B_r(j_n)$. Such objects were introduced by A. Kitaev \cite{kitaev2006anyons} and have found several interesting applications \cite{kapustin2019thermal,kapustin2020higherA,kapustin2020higherB}. In this paper we only use $n$-chains with $n\leq 2$.

We define a linear map $\partial$ from the space of currents to the space of  0-chains:
\beq
(\p J)_j = \sum_{k \in \Lambda} J_{jk},
\eeq
and more generally from the space of $n$-chains to the space of $(n-1)$-chains:
\beq
(\p M)_{j_0\ldots j_{n-1}} = \sum_{j_n \in \Lambda} M_{j_0\ldots j_n}.
\eeq
The definition of $n$-chains was chosen so that these maps are well-defined. Also, we have $\p^2=0$.

To any 0-chain $F$ and any compact set $\Gamma$ we attach an almost local  observable $F_\Gamma=\sum_{j\in\Gamma} F_j$. Similarly, we denote
\beq\label{defJAB}
J_{AB} = \sum_{j \in A} \sum_{k \in B} J_{jk},
\eeq
\beq
M_{ABC} = \sum_{j \in A} \sum_{k \in B}  \sum_{l \in C} M_{jkl} .
\eeq
Here $A,B,C$ are some subsets of $\Lambda$ for which the above sums are  convergent. It is important that these expressions can be well-defined observables even if the subsets $A,B,C$ are not compact. For example, if $d=1$ and $A$ and $B$ are complementary half-lines in $\RR$, the sum defining $J_{AB}$ is convergent and defines an almost local observable which is localized near the boundary of $A$ and $B$. 

For a general 0-chain $F$, the formal expression $F=\sum_{j\in\Lambda} F_j$ is not a well-defined element of $\SA$. But it gives rise to a well-defined derivation $\ad_F=[F,\cdot]$ of the algebra  $\SAal$ defined by
\begin{equation}
\ad_F(\CA)=\sum_{j\in\Gamma}[F_j,\CA].
\end{equation}
This derivation is unbounded, therefore  cannot be extended to the whole $\SA$. Its main use is to define a class of automorphisms of the algebra $\SA$ with good locality properties. Let $F(s)=\sum_{j\in\Lambda} F_j(s)$ be a self-adjoint 0-chain which is a differentiable function of a parameter $s\in [0,1]$. 
Then one can define a family of automorphisms $\alpha_F(s)$ of $\SA$ by solving the equations 
\beq
-i \frac{d}{ds} \alpha_{F}(s)(\CA) = \alpha_F(s)\left([F(s),\CA]\right), \,\,\,\, \alpha_{F}(0)=1,
\eeq
for all $\CA\in\SAal$ and then extending to the whole $\SA$ by continuity.
These equation can be shown to have a unique solution \cite{bratteli2012operator2}. Using the Lieb-Robinson bound, one can show that for any self-adjoint 0-chain $F$ and any $s$ the automorphism $\alpha_F(s)$ maps $\SAal$ to $\SAal$. More precisely, if  $F$ is a self-adjoint $f$-local 0-chain, and $\CA\in\SA$ is $a$-localized on a site $j$ for some MDP function $a(r)=\Or$, then $\alpha_F(s)(\CA)$ is $h$-localized on $j$. Here $h(r)=\Or$ is an MDP function which depends on $\CA$ only through $a$.  For a proof, see Appendix \ref{app:lemmas}. Note also that $\alpha_{F}(ts)=\alpha_{t F}(s)$ and $\alpha_{-F}(s)=\alpha_F(s)^{-1}$. If $F$ is independent of $s$, then the automorphisms $\alpha_F(s)$ satisfy $\alpha_F(t)\alpha_F(s)=\alpha_F(t+s)$.
We will call an automorphism $\alpha$ of $\SA$ locally generated if there exists an $s$-dependent self-adjoint 0-chain $F$ such that $\alpha=\alpha_F(1).$ In what follows, by a 0-chain we will always mean a self-adjoint 0-chain.

Above we have defined localization sets for local 
observables. One can also define approximate localization sets for 0-chains. 
A 0-chain $F$ is said to be approximately localized on $\Gamma\subset\Lambda$ if there is a $C>0$ such that for any $j\in\Lambda$, any MDP function $a$ such that $a(r)=\Or$, and any $\CA$ which $a$-localized on $j$ one has $\|[F,\CA]\|\leq C \|\CA\| h({\rm dist}(j,\Gamma))$, where the MDP function $h(r)=\Or$ depends on $\CA$ only through $a$. Given an arbitrary 0-chain $F=\sum_j F_j$, one  can construct a truncated 0-chain $F_\Gamma$ approximately   localized on any $\Gamma$ by letting $F_\Gamma=\sum_{j\in\Gamma} F_j$. If $\Gamma$ is finite, $F_\Gamma$ is an almost local observable, otherwise it is a 0-chain approximately localized on $\Gamma$. Similarly, given a current $J_{jk}$ and any two subsets $A,B,$ we can interpret $J_{AB}$ as a 0-chain with components $J_{AB,j}=\sum_{k\in B} J_{jk}$ for $j\in A $ and $J_{AB,j}=0$ for $j\in\bar A$  even if the sum (\ref{defJAB}) is not convergent. We will also say that an automorphism $\alpha$ is approximately localized on $\Gamma\subset\Lambda$ if for any $j\in\Lambda$ and any $\CA$ which is $a$-localized on $j$ one  has $\|\alpha(\CA)-\CA\|\leq \|\CA\| h({\rm dist}(j,\Gamma))$ for some MDP function $h(r)=\Or$ which depends on $\CA$ only through $a$. Lemma \ref{lma:FGA} shows that the action of the automorphism $\alpha_F(s)$ on an observable localized near $j\in\Lambda$ depends only on the behavior of $F$ in the neighborhood of $j$.

We will say that a state $\psi$ on $\SA$ is superpolynomially clustering if there is an MDP function $h(r)=\Or$ such that for any two finite subsets $\Gamma$ and $\Gamma'$ and any two observables $\CA\in\SA_{\Gamma}$ and $\CA'\in\SA_\Gamma'$ one has
\beq
\left| \langle \CA\CA'\rangle_\psi-\langle \CA\rangle_\psi \langle \CA'\rangle_\psi \right| \leq C |\Gamma|\cdot |\Gamma'|\cdot ||\CA||\cdot ||\CA'||\,  h(\dist(\Gamma,\Gamma')).
\eeq
Here $|\Gamma|$ is the number of sites in $\Gamma$.
If $h(r)$ can be chosen to have the form $h(r)=C e^{-r/\xi}$ for some $C>0$ and $\xi>0$, we will say that the state $\psi$ is exponentially clustering.

If $\psi$ is superpolynomially clustering, $\CA$ is $a$-localized on $j\in\Lambda$, and $\CB$ is $b$-localized on $k\in\Lambda$, then $\left| \langle \CA\CB\rangle_\psi-\langle \CA\rangle_\psi \langle \CB\rangle_\psi \right| \leq \|\CA\|\cdot \|\CB\| f(\dist(j,k))$, where the MDP function $f(r)=\Or$ depends on $\CA,\CB$ only through $a$ and $b$.

For a state $\psi$ and an automorphism $\alpha$ one can define a new state $\alpha(\psi)$ by $\langle\CA\rangle_{\alpha(\psi)}=\lal \alpha(\CA) \ral_{\psi}.$  
It is easy to see that if the state $\psi$ is superpolynomially clustering and the automorphism $\alpha$ is locally-generated, then $\alpha(\psi)$ is also superpolynomially clustering.

\subsection{Quasi-adiabatic evolution}\label{QAevolution}

In this paper we will be studying lattice systems with an energy gap. That is, we assume that we are given a pure state $\psi$ on $\SA$ which is a ground-state of $\tau_t$, i.e. for any $\CA \in \SAl$ we have $\lal \CA^* \ad_H(\CA) \ral_{\psi} \geq 0$. In the fermionic case, we also assume that $\langle \CA\rangle_\psi=0$ for any odd $\CA$. We further assume that it is a unique gapped ground-state, in the sense that in the GNS Hilbert space corresponding to $\psi$ the GNS vacuum vector is a unique vector invariant under the unitary evolution corresponding to $\tau_t$, and that $0$ is an isolated eigenvalue of the Hamiltonian in the GNS representation (the generator of the unitary evolution).  A gapped lattice system is a triple $(\SA,H,\psi)$, where $\SA$ is the algebra of observables, $H$ is a Hamiltonian, and $\psi$ is a state, with the properties described above. These properties imply that the state $\psi$ has the exponential clustering property with some characteristic length scale  $\xi$ \cite{HastingsKoma,NachtergaeleSims}.

In the presence of the energy gap one can define  certain useful linear maps from $\SA$ to $\SA$.
For any $\Delta>0$ we choose a continuous function $W_\Delta(t)$ which is a real, odd, bounded,  superpolynomially decaying for large $|t|$, and $\hat{W}_\Delta(\omega) = - \frac{i}{\omega}$ for $|\omega| > \Delta $. Here $\hat W_\Delta$ is the Fourier-transform of $W_\Delta$, $\hat W_\Delta(\omega)=\int e^{i\omega t} W_\Delta(t) dt$. It was shown in \cite{hastings2010quasi} that such function exists. If $H=\sum_j H_j$ is a gapped Hamiltonian with respect to a ground state $\psi$, we pick a $\Delta$ smaller than the gap and for any observable $\CA$ define 
\begin{equation}
\SI_\Delta(\CA):=\int^\infty_{-\infty} W_\Delta(t) \tau_t(\CA) dt.
\end{equation}
The map $\SI_\Delta:\SA\ra\SA$ is a bounded linear map which commutes with conjugation. It also maps $\SAal$ to $\SAal$. Indeed, suppose $a(r)=\Or$ and $\CA$ is $a$-localized on a site $j$. By Lemma \ref{lma:approximation}, it is sufficient to prove that for any $k\in\Lambda$ and any $\CB\in\SA_k$ the norm of the commutator $[\SI_\Delta(\CA),\CB]$ is bounded from above by a quantity of order $\Or$, where $r=\dist(k,j).$ Since one can approximate $\CA$ with an $a(r/3)$ accuracy by a local observable $\CA^{(r/3)}$ localized on $B_{r/3}(j)$, it is sufficient to prove that $[\SI_\Delta(\CA^{(r/3)}),\CB]$ is of order $\Or$. The Lieb-Robinson bound for exponentially decaying interactions \cite{HastingsKoma} implies that one has an estimate $\|[\tau_t(\CA^{(r/3)}),\CB]\|<C r^d \|\CA^{(r/3)}\|\cdot \|\CB\| \exp((v_0 |t|-2r/3)/R_0)$, where $C, v_0,$ and $R_0$ are positive numbers. Since $W_\Delta(t)=\Ot$, this implies the desired result. Also, since $W_\Delta$ is odd, $\langle \SI_\Delta(\CA)\rangle_\psi=0$ for any $\CA\in\SA$.

Using the functional calculus for unbounded operators one can easily see that for any $\CA,\CB\in\SA$ one has identities
\beq\label{Kubo_static}
\langle \SI_\Delta(\CA) \CB\rangle_\psi=i\left\langle 0| \CA \HP \CB\right|0\rangle,\quad \langle \CB \SI_\Delta(\CA) \rangle_\psi=-i\left\langle0| \CB \HP \CA\right|0\rangle,
\eeq
where $|0\rangle$ is a cyclic vector for the GNS representation, observables are identified with their images in this representation, and $\HP=(1-P)\frac{1}{H}(1-P)$ with $P=|0\rangle \langle 0|$. Therefore for any two observables $\CA$, $\CB$ one has
\bqa\label{Kubo_basic_property}
\langle \SI_\Delta(i[H,\CA])\CB\rangle_\psi=\langle \CA \CB\rangle_\psi-\langle \CA\rangle_\psi\langle \CB\rangle_\psi,
\\
\quad \langle \CB \SI_\Delta(i[H,\CA])\rangle_\psi =\langle \CB \CA\rangle_\psi -\langle \CB\rangle_\psi\langle \CA\rangle_\psi.
\eqa
\begin{remark}\label{Wdelta ambig}
While the l.h.s. of eqs. (\ref{Kubo_static}) involves a map $\SI_\Delta$ which depends both on $\Delta$ and the choice of the function $W_\Delta$, the r.h.s. does not depend on either. This is consistent because for any self-adjoint $\CA\in\SA$ changing $W_\Delta$ changes $\SI_\Delta(\CA)$ only by an observable which annihilates the ground state. One can easily check this property using functional calculus.
\end{remark}

We say that a self-adjoint observable $\CA$ does not excite the state $\psi$, if for any observable $\CB$ such that $\langle \CB \rangle_{\psi} = 0$ we have $\langle \CA \CB \rangle_{\psi} = 0$. For brevity we denote this condition by $\langle \CA ... \rangle_{\psi} = 0$. Note that if $\langle \CA ... \rangle_{\psi} = 0$, then $\lal e^{i \CA} ... \ral_{\psi}=0$ and $\lal e^{i\CA}\ral=1$.

If $\psi$ is the ground state of a gapped Hamiltonian, then for any self-adjoint almost local observable $\CA$ we can define a self-adjoint almost local observable $\tilde \CA$ which does not excite $\psi$ by letting
\beq
\tilde \CA:=\CA-\SI_\Delta(i[H,\CA]).
\eeq
This follows easily from (\ref{Kubo_basic_property}).
Although $\tilde\CA$ depends on the choice of the function $W_\Delta$, by the Remark \ref{Wdelta ambig} varying $W_\Delta$ affects $\tilde\CA$ only by a self-adjoint observable which annihilates the ground state.

Consider a differentiable family of Hamiltonians $H(s)=\sum_j H_j(s)$, $s\in [0,1],$ with a unique gapped ground state $\psi_s$ for each $s$.  Let $\Delta$ be a positive number less than the lower bound of the gaps of the Hamiltonians $H(s)$. We define a one-form on $[0,1]$ with values in $\SA$ by 
\beq
G_j (s) ds := \SI_\Delta(d H_j(s)) .
\eeq
The formal sum $G(s) = \sum_{j \in \Lambda} G_j(s)$ is a 0-chain.

The automorphism $\alpha_G(s)$ generated by $G(s) = \sum_j G_j(s)$ implements a quasi-adiabatic evolution introduced in \cite{hastings2004lieb,hastings2005quasiadiabatic}. It was shown in \cite{bachmann2012automorphic} that if $H(0)$ is a Hamiltonian with a gapped ground state which is a limit of finite-volume Hamiltonians with gapped ground states, this automorphism gives an automorphic equivalence of ground states of $H(s)$ for all $s$. The expectation value of an almost local observable $\CA$ in the ground-state of $H(s)$ therefore satisfies
\beq
\frac{d}{ds} \langle \CA \rangle(s) = \langle i[G(s), \CA] \rangle .
\eeq
Using eq. (\ref{Kubo_static}), one can easily see that this is equivalent to the Kubo formula for static linear response (at zero temperature). 
We can also quasi-adiabatically evolve the systems only on some region $S$ by an automorphism $\alpha_{G_{S}}$ generated by $G_{S}(s) = \sum_{j \in S} G_{j}(s)$. Since the ground state $\psi$ is exponentially clustering, the state $\alpha_{G_S}(s)(\psi)$ is superpolynomially clustering.

\subsection{Gapped systems with a $U(1)$ symmetry}\label{Gapped system with U(1) symmetry}

We say that a pair $(\SA,H)$ has an on-site $U(1)$ symmetry if we are given a self-adjoint $0$-chain $Q=\sum_j Q_j$ where $Q_{j}\in \SA_j$ satisfies $\exp(2\pi i Q_j)=1$ and $[Q,H_{j}] =0$ for all $j\in\Lambda$. The 0-chain $Q=\sum_j Q_j$ will be called the charge. The corresponding family of automorphisms $\alpha_Q(s)$ satisfies $\alpha_Q(2\pi)=1$. A gapped lattice system with a $U(1)$ symmetry is a quadruple  $(\SA,H,\psi,Q)$, where $(\SA,H,\psi)$ is a gapped lattice system and the pair $(\SA,H)$ has an on-site $U(1)$ symmetry with charge $Q$. One does not need to require $\psi$ to be invariant with respect to the automorphisms $\alpha_Q(s)$: this follows from the Goldstone theorem (see below). 

If $(\SA,H)$ has an on-site $U(1)$ symmetry with  charge $Q$, one can define a current
\beq
J_{jk} = i [H_k, Q_j] - i [H_{j},Q_k]
\eeq
which satisfies the conservation law $\left.\frac{d}{dt}\right|_0\tau_t\left(Q_j\right) = (\p J)_{j} =  \sum_{k} J_{jk}$.

Suppose an on-site $U(1)$ symmetry preserves an $s$-dependent 0-chain $F(s)$, in the sense that  $[Q,F_j(s)]=0$ for all $j$ and all $s$. Then we can associate to $F(s)$ a current
\beq
T^{F}_{jk} = \int_{0}^{1} \alpha_{F}(s) (i [F_k(s), Q_j] - i [F_j(s), Q_k]) ds.
\eeq
It satisfies
\beq\label{1st property of T}
\alpha_F(1)(Q_j)-Q_j=\left(\partial T^F\right)_j.
\eeq
Therefore for any finite region $\Gamma$ we have
\beq\label{2nd property of T}
\alpha_F(1)(Q_{\Gamma}) - Q_{\Gamma} = T^{F}_{\Gamma \bar{\Gamma}}.
\eeq
Thus the current $T^{F}$ measures the charge transported by the automorphism $\alpha_F(1)$. Note that the r.h.s. of eq. (\ref{2nd property of T}) can be well-defined even if $\Gamma$ is infinite. Then it can serve as a definition of the charge transported from $\bar\Gamma$ to $\Gamma$.

 We say that a state $\psi$ has no local spontaneous symmetry breaking if there is a $U(1)$-invariant self-adjoint current $K_{j k}$, such that 
\beq
\langle (Q - (\p K))_j ... \rangle_{\psi} = 0.
\eeq
This condition implies the absence of spontaneous symmetry breaking, i.e. $\langle [Q,\CA]\rangle_\psi=0$ for any almost local observable $\CA$. Equivalently, $\langle \alpha_Q(s)(\CA)\rangle_\psi=\langle \CA\rangle_\psi$ for any observable $\CA$.
We use the notation
\beq
\tilde{Q}_j := Q_j - (\p K)_j
\eeq
for a modified local charge that does not excite the ground state. For a finite region $\Gamma$ we introduce a unitary observable
\beq
V_{\Gamma}(\phi) := e^{i \phi \tilde{Q}_{\Gamma}} =  e^{i \phi (Q_{\Gamma} - K_{\Gamma \bar{\Gamma}})}.
\eeq
Since $\tilde Q_\Gamma$ does not excite the ground state, $V_\Gamma(\phi)$ satisfies $\langle V_\Gamma(\phi) ... \rangle_\psi=0$.

For a ground state of a gapped Hamiltonian one can always find such a $K_{jk}$ by letting
\beq
K_{jk} = \SI_\Delta(J_{j k}).
\eeq
Therefore such a state does not break $U(1)$ symmetry spontaneously. This is the usual Goldstone theorem. 
Moreover, if we have a state obtained from the ground state by some locally generated $U(1)$-invariant automorphism $\alpha=\alpha_{F}(1)$, then a suitable $K_{jk}$ also exists:
\beq\label{eq:Knew}
K_{jk} = \alpha^{-1}_F(\SI_\Delta(J_{jk}) + T^{F}_{jk}).
\eeq

There are two kinds of ambiguities for $K_{jk}$. First, one can add to $K_{jk}$ any $(\p N)_{jk}$ for any 2-current $N_{jkl}$. Second, one can add any 1-current $K'_{jk}$ that does not excite the ground state. Varying the function $W_{\Delta}(t)$ and $\Delta$ in $\CI_{\Delta}(J_{jk})$ results in the second kind of ambiguity. All physical quantities defined below do not depend on these ambiguities.   
 
\begin{remark}\label{rmk:Poincare}
If for a ground state of a gapped Hamiltonian it is true that  $\lal (\p M)_{j_1 j_2... j_{n-1}} \, ... \ral = 0$ for some $(n-1)$-chain $M$ for $j_a \in S$ for some region $S$, then there exists an $n$-chain $N_{j_1 j_2 ... j_{n}}$ such that
\beq\label{KN}
\lal (M_{{j_1 j_2 ... j_{n}}} - (\p N)_{j_1 j_2 ... j_{n}}) ... \ral  = 0,\quad \text{if all } j_a \in S.
\eeq
Indeed, one can take
\beq
N_{j_1 ... j_{n+1}} = \SI_\Delta(i[\tilde{H}_{j_1},M_{j_2 ... j_{n+1}}]) + \text{cyclic permutations of }\{j_1,...,j_{n+1}\}
\eeq
where
\beq
\tilde{H}_j = H_j - \SI_\Delta(i[H,H_j]).
\eeq

In particular if $\lal (\p K)_{jk} ... \ral = 0$, then there is a 2-current $N_{jkl}$, such that $\lal (K_{jk}-(\p N)_{jk}) ... \ral=0$. This implies that the only ambiguities in the current $K$ are those noted above.

\end{remark}

\subsection{Invertible phases}\label{sec:InvPhases}

A gapped lattice system $(\SA,H,\psi)$ is said to be trivial if for all $j\in\Lambda$ $H_j$ is a local operator localized on $j$, and $\psi$ is factorized, i.e. $\langle \CA\CB\rangle_\psi=\langle\CA\rangle_\psi\langle\CB\rangle_\psi$ whenever $\CA$ and $\CB$ are local observables localized on two different sites $j,k\in\Lambda$.
Two gapped lattice systems with the same algebra of observables are said to be in the same phase if there is a differentiable path of gapped lattice systems connecting them. A bosonic gapped lattice system $(\SA,H,\psi)$ is said to be in an invertible phase, if there is another bosonic gapped lattice system $(\SA',H',\psi')$ (``the inverse system'') such that a combined system is in the trivial phase. That is, there is a path of gapped lattice systems  $(\SA\otimes\SA', H(s),\Psi(s))$ such that $H(s)$ is differentiable, $H(0)=H\otimes 1+1\otimes H'$, $\psi(0)=\psi\otimes\psi'$, and the system $(\SA\otimes\SA',H(1),\Psi(1))$ is trivial. Note that by the results of \cite{bachmann2012automorphic} the state $\Psi(0)$ of the combined system and the factorized state $\Psi(1)$ are automorphically equivalent. In the fermionic case the definition is the same, except one uses the graded tensor product.

If the system $(\SA,H,\psi)$ has a $U(1)$ symmetry, one may define a more restricted notion of an invertible phase by requiring the inverse system also to have a $U(1)$ symmetry and the path of systems interpolating between the composite system and the trivial system to preserve the diagonal $U(1)$ symmetry. We do not use this more restricted notion of an invertible phase in this paper.

\section{Charge pumping}\label{sec:chargepumping}

\subsection{General considerations} \label{ssec:chargepumping1}

Suppose we have an automorphism $\alpha=\alpha_{F}(1)$ locally generated by some $U(1)$-invariant self-adjoint 0-chain $F(s) = \sum_j F_j(s)$. There is a current $T^{F}_{jk}$ that measures the charge transported by this automorphism. In the following we omit the subscript $F$ and use the notation $T_{jk}$ for this current. Our goal in this section is to show that for certain subsets $A,B\subset\RR^d$ the quantity $\langle T_{AB}\rangle$ is approximately quantized if the automorphism $\alpha$ preserves the ground state.

By a region we will mean an embedded $d$-dimensional submanifold of $\RR^d$ whose boundary has a finite number of connected components and does not intersect $\Lambda$. By a slight abuse of notation, we will identify a region and its intersection with the lattice $\Lambda$. Further, we will consider sequences of regions and observables labeled by some parameter $L$ taking values in positive integers and study their behavior in the limit $L\ra\infty$. A sequence of regions $\Gamma_L$ will be called large if the distance between any two connected components of $\partial\Gamma_L$ is $O(L)$. For simplicity, we will shorten "a large sequence of regions" to "a large region", keeping in mind that all regions depend on a parameter $L$. Similarly, a sequence of almost local observables labeled by $L$ will be identified with an $L$-dependent almost local observable.

All $L$-dependent almost local observables considered here will have the form $\CA_L=F_{\Gamma_L}$ or $\CB_L=J_{A_L B_L}$, or some function of these. Here $F$ is an $L$-independent 0-chain, $J$ is an $L$-independent current, $\Gamma_L$ is a large compact region, and $A_L, B_L$ are large compact regions all of whose boundary components are either disjoint or coinciding. The norm of such observables can be bounded from above by functions of $L$ which are $O(L^d)$ or $O(L^{d-1})$ for large $L$. Thus $\|[\CA_L,\CC]\|=\OL$ for any $L$-independent $\CC\in\SA_j$ with $\dist(j,\Gamma_L)=O(L)$ and 
$\|[\CB_L,\CC]\|=\OL$ for any $L$-independent $\CC\in\SA_j$ with $\dist(j,A_L\bigcap B_L)=O(L)$. Also, for any two $L$-dependent regions $\Gamma_L$ and $\Gamma'_L$ such that the distance between them is $O(L)$ and any two $L$-independent 0-chains $F,F'$ the corresponding observables $F_{\Gamma_L}$ and $F'_{\Gamma'_L}$ commute with $\OL$ accuracy. Similarly, if the distance between $\Gamma_L$ and  $A_L\cap B_L$ is $O(L)$, the observables $\CA_L$ and $\CB_L$ commute with $\OL$ accuracy. We will refer to these and similar properties of $L$-dependent observables $\CA_L$ and $\CB_L$ as "asymptotic localization", where the word "asymptotic" refers to the fact that we study the behavior of the commutators as  $L\ra\infty$.

For a boundary component $\CS$ of a large region $\Gamma$ we define a thickening $\CT\CS$ of $\CS$ as a large region containing all points within a distance of order $L$ from $\CS$ and such that all points of $\CT\CS$ are at a distance of order $L$ from other boundary components.

\begin{figure}
\centering
\begin{tikzpicture}[scale=.5]

\filldraw[color=blue!50, fill=blue!5, dashed, very thick](0,0) circle (4.3);
\filldraw[color=gray, fill=blue!5, very thick](0,0) circle (4);
\filldraw[color=blue!50, fill=white, dashed, very thick](0,0) circle (3.7);
\filldraw[color=blue!50, fill=blue!5, dashed, very thick](1.2,1.2) circle (1.3);
\filldraw[color=gray, fill=blue!5, very thick](1.2,1.2) circle (1);
\filldraw[color=blue!50, fill=white, dashed, very thick](1.2,1.2) circle (0.7);
\filldraw[color=blue!50, fill=blue!5, dashed, very thick](-1.2,-1.2) circle (1.3);
\filldraw[color=gray, fill=blue!5, very thick](-1.2,-1.2) circle (1);
\filldraw[color=blue!50, fill=white, dashed, very thick](-1.2,-1.2) circle (0.7);

\node  at (-1.2,1.2) {$\Gamma$};

\end{tikzpicture}

\caption{A large compact region $\Gamma$ with the boundary components shown as solid gray lines, and a thickening $\CT \p \Gamma$ with the boundaries shown as dashed blue lines.
}
\label{fig:thickening}
\end{figure}
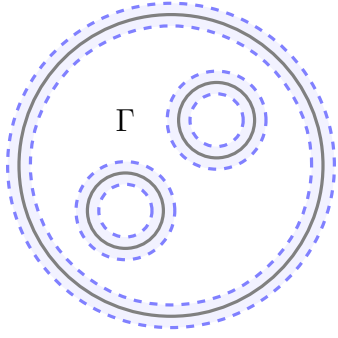

Let $\CS$ be an oriented codimension-one compact surface which is a connected component of the boundary of a large region $\Gamma$. We choose a thickening $\CT\CS$ of $\CS$ and denote $T_{\CT\CS} = T_{(\Gamma\cap\CT\CS) (\bar{\Gamma}\cap\CT\CS)}$. Note that $T_{\CT\CS}=T_{\Gamma\bar\Gamma}+O(L^{-\infty})$.

We claim that the observable $(T_{\CT\CS} + Q_{\Gamma \cap \mathcal{T} \CS})$ has an integer spectrum, up to corrections of order $O(L^{-\infty})$. More precisely, $\exp\left(2\pi i (T_{\CT\CS} + Q_{\Gamma \cap \mathcal{T} \CS})\right)=1+O(L^{-\infty}).$ Indeed, using Lemma \ref{lma:FGA} we can write 
\begin{multline}
T_{\CT \CS} = (\alpha_{F}(Q_{\Gamma})-Q_{\Gamma}) + \OL  = (\alpha_{F_{\CT \p \Gamma}}(Q_{\Gamma})-Q_{\Gamma}) + \OL  = \\ =
(\alpha_{F_{\CT \p \Gamma}}(Q_{\tilde{\Gamma}})-Q_{\tilde{\Gamma}}) + \OL.
\end{multline}
where $\tilde{\Gamma}$ is a compact region such that $\p \tilde{\Gamma}$ contains $\p \Gamma$, and there is a thickening, such that $\CT \p \tilde{\Gamma} \backslash \CT \p \Gamma$ does not intersect $\CT \p \Gamma$. Therefore
\beq
(T_{\CT\CS} + Q_{\Gamma \cap \CT\CS}) + Q_{\tilde{\Gamma} \cap \overline{\CT\CS}} = \alpha_{F_{\CT \p \Gamma}}(Q_{\tilde{\Gamma}}) + \OL.
\eeq
$Q_{\tilde{\Gamma} \cap \overline{\mathcal{T} \CS}}$ commutes with $(T_{\CT\CS} + Q_{\Gamma \cap \mathcal{T} \CS})$ up to terms of order $O(L^{-\infty})$, and both $\alpha(Q_{\tilde{\Gamma}})$ and $Q_{\tilde{\Gamma} \cap \overline{\CT\CS}}$ have integer spectra. This implies the desired result.

Let $\Gamma$ be a large compact region whose boundary $\partial\Gamma$ has a decomposition
$\p \Gamma = \bigcup_{a} \CS_a$. We can choose thickenings $\CT\CS_a$ of all $\CS_a$ such that all $\CT\CS_a$ are far from each other (separated by distances of order $L$). Then we have
\beq\label{QGammaidentity}
\alpha (Q_{\Gamma}) - Q_{\Gamma} = \sum_{a} T_{\CT\CS_a} + \OL.
\eeq
Let us show that
\beq\label{VUidentity}
\alpha( V_{\Gamma}(\phi)) V_{\Gamma}(-\phi) = \prod_{a} Z_{\mathcal{T} \CS_a}(\phi) + \OL,
\eeq
where $Z_{\mathcal{T} \CS_a}(\phi)$ is a unitary almost local observable asymptotically localized on $\CT \CS_a$. First we define $K_{\CT\CS_a}=K_{(\Gamma\cap\CT\CS_a)(\overline{\Gamma}\cap\CT\CS_a)}$. By our assumption on the the thickenings, $K_{\Gamma\overline\Gamma}=\sum_a K_{\CT\CS_a}+O(L^{-\infty})$. Then we get:
\begin{multline}\label{Zdef}
\alpha\left(V_{\Gamma}(\phi)\right) V_{\Gamma}(-\phi) = e^{i \phi (\alpha(Q_{\Gamma}) - \sum_{a}\alpha( K_{\CT\CS_a})} e^{-i \phi (Q_{\Gamma} - \sum_a K_{\CT\CS_a})} +O(L^{-\infty}) = 
\\ = e^{i \phi (Q_{\Gamma} + \sum_{a} T_{\CT\CS_a} - \sum_{a}\alpha( K_{\CT\CS_a}))} e^{-i \phi (Q_{\Gamma} - \sum_a K_{\CT\CS_a})}+O(L^{-\infty}) = \\ = e^{i \phi (\sum_{a} Q_{\Gamma \cap \mathcal{T}\CS_a} + \sum_{a} T_{\CT\CS_a} - \sum_a \alpha( K_{\CT\CS_a})} e^{-i \phi (\sum_{a} Q_{\Gamma \cap \mathcal{T}\CS_a} -\sum_a  K_{\CT\CS_a})}+O(L^{-\infty})= 
\\ = \prod_{a}  e^{i \phi (Q_{\Gamma \cap\mathcal{T}\CS_a} + T_{\CT\CS_a} - \alpha( K_{\CT\CS_a}))} e^{-i \phi (Q_{\Gamma \cap\mathcal{T}\CS_a} - K_{\CT\CS_a})}+O(L^{-\infty}).
\end{multline}
Each factor in the above product is an almost local unitary observable asymptotically localized on some $\CT\CS_a$.

Next we use the following lemma which is a minor variation of Lemma 4.2 from \cite{bachmann2019many}.
\begin{lemma}
\label{lma1v2}
Let $\CU$ be a unitary observable that depends on a parameter $L$, and let $\psi$ be a pure state. Then $|\lal \CU \ral_{\psi}| = 1 - \OL$ is equivalent to  $\lal \CO  \CU \ral_{\psi} - \lal \CO \ral_{\psi} \lal \CU \ral_{\psi} = \OL$ as well as to $\lal  \CU \CO \ral_{\psi} - \lal \CU \ral_{\psi} \lal \CO \ral_{\psi} = \OL$ for all $\CO \in \SA$ with 
$||\CO ||=1$.
\end{lemma}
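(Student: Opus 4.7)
The plan is to pass to the GNS representation $(\pi_\psi, \CH_\psi, |\Omega\rangle)$ of $\psi$, in which $\lal \CA \ral_\psi = \langle\Omega|\pi_\psi(\CA)|\Omega\rangle$ for a unit cyclic vector $|\Omega\rangle$, and then reduce the three-way equivalence to a single Cauchy--Schwarz estimate. The key identity, which uses only unitarity of $\CU$ and $\|\Omega\|=1$, is
\[
\bigl\| \bigl(\pi_\psi(\CU) - \lal \CU \ral_\psi\bigr)|\Omega\rangle \bigr\|^2 \;=\; 1 - |\lal \CU \ral_\psi|^2,
\]
together with the analogous adjoint identity $\bigl\| \bigl(\pi_\psi(\CU)^* - \overline{\lal \CU \ral_\psi}\bigr)|\Omega\rangle \bigr\|^2 = 1 - |\lal \CU \ral_\psi|^2$, both of which follow by expanding the norm and using $\pi_\psi(\CU)^*\pi_\psi(\CU) = \pi_\psi(\CU)\pi_\psi(\CU)^* = 1$.

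For the forward direction, assume $|\lal \CU \ral_\psi| = 1 - \OL$. Then $1 - |\lal \CU \ral_\psi|^2 \leq 2\bigl(1 - |\lal \CU \ral_\psi|\bigr) = \OL$, and superpolynomial decay is preserved under the square root. Rewriting
\[
\lal \CO\CU \ral_\psi - \lal \CO \ral_\psi \lal \CU \ral_\psi \;=\; \langle\Omega|\,\pi_\psi(\CO)\bigl(\pi_\psi(\CU) - \lal \CU \ral_\psi\bigr)|\Omega\rangle,
\]
Cauchy--Schwarz bounds the modulus by $\|\pi_\psi(\CO)^*|\Omega\rangle\|\cdot\sqrt{1 - |\lal \CU \ral_\psi|^2} \leq \|\CO\|\sqrt{1 - |\lal \CU \ral_\psi|^2} = \OL$. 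The right-cluster estimate $\lal \CU\CO \ral_\psi - \lal \CU \ral_\psi \lal \CO \ral_\psi = \OL$ is handled symmetrically, by pairing the adjoint vector $\bigl(\pi_\psi(\CU)^* - \overline{\lal \CU \ral_\psi}\bigr)|\Omega\rangle$ against $\pi_\psi(\CO)|\Omega\rangle$ and applying the same bound.

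For the converse, substitute $\CO = \CU^*$ (which has unit norm) into either clustering identity. Since $\lal \CU^*\CU \ral_\psi = 1$, this yields $1 - |\lal \CU \ral_\psi|^2 = \OL$. Because $1 + |\lal \CU \ral_\psi| \geq 1$, we conclude $1 - |\lal \CU \ral_\psi| = \OL$, completing the equivalence.

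There is no substantive obstacle; the only housekeeping is to observe that the class $\OL$ is closed under the square root, which is immediate from its definition (if $f(L) \leq C_n L^{-n}$ for every $n$, then $\sqrt{f(L)} \leq \sqrt{C_{2n}}\, L^{-n}$ for every $n$). Incidentally, purity of $\psi$ plays no role in this lemma per se, since the GNS construction and the identities above work for any state; the hypothesis is presumably inherited as a standing assumption from the surrounding context.
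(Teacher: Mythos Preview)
Your proof is correct and follows essentially the same route as the paper: pass to the GNS representation and control the vector $(\pi_\psi(\CU)-\lal\CU\ral_\psi)|\Omega\rangle$, which the paper writes equivalently as $(1-P)U|0\rangle$ with $P=|0\rangle\langle 0|$. Your converse via the test observable $\CO=\CU^*$ makes explicit what the paper's ``if and only if'' leaves to the reader, and your side remark that purity is not actually used is correct.
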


\begin{proof} 
Let $U$ be an operator representing $\CU$ in the GNS representation for the state $\psi$, and let $P$ be the corresponding vacuum vector projector $P=|0\ral \lal 0 |$.
Then $|\lal \CU \ral_{\psi}| = 1 - \OL$ is equivalent to $||(1-P)U|0\ral|| = \OL$. 
The latter is true if and only if $\lal \CO  \CU \ral_{\psi} - \lal \CO \ral_{\psi} \lal \CU \ral_{\psi} = \OL$ for all $\CO\in\SA$ with $||\CO||=1$. 
Taking complex conjugate, we also obtain equivalence with $\lal  \CU \CO \ral_{\psi} - \lal \CU \ral_{\psi} \lal \CO \ral_{\psi} = \OL$ for all $\CO\in\SA$ with $||\CO||=1$.
\end{proof}
Since by assumption $\alpha$ preserves the ground state and ${\tilde Q}_\Gamma$ does not excite it, we have 
$\lal\alpha(V_{\Gamma}(\phi))V_\Gamma(-\phi) \ral =\lal \alpha(V_{\Gamma}(\phi)) \ral=1$. Then (\ref{Zdef}) and the exponential clustering property for the ground state $\psi$ imply $|\lal Z_{\CT \CS_a} \ral| = 1 - \OL$. Therefore by the above lemma $\lal Z_{\CT \CS_a} \CO\ral=\lal Z_{\CT \CS_a}\ral \lal \CO\ral+\OL$ uniformly in $\CO\in\SA$.

Using this result we obtain a differential equation for $\lal Z_{\CT \CS_a}\ral$:
\begin{multline}
\l -i \frac{d}{d \phi} \r \langle Z_{\mathcal{T} \CS_a}(\phi) \rangle = \\ =
\langle Z_{\mathcal{T} \CS_a}(\phi) e^{i \phi (Q_{\Gamma \cap \mathcal{T} \CS_a}-K_{\CT\CS_a})} \left(T_{\CT\CS_a} + K_{\CT\CS_a} - \alpha(K_{\CT\CS_a}) \right) e^{- i \phi (Q_{\Gamma \cap \mathcal{T} \CS_a}-K_{\CT\CS_a})} \rangle  = \\ = \langle Z_{\mathcal{T} \CS_a}(\phi) \rangle \langle e^{i \phi (Q_{\Gamma \cap \mathcal{T} \CS_a}-K_{\CT\CS_a})} \left(T_{\CT\CS_a} + K_{\CT\CS_a} - \alpha(K_{\CT\CS_a}) \right) e^{- i \phi (Q_{\Gamma \cap \mathcal{T} \CS_a}-K_{\CT\CS_a})} \rangle +O(L^{-\infty})= \\ = \langle Z_{\mathcal{T} \CS_a}(\phi) \rangle \langle e^{i \phi (Q_{\Gamma}-K_{\Gamma \bar{\Gamma}})} \left(T_{\CT\CS_a} + K_{\CT\CS_a} - \alpha(K_{\CT\CS_a}) \right) e^{i \phi (Q_{\Gamma}-K_{\Gamma \bar{\Gamma}})} \rangle +O(L^{-\infty}) = \\ = \langle Z_{\mathcal{T} \CS_a}(\phi) \rangle \langle T_{\CT\CS_a} + K_{\CT\CS_a} - \alpha(K_{\CT\CS_a}) \rangle +O(L^{-\infty}) = \langle Z_{\mathcal{T} \CS_a}(\phi) \rangle \langle T_{\CT\CS_a} \rangle  +O(L^{-\infty}).
\end{multline}
In the fourth line we have used the fact that $(Q_{\Gamma \cap \CT \CS_a} - K_{\CT \CS_a})$ for different $a$ and $Q_{\Gamma \cap \overline{\CT \p \Gamma}}$ commute with each other up to $\OL$ terms. Thus 
\beq
\langle Z_{\mathcal{T} \CS_a} (\phi) \rangle = e^{i \phi \langle T_{\CT\CS_a} \rangle}+O(L^{-\infty}).
\eeq 
Next, consider self-adjoint observables 
$\CO_b=T_{\CT\CS_b} + Q_{\Gamma \cap \mathcal{T} \CS_b}.$ Each of them is asymptotically localized on $\CT\CS_b$ and satisfies $\exp(2\pi i \CO_b)=1+\OL$. Therefore for any boundary component $\CS_a$ one has
\beq\label{trivialid}
e^{2\pi i (\CO_a-\alpha(K_{\CT\CS_a}))}=e^{2\pi i \l \sum_b\CO_b-\alpha(K_{\CT\CS_a})\r}+\OL .
\eeq
On the other hand, eq. (\ref{QGammaidentity}) can be written as
$\sum_b\CO_b+Q_{\Gamma \cap \overline{\CT \p \Gamma}}=\alpha(Q_\Gamma).$ Taking into account that $\exp\l 2\pi i Q_{\Gamma \cap \overline{\CT \p \Gamma}}\r=1$ we get
\beq
e^{2\pi i (\CO_a-\alpha(K_{\CT\CS_a}))}=e^{2\pi i (\alpha(Q_\Gamma)-\alpha(K_{\CT\CS_a}))}+\OL.
\eeq
Therefore
\begin{multline}
\lal Z_{\mathcal{T} \CS_a}(2 \pi) \ral = \langle e^{2 \pi i (Q_{\Gamma \cap\mathcal{T}\CS_a} + T_{\CT\CS_a} - \alpha(K_{\CT\CS_a}))} e^{-2 \pi i (Q_{\Gamma \cap\mathcal{T}\CS_a} - K_{\CT\CS_a})} \rangle  = \\ =  
\langle e^{2 \pi i (\alpha(Q_{\Gamma}) - \alpha(K_{\CT\CS_a}) )} e^{-2 \pi i (Q_{\Gamma \cap\mathcal{T}\CS_a} - K_{\CT\CS_a})} \rangle  +O(L^{-\infty})= \\ = 
\langle \alpha \l e^{2 \pi i (Q_{\Gamma} - K_{\CT\CS_a})} \r e^{-2 \pi i (Q_{\Gamma \cap\mathcal{T}\CS_a} - K_{\CT\CS_a})} \rangle+O(L^{-\infty}) = \\ =
\langle \alpha \l e^{2 \pi i (Q_{\Gamma \cap\mathcal{T}\CS_a} - K_{\CT\CS_a})} \r e^{-2 \pi i (Q_{\Gamma \cap\mathcal{T}\CS_a} - K_{\CT\CS_a})} \rangle+O(L^{-\infty}).
\end{multline}
Now we note that 
\beq
V_{\Gamma}(-2 \pi) = \prod_{a} e^{-2 \pi i (Q_{\Gamma \cap\CT\CS_a} - K_{\CT\CS_a})} + \OL .
\eeq
Since $\langle V_\Gamma(-2\pi)\rangle=1$, the exponential clustering property implies
\beq
|\lal e^{2 \pi i (Q_{\Gamma \cap\CT\CS_a} - K_{\CT\CS_a})} \ral| = 1 - \OL.
\eeq
The above lemma then implies that for any boundary component $\CS_a$ we have  $\langle Z_{\mathcal{T} \CS_a}(2 \pi) \rangle = 1-O(L^{-\infty})$. Therefore $\langle T_{\CT\CS_a} \rangle \in \ZZ$ up to corrections of order $O(L^{-\infty}).$

\begin{remark}\label{rmk:pumpadditive}
For $\alpha = \alpha_1 \circ \alpha_2$, with $\alpha_{1,2}$ generated by $F^{(1,2)}$ and satisfying the properties above, we have
\begin{multline}
\lal T^F_{\CT \CS_a} \ral  = \lal T^{F^{(1)}}_{\CT \CS_a} + \alpha_1(T^{F^{(2)}}_{\CT \CS_a}) \ral + \OL = \\ =
\lal T^{F^{(1)}}_{\CT \CS_a} \ral + \lal T^{F^{(2)}}_{\CT \CS_a} \ral + \OL
\end{multline}
\end{remark}

\begin{remark}\label{rmk:alphapreserves}
The fact that $\alpha$ preserves the ground state was used only to show that $\lal \alpha(V_{\Gamma}(\phi)) \ral = \lal V_{\Gamma}(\phi) \ral$ and 
\beq
\lal \alpha \l e^{2 \pi i (Q_{\Gamma \cap\mathcal{T}\CS_a} - K_{\CT\CS_a})} \r \ral = \lal e^{2 \pi i (Q_{\Gamma \cap\mathcal{T}\CS_a} - K_{\CT\CS_a})} \ral.
\eeq
If these identities are true only up to $\OL$ terms, the approximate integrality of $\lal T_{\CT \CS_a} \ral$ still holds. This fact will be useful in Appendix \ref{app:AvronSeilerSimon}.
\end{remark}

\begin{remark}
This result might seem very general and supply many numerical invariants describing charge transport. In fact for most choices of $\Gamma$ and $\CS$  one finds that $\langle T_{\CT\CS_a} \rangle =\OL$ thanks to the the identity (\ref{2nd property of T}) and very simple topology of $\RR^d$. One exception is the case of one-dimensional systems discussed in the next subsection. Another situation where $\langle T_{\CT\CS_a} \rangle$ has a non-zero limit as $L\ra\infty$ is described in Appendix \ref{app:AvronSeilerSimon}. 

\end{remark}

\subsection{Charge pumping in one dimension}

Let $(\SA, H(s), \psi_s, Q)$ be a differentiable family of one-dimensional gapped lattice systems with a $U(1)$ symmetry for $s \in [0,1]$, such that $H(0)=H(1)$. If the system $(\SA, H(0), \psi_0, Q)$ satisfies the conditions of Ref. \cite{bachmann2012automorphic}, then there is an automorphic equivalence between the states $\psi_s$. In particular, this is the case for systems in the trivial phase. It is expected that all one-dimensional gapped lattice systems with a $U(1)$ symmetry are in the trivial phase \cite{chen2011classification}. Let $\alpha$ be the corresponding quasi-adiabatic automorphism generated by $G(s) = \sum_{j} G_{j}(s)$. By construction it preserves the ground state of $H(0)=H(1)$. Let $\Gamma$ be an interval of length $L$ with the boundary points $(\p \Gamma)_-$ and $(\p \Gamma)_+$. Since $\alpha$ preserves the ground state, we must have $\langle T_{\Gamma\overline{\Gamma}}\rangle=0.$ Choosing thickenings $\CT (\p \Gamma)_-$ and $\CT (\p \Gamma)_+$ which are separated by a distance of order $L$, we see that $\langle T_{\CT (\p \Gamma)_-}\rangle=-\langle T_{\CT (\p \Gamma)_+}\rangle +O(L^{-\infty})$. Taking the limit $L\ra\infty$, we conclude that the quantity $\langle T_{A\overline{A}}\rangle$, where $A=[p,+\infty)$, is independent of the point $p$ and thus is canonically associated to the loop $H(s)$. Furthermore, as shown in the previous section, $\langle T_{A\overline{A}}\rangle$ is an integer. If we vary the loop $(H(s),\psi_s)$ continuously while preserving all the properties above, $\langle T_{A\overline{A}}\rangle$ changes continuously and thus remains constant. Therefore it is a homotopy invariant of the loop $(H(s),\psi_s)$. Its meaning is the charge transported across a point $p$ in the course of one period of quasi-adiabatic evolution.

\section{Quantization of Hall conductance}

\subsection{Hall conductance}

In the remainder of the paper we will study $U(1)$-invariant gapped lattice systems in two dimensions. Given such a system, one can form a current $2 \pi i [\tilde{Q}_j, \tilde{Q}_k]$ which is exact:
\beq\label{QtQt}
2 \pi i [\tilde{Q}_j, \tilde{Q}_k] = - (\p M)_{jk}
\eeq
where
\beq\label{Mrep}
M_{jkl} := \pi i ( [Q_j + \tilde{Q}_j,K_{kl}] + [Q_k + \tilde{Q}_k,K_{lj}] + [Q_l + \tilde{Q}_l ,K_{jk}]).
\eeq
Importantly, one can define the 2-current $M$ for any $U(1)$-invariant state $\psi$ on $\SA$ which has no local spontaneous symmetry breaking. No Hamiltonian needs to be specified.

Consider a point $p$ not in $\Lambda$ and three paths beginning at $p$ and going off to infinity while avoiding $\Lambda$. The paths are assumed to lie in non-overlapping cones with vertex at $p$. These paths divide $\Lambda$ into three noncompact regions which we denote $A,B,C$, see Fig. \ref{fig:magnetization}. They intersect only over the paths, which we denote $AB,BC,CA$. Fixing the orientation of $\RR^2$ also fixes the cyclic order of $A,B,C$.

\begin{figure}
\centering
\begin{tikzpicture}[scale=.3]

\draw[gray, very thick] (0,0) -- (3.4641,2);
\draw[gray, very thick] (0,0) -- (-3.4641,2);
\draw[gray, very thick] (0,0) -- (0,-4);

\node at  (0.6,-0.6) {$p$};
\node  at (0,3) {$C$}; 
\node  at (-1.7321*3/2,-3/2) {$A$};
\node  at (1.7321*3/2,-3/2) {$B$};

\end{tikzpicture}
\caption{Definition of $M_{ABC}$.
}
\label{fig:magnetization}
\end{figure}
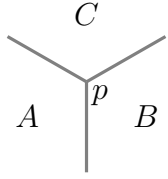

Consider an observable 
\beq
M_{ABC}=\sum_{i\in A,j\in B, k\in C} M_{ijk}.
\eeq 
The infinite sum defining it is norm-convergent, so $M_{ABC}$ is well-defined. We claim that $M_{ABC}$ does not excite the ground state. It is sufficient to show that $\langle M_{ABC} \CO\rangle=0$ for any local observable $\CO$ with a compact localization set and $\langle \CO \rangle=0$. Let $\Gamma_r$ be a disk $B_r(p)$, with the boundary deformed slightly to avoid $\Lambda$. We denote $A'=A\cap\Gamma_r,$ $B'=B\cap\Gamma_r,$ $C'=C\cap\Gamma_r$ and $D=\overline{A'+B'+C'}$, see Fig. 2. Since
\beq
2 \pi i [\tilde{Q}_{A'}, \tilde{Q}_{B'}] = - M_{A'B'C'} - M_{A'B'D},
\eeq
we get 
\beq
\langle M_{A'B'C'} \CO \rangle = - \langle  M_{A'B'D} \CO \rangle = - \langle M_{A'B'D} \rangle \langle \CO \rangle + \Or = \Or,
\eeq
where we used the exponential clustering property. Taking the limit $r\ra\infty$ and noting that $\lim_{r\ra\infty} M_{A'B'C'}=M_{ABC}$, we get that $\langle M_{ABC} \CO\rangle=0$ for any local observable with a compact localization set and $\langle \CO\rangle=0.$ This implies that the same is true for any quasi-local observable with a zero expectation value.

Let 
\beq
h_{jkl} := 2 \lal M_{jkl} \ral
\eeq 
be a 2-current valued in $\RR$. Since $\tilde{Q}_j$ does not excite the ground state, this 2-current is closed, $\p h = 0$. For any three regions $A,B,C$ as above we define
\beq
\sigma := h_{ABC} .
\eeq
The cyclic order of $A,B,C$ is determined by the orientation of $\RR^2$; changing it negates $\sigma$. Since $\p h =0$, the quantity $\sigma$ does not actually depend on the choice of the regions $A,B,C$ or the point $p$. Indeed, if one deforms it by adding some region $D$ to $A$, such that $\p D \cap \p C$ is finite and $[\tilde{Q}_D,\tilde{Q}_C]$ is well-defined, and subtracting it from $B$ (see Fig. \ref{fig:sigma}), one gets
\beq
h_{(A+D)(B-D)C} =  h_{ABC} + h_{DBC} - h_{ADC} = h_{ABC} + (\p h)_{D C} = h_{ABC}.
\eeq
Note that the 2-current $M_{jkl}$ is not uniquely defined by eq. (\ref{QtQt}). One can add any exact 2-current to $M_{jkl}$ or modify $K_{jk}$ in eq. (\ref{Mrep}) to get a new 2-current that satisfies eq. (\ref{QtQt}). However, using the Remark \ref{rmk:Poincare} it is easy to check that $h_{jkl}$ and $\sigma$ are unaffected by these ambiguities, provided the statement of the remark applies to $\psi$. In particular, $\sigma$ can be computed for the ground state of any $U(1)$-invariant gapped lattice 2d system even if the Hamiltonian is not known. 

Let us show that $\sigma$ does not change if one applies to $\psi$ an automorphism $\alpha=\alpha_F(1)$ locally generated by a $U(1)$-invariant 0-chain $F(s)$,  $s\in[0,1]$. As explained in Section 2, the states $\psi(s)=\alpha_F(s)(\psi)$ do not have local spontaneous symmetry breaking. Let $M(s)$ be the 2-chain $M$  computed for the state $\psi(s)$ and $h(s)=\langle M(s)\rangle_{\psi(s)}$. Using (\ref{1st property of T}) and (\ref{eq:Knew}) we get
\begin{multline}
h_{jkl}(1)= \pi i \lal [(Q+\p T^F)_{j}, (K+T^{F})_{kl}] + \text{cyclic permutations of }\{j,k,l\} \ral_\psi
\end{multline}
This equation shows that $h_{ABC}(1)$ does not depend on the behavior of $F$ far from the the point $ABC$. Thus if we replace $F$ with $F_{\Gamma_r}$, where $\Gamma_r$ is a disc of radius $r$ centered at $ABC$, $\sigma$ will only change by an amount of order $\Or$. On the other hand, even after we replace $F$ with $F_{\Gamma_r}$, the new 2-chain $h^r(1)$ still satisfies $\partial h^r(1)=0$. Thus one can compute $\sigma$ using any other point of the plane and three regions meeting at this point. In particular, one can take the point to be far from the disk $\Gamma_r$, so that $ h^r_{ABC}(1)=h_{ABC}(0)+\OL$, $L$ being the distance from the chosen point to $\Gamma_r$. Taking the limit $L\ra\infty$ and $r \to \infty$ we conclude that $h_{ABC}(1)=h_{ABC}(0)$. Thus $\sigma$ is invariant under $U(1)$-invariant locally generated automorphisms.

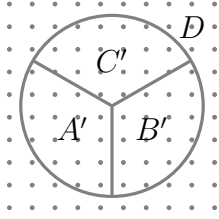
\begin{figure}
\centering
\begin{tikzpicture}[scale=.3]
\draw[gray, very thick] (0,0) -- (3.4641,2);
\draw[gray, very thick] (0,0) -- (-3.4641,2);
\draw[gray, very thick] (0,0) -- (0,-4);

\foreach \x in {-2.25,-1.75,...,2.25}{                           
    \foreach \y in {-2.25,-1.75,...,2.25}{                       
    \node[draw,gray,circle,inner sep=.5pt,fill] at (2*\x,2*\y) {}; 
    }
}

\draw [gray, very thick] (4,0) arc [radius=4, start angle=0, end angle= 360];
\node  at (0,2) {$C'$}; 
\node  at (-1.7321,-1) {$A'$};
\node  at (1.7321,-1) {$B'$};
\node  at (3.5,3.5) {$D$};

\end{tikzpicture}
\caption{Verifying that the magnetization operator does not excite the ground state.
}
\label{fig:Hall}
\end{figure}

\begin{figure}
\centering
\begin{tikzpicture}[scale=.5]
\draw[gray, very thick] (0,0) -- (3.4641,2);
\draw[gray, very thick] (0,0) -- (-3.4641,2);
\draw[gray, very thick] (0,0) -- (0,-4);

\draw [red, very thick, dashed] (0.5*3.4641, 0.5*2) -- (0.5*3.4641, -4);
\node  at (0,3) {$C$}; 
\node  at (0.86602,-1.5) {$D$};
\node  at (-2.5981,-1.5) {$A$};
\node  at (2.5981,-1.5) {$B$};
\end{tikzpicture}
\caption{Verifying that $\sigma$ does not depend on the choice of the regions $A,B,C$.}
\label{fig:sigma}
\end{figure}
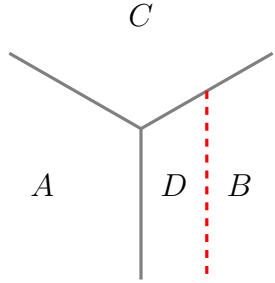

Let us show that $\sigma/2\pi$ is nothing but the zero-temperature Hall conductance. Let $X$ and $Y$ be the right and the upper half-planes, respectively. The Hall conductance is given by the Kubo formula \cite{niu1984quantised}
\beq
\sigma_{Hall}=\sum_{j\in X} \sum_{k\in \bar{X}} \sum_{l \in Y} \sum_{m \in \bar{Y}} i\langle0| J_{jk} (1-P)\frac{1}{H^2}(1-P)J_{lm}|0\rangle -(X\leftrightarrow Y)
\eeq
where $|0\rangle$ is a cyclic vector for the GNS representation, observables are identified with their images in this representation, and $P=|0\rangle \langle 0|$. Although the regions $X,\bar X,Y,\bar Y$ are non-compact, the quadruple sum is absolutely convergent and thus well-defined. To see this, we note that 
for any two observables $\CA$ and $\CB$ we have
\beq
\langle 0| \CA (1-P)\frac{1}{H^2}(1-P)\CB |0\rangle = \langle \SI_\Delta(\CA) \SI_\Delta(\CB) \rangle -  \langle \SI_\Delta(\CA) \rangle \langle \SI_\Delta(\CB) \rangle .
\eeq
Therefore we can rewrite the formula for the Hall conductance in terms of correlators of almost local observables $K_{jk} = \SI_\Delta(J_{jk})$:
\beq
\sigma_{Hall}=\sum_{j\in X} \sum_{k\in \bar{X}} \sum_{l \in Y} \sum_{m \in \bar{Y}} i\langle K_{jk} K_{lm}\rangle -(X\leftrightarrow Y).
\eeq
The sum is absolutely convergent thanks to the exponential decay of correlators in the ground state. In fact, one can re-write this expression as an expectation value of a single almost-local observable:
\beq\label{sigmaHallK}
\sigma_{Hall} = i \langle [K_{X\bar{X}},K_{Y \bar{Y}}] \rangle .
\eeq
While $K_{X\bar X}$ and $K_{Y\bar Y}$ are 0-chains localized on non-compact sets, their commutator is an almost local observable with a well-defined expectation value.

The expression (\ref{sigmaHallK})  does not change if one modifies $X$ by adding or subtracting any compact region $\Gamma$. Indeed, for any compact $\Gamma$ and any $r>0$ one can always find some finite $\Gamma'$ such that the distance between $\Gamma$ and $(Y-\Gamma')$ is of order $r$. From the definition of the current  $K$, we have $K_{(X+\Gamma) (\overline{X+\Gamma})} - K_{X \bar{X}} = Q_{\Gamma} - \tilde{Q}_{\Gamma}$. Therefore the change of the Hall conductance is 
\begin{multline}
i \langle [Q_{\Gamma}-\tilde{Q}_{\Gamma}, K_{Y \bar{Y}}] \rangle = i \langle [Q_{\Gamma}, K_{Y \bar{Y}}] \rangle  = \\ =
i \langle [Q_{\Gamma}, K_{(Y-\Gamma') (\overline{Y-\Gamma'})} + Q_{\Gamma'} - \tilde{Q}_{\Gamma'} ] \rangle = \Or . 
\end{multline}
Here in the last step we used $[Q_\Gamma,Q_{\Gamma'}]=0$ for any two finite regions $\Gamma,\Gamma'$. Taking the limit $r\ra\infty$ we get the desired result.
In the same way one can show that $\sigma_{Hall}$ is not affected  when one modifies $Y$ by a finite region. Modifying $X$ and $Y$ by adding or subtracting infinite regions which lie within non-overlapping cones also does not change $\sigma_{Hall}$ since one can replace them by finite regions of size $r$ up to $\Or$ terms, and then take the limit $r \to \infty$. Therefore instead of $X$ and $Y$ being half-planes, one can take   $X=(C+D)$ and $Y=(A+D)$ with the regions $A,B,C,D$ as shown on Fig. \ref{fig:XY}. 

This configuration has a free parameter $L$ (the distance between two triple points, or equivalently between $B$ and $D$). 
Then we have
\begin{multline}
2 \pi \sigma_{Hall} =  2 \pi i \langle ([K_{CA},K_{AB}] + [K_{AB},K_{BC}] + [K_{BC},K_{CA}]) + \\ + ([K_{AC},K_{CD}] + [K_{CD},K_{DA}] + [K_{DA},K_{AC}]) \rangle +\OL .
\end{multline}
For any three regions $A$,$B$,$C$ as in Fig.~\ref{fig:magnetization} consider a disk $\Gamma_r$ of radius $r$ with the center at the triple point. Let $A'=A\cap \Gamma_r$, etc., as in Fig.~\ref{fig:Hall}. Then
\begin{multline}
\langle [K_{AB}+K_{AC},K_{BC}] \rangle = \langle [K_{A'B'}+K_{A'C'},K_{BC}] \rangle + \Or= \\ = \langle [Q_{A'}-\tilde{Q}_{A'}-K_{A'D},K_{BC}] \rangle + \Or  = \langle [Q_{A'},K_{BC}] \rangle + \Or = \\ = \langle [Q_{A'}+\tilde{Q}_{A'},K_{BC}] \rangle + \Or = \langle [Q_{A}+\tilde{Q}_{A},K_{BC}] \rangle + \Or,
\end{multline}
and since $r$ can be arbitrary, we have
\beq
\langle [K_{AB},K_{BC}] \rangle + \langle [K_{BC},K_{CA}] \rangle = \langle [Q_{A}+\tilde{Q}_{A},K_{BC}] \rangle .
\eeq
Therefore 
\beq
2 \pi i \lal ([K_{CA},K_{AB}] + [K_{AB},K_{BC}] + [K_{BC},K_{CA}]) \ral = \langle M_{ABC}\rangle + \OL,
\eeq
and
\beq
2 \pi \sigma_{Hall} = \langle (M_{ABC} + M_{ACD}) \rangle + \OL = \sigma + \OL
\eeq
Since $L$ can be arbitrary, we have $2 \pi \sigma_{Hall} = \sigma$.

\begin{figure}

\centering
\begin{tikzpicture}[scale=.5]
\draw[gray, very thick] (-1,-1) -- (-1,-3);
\draw[gray, very thick] (-1,-1) -- (-3,-1);
\draw[gray, very thick] (-1,-1) -- (1,1);
\draw[gray, very thick] (1,1) -- (1,3);
\draw[gray, very thick] (1,1) -- (3,1);
\node  at (2,2) {$D$}; 
\node  at (-1,1) {$A$}; 
\node  at (-2,-2) {$B$}; 
\node  at (1,-1) {$C$}; 
\end{tikzpicture}
\caption{A choice for the modified $X$ and $Y$.
}
\label{fig:XY}
\end{figure}
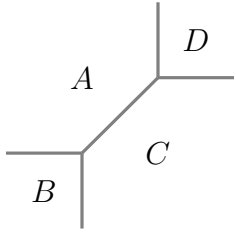

\subsection{Vortices}\label{vortices}

Let us consider three regions $A$,$B$,$C$ meeting at a point $p=ABC$ (see Fig. \ref{fig:vortexA}). As in the previous subsection, we assume that the paths $AB,BC,CA$ lie in non-overlapping open cones with vertex $ABC$.  Let $\upsilon_{ABC}$ be an automorphism of the algebra of observables $\alpha_F(1)$ generated by the 0-chain $F=2\pi (Q_{A}-K_{AB})$.  An equivalent way to define $\upsilon_{ABC}$ is to let $\Gamma_{r}$ be a disk $B_r(p)$, let $A'_r=A \cap \Gamma_{r}$, etc., as in Fig. \ref{fig:magnetization}, and for any observable $\CO$ define 
\begin{equation}
\upsilon_{ABC}(\CO)=\lim_{r\ra\infty} \exp\left(2\pi i (Q_{A'_r}-K_{A'_r B'_r})\right) \CO \exp\left(-2\pi i (Q_{A'_r}-K_{A'_r B'_r})\right).
\end{equation}

The automorphism $\upsilon_{ABC}$ is approximately localized on the path $AB$. This follows from Lemma \ref{lma:FGA} and the fact that $\alpha_{Q_A}(2\pi)$ is the identity automorphism. We will be using this observation many times in what follows.

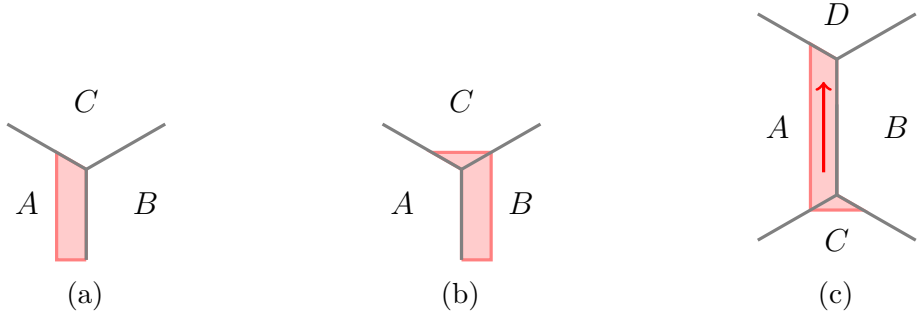
\begin{figure}
    \centering
    \begin{subfigure}[b]{0.3\textwidth}
        \centering
        \begin{tikzpicture}[scale=.3]

         \draw [color=red!50, fill=red!20, very thick] (0,-4) to (0,0) -- (-0.866025*1.5,0.5*1.5) -- (-0.866025*1.5,-4) -- (0,-4);
          \draw[gray, very thick] (0,0) -- (3.4641,2);
          \draw[gray, very thick] (0,0) -- (-3.4641,2);
           \draw[gray, very thick] (0,0) -- (0,-4);
           \node  at (0,3) {$C$}; 
           \node  at (-2.59815,-1.5) {$A$};
           \node  at (2.59815,-1.5) {$B$};
        \end{tikzpicture}
        \caption{}
        \label{fig:vortexA}
    \end{subfigure}
    \begin{subfigure}[b]{0.3\textwidth}
        \centering
        \begin{tikzpicture}[scale=.3]

         \draw [color=red!50, fill=red!20, very thick] (0,-4) to (0,0) -- (0-0.866025*1.5,0.5*1.5) -- (0+0.866025*1.5,0.5*1.5) -- (0+0.866025*1.5,-4) -- (0,-4);
        \draw[gray, very thick] (0,0) -- (0+3.4641,2);
        \draw[gray, very thick] (0,0) -- (0-3.4641,2);
        \draw[gray, very thick] (0,0) -- (0,-4);
        \node  at (0,3) {$C$}; 
        \node  at (0-2.59815,-1.5) {$A$};
        \node  at (0+2.59815,-1.5) {$B$};
        \end{tikzpicture}
        \caption{}
        \label{fig:vortexB}
    \end{subfigure}
    \begin{subfigure}[b]{0.3\textwidth}
        \centering
        \begin{tikzpicture}[scale=.3]

         \draw [color=red!50, fill=red!20, very thick] (0,-3) to (0+3.4641/3,-2/3-3) -- (0-3.4641/3,-2/3-3) -- (0-3.4641/3, 3+0.45*1.5) -- (0,3) -- (0,-3);
        \draw[red, very thick, ->] (0-3.4641/6,-2) -- (0-3.4641/6,2);
        \draw[gray, very thick] (0,-3) -- (0+3.4641,-2-3);
        \draw[gray, very thick] (0,-3) -- (0-3.4641,-2-3);
        \draw[gray, very thick] (0,-3) -- (0,6-3);
        \draw[gray, very thick] (0,-3) -- (0,4-3);
        \draw[gray, very thick] (0,3) -- (0+3.4641,2+3);
        \draw[gray, very thick] (0,3) -- (0-3.4641,2+3);
        \node  at (0,-5) {$C$}; 
        \node  at (0,5) {$D$}; 
        \node  at (0-2.59815,0) {$A$};
        \node  at (0+2.59815,0) {$B$};
        \end{tikzpicture}
        \caption{}
        \label{fig:vortexC}
    \end{subfigure}
    \caption{Creation, annihilation and transport of vortices. The shaded region covers sites, for which operators $Q_j$ are involved.}
    \label{fig:vortex}
\end{figure}

We will say that two states on $\SA$ lie in the same superselection sector if one can be obtained from the other by conjugation with a unitary element of $\SAal$. Note that this differs from both the Doplicher-Haag-Roberts definition and the Buchholz-Fredenhagen definition of superselection sectors as discussed for example in \cite{Haag}. This condition implies that the corresponding GNS representations are unitarily equivalent. Let $\psi_{ABC}$ be a state obtained from the ground state $\psi_0$ by the automorphism $\upsilon_{ABC}$. We claim that the superselection sector of $\psi_{ABC}$ does not depend on the precise location of the paths $AB$, $BC$ and $CA$. More precisely, suppose one chose three non-overlapping open cones with the vertex at $ABC$ which contain the paths $AB$, $BC$ and $CA$. Then changing the paths within these cones will change the state $\psi_{ABC}$ at most by conjugation with an element of $\SAal$. Indeed, we can change the path $BC$ by adding an observable $2 \pi K_{AE}$ to  $2\pi (Q_{A}-K_{AB})$, where $E$ is a (possibly non-compact) region inside the cone of the path $BC$. By Lemma \ref{lma:FA}, the superselection sector is not affected. Changing the path $CA$ corresponds to adding a 0-chain $2 \pi (Q_E-K_{E B})$, where $E$ is a (possibly non-compact) region inside the cone of the path  $CA$. Since $[Q_{E}-K_{EB}, Q_{A}-K_{AB}]$ is an almost local observable, by Lemma \ref{lma:FpX} the superselection sector is not affected. Finally, we can modify the path $AB$ by adding a 0-chain $\tilde{Q}_{E}$, where $E$ is a (possibly non-compact) region inside the cone of the path $AB$. Since $2 \pi (Q_{A}-K_{AB})=2 \pi (\tilde{Q}_A + K_{AC})$ and $[K_{AC},\alpha_{\tilde{Q}_A}(s) (\tilde{Q}_E)]$ is an almost local observable, by Lemma \ref{lma:Ftilde} the superselection sector in unaffected.

The independence of the superselection sector on the choice of the paths has an important consequence. Let $\CA$ be an almost local observable $a$-localized on some site $j$ which is at distance $r$ from the point $ABC$. Let us choose a cone $\Sigma$ with a vertex at $ABC$ and not containing $j$. Given any choice of regions $A,B,C$, we can re-arrange the paths and regions so that the new path $A'B'$ is inside $\Sigma$. Then
\begin{multline}
\lal \CA \ral_{\psi_{ABC}}=\lal \CU \, \CA \, \CU^{-1} \ral_{\psi_{A'B'C'}} + \Or = \\ = \lal \CA \ral_{\psi_{A'B'C'}} + \Or  = \lal \CA \ral_{\psi_0} + \Or.
\end{multline}
Here $\CU \in \SAal$, and we used the localization property of $\upsilon_{A'B'C'}$. 
This implies that almost local observables localized in any cone with vertex $ABC$ and far from $ABC$ cannot detect the presence of a vortex at $ABC$.\ (This statement, however, might not be true if we consider local  observables localized on a ring around $ABC$. In this case one cannot deform the paths such that there is no intersection between the ring and these paths.) In particular, this implies that the state $\psi_{ABC}$ has a finite energy. Such a state can be interpreted as a state with a vortex (unit of magnetic flux) at the point $ABC$.

\begin{remark}
The automorphism $\ups_{ABC}$ has ambiguities related to the choice of the current $K_{jk}$. However, the superselection sector of the state $\psi_{ABC}=\ups_{ABC}(\psi_0)$ is unambiguous. Indeed, modifying $K$ by some $\p N$ leads to an addition of an almost local observable $N_{ABC}$, and by Lemma \ref{lma:FA} does not change the superselection sector. Another way to modify $K$ is to add some 1-current $K'$ that does not excite the ground state. That corresponds to addition of $2 \pi K'_{AB}$ to $2 \pi (Q_A-K_{AB}) = 2 \pi (\tilde{Q}_A+K_{AC})$, and since $[K_{AC},\alpha_{\tilde{Q}_A}(s)(K'_{AB})]$ is an almost local observable, by Lemma \ref{lma:Ftilde} the superselection sector is unchanged. By Remark \ref{rmk:Poincare}, these are the only ambiguities in the definition of $K$.
\end{remark}
Similarly, one can define an automorphism $\bar{\upsilon}_{ABC}$ generated by $2 \pi (Q_{B}+Q_{C}-K_{BA})$ with the same properties (see Fig. \ref{fig:vortexB}) and the state $\bar\psi_{ABC}=\bar\ups_{ABC}(\psi_0)$. Note that $(\bar{\ups}_{ABC} \circ \ups_{ABC})(\psi_{0}) = \psi_{0}$. This follows from $2 \pi (Q_B+Q_C-K_{BA})=2 \pi (Q-(Q_A-K_{AB}))$ and Lemma \ref{lma:FQ}. It is natural to interpret the state produced by the automorphism $\bar{\upsilon}_{ABC}$ as an anti-vortex. By applying automorphisms $\ups$ and $\bar{\ups}$ at different points and choosing the paths so that they do not overlap, one can create several vortices and anti-vortices at different points. The superselection sector of the resulting state does not depend on the choice of the paths, provided the paths are contained in non-overlapping cones.

In general a vortex state cannot be produced by an action of an almost local observable (or even any quasi-local observable) on the ground state. Thus a vortex state may belong to a different superselection sector than the ground state. However, one can create a vortex-anti-vortex pair by acting on the ground state with a unitary almost local observable. For example, suppose one wants to create a vortex at $ABD$ and an anti-vortex at  $BCA$ (see Fig. \ref{fig:vortexC}). This can be accomplished using an automorphism generated by $2 \pi(Q_{A+C} - K_{AB})$. It can be obtained as a limit of automorphisms of the form
\begin{equation}
\CO\mapsto \exp\left(2\pi i (Q_{A'_r+C'_r}-K_{A'_r B'_r})\right) \CO \exp\left(-2\pi i (Q_{A'_r+C'_r}-K_{A'_r B'_r})\right).
\end{equation}
Since $K_{AB}\in\SA$, by Lemma \ref{lma:FA} this automorphism is a conjugation by a unitary observable
\begin{equation}
\alpha_{Q_{A+C}-K_{AB}}(2\pi)\left(e^{-2\pi i K_{AB}}\right).
\end{equation}
In fact, since $K_{AB}\in\SAal$, this observable is almost local. In the following we will be using the notation $e^{2 \pi i (Q_{A+C}-K_{AB})}$ for it.

\begin{figure}
\centering
\begin{tikzpicture}[scale=.5]
\filldraw[color=red!10, fill=red!10, ultra thick] (0,0) -- (2,-4) -- (-2,-4) -- cycle;
\filldraw[color=red!10, fill=red!10, ultra thick] (0,0) -- (3.4641-1,2+3.4641/2) -- (3.4641+1,2-3.4641/2) -- cycle;
\filldraw[color=red!10, fill=red!10, ultra thick] (0,0) -- (-3.4641+1,2+3.4641/2) -- (-3.4641-1,2-3.4641/2) -- cycle;
\draw[red, very thick] (0,0) -- (3.4641+1,2-3.4641/2);
\draw[red, very thick] (0,0) -- (3.4641-1,2+3.4641/2);
\draw[red, very thick] (0,0) -- (-3.4641-1,2-3.4641/2);
\draw[red, very thick] (0,0) -- (-3.4641+1,2+3.4641/2);
\draw[red, very thick] (0,0) -- (-2,-4);
\draw[red, very thick] (0,0) -- (2,-4);

\draw[gray, very thick] (0,0) -- (3.4641,2);
\draw[gray, very thick] (0,0) -- (-3.4641,2);
\draw[gray, very thick] (0,0) -- (0,-4);

\draw[color=blue,ultra thick, ->] (0,2) arc (90:120:2);
\draw[color=blue,ultra thick, ->] (0,2) arc (90:60:2);
\draw[color=blue,ultra thick, ->] (3.4641/2,-2/2) arc (-30:0:2);
\draw[color=blue,ultra thick, ->] (3.4641/2,-2/2) arc (-30:-60:2);
\draw[color=blue,ultra thick, ->] (-3.4641/2,-2/2) arc (210:240:2);
\draw[color=blue,ultra thick, ->] (-3.4641/2,-2/2) arc (210:180:2);

\node  at (0,0.9*3) {$\theta_{C}$}; 
\node  at (-0.9*1.7321*3/2,-0.9*3/2) {$\theta_{A}$};
\node  at (0.9*1.7321*3/2,-0.9*3/2) {$\theta_{B}$};
\node  at (0,1.5*3) {$C$}; 
\node  at (-1.5*1.7321*3/2,-1.5*3/2) {$A$};
\node  at (1.5*1.7321*3/2,-1.5*3/2) {$B$};

\end{tikzpicture}
\caption{Admissible paths $AB$, $BC$ and $CA$ meeting at the point $ABC$.
}
\label{fig:cones}
\end{figure}
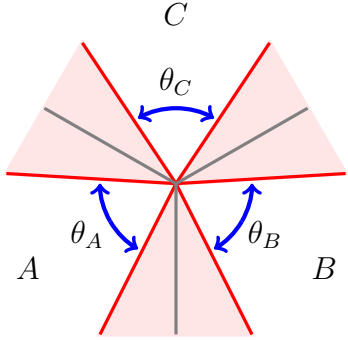

In Section 3 we considered sequences of almost local observables and defined sequences of approximate localization sets for them. In this section we will be dealing with more general infinite sets of almost local observables and it is convenient to generalize the notion of approximate localization sets to them. Let $\{\CA_\alpha,a \in\CI\}$ be an infinite collection of observables labeled by an infinite set $\CI$ and $\{p_\alpha,\alpha\in\CI\}$ be an infinite collection of points of $\Lambda$ labeled by the same set.  We will say that $\{\CA_\alpha\}$ is approximately localized on $\{p_\alpha\}$ if there exists a positive function $f(r)=\Or$ such that for any $r>0$ and any $\alpha\in\CI$ there is $\CA_{\alpha}^{(r)} \in \SA_{B_r(p_{\alpha})}$ such that
$\|\CA_\alpha-\CA_\alpha^{(r)}\|\leq \|\CA_\alpha\| f(r).$ In particular, this condition implies that $\CA_\alpha$ is almost local for all $\alpha$. 

To study the transport of vortices along paths, we will first define the set of vortex configurations and paths of interest. We only consider configurations with a finite number of vortices, so the set of initial and final positions of vortices is always finite. These positions are vertices of a trivalent graph, some of whose edges connect the vertices and some go off to infinity. The paths needed to define vortex states are paths on this graph. For simplicity we will assume that the graph is a tree. The edges of the graph need not be straight lines or segments, but we need to assume that they do not come close to each other. One way to achieve this is the following recursive procedure. Let us fix an angle $\theta_c$ and a vertex $ABC$ (see Fig.\ref{fig:cones}). Removing $ABC$ will cause the graph to fall into three components each of which is itself a tree. We require that each component is contained in a cone with vertex $ABC$ such that the angles between adjacent boundaries of different cones are greater than $\theta_c$ . Then for each component we take the vertex connected to $ABC$ as the basepoint and repeat the procedure. Any trivalent graph which satisfies these requirements will be called admissible.

Let us consider a process (see Fig. \ref{fig:transport}) in which we create a vortex at $(A-E)C(B-C)$ and an anti-vortex at $(A-E)(D-F)E$ and move the vortex to $(B-C)F(D-F)$ along the lines as shown on the figure. We assume that the graph formed by the lines and vertices is admissible. Let $L$ be the smallest distance between the triple points. Let $X_{A}=(Q_{A}-K_{(A-E)(B-C+D)})$ and $X_{B}=(Q_{B}-K_{(B-C)(A+D-F)})$ be 0-chains which generate automorphisms $\alpha_{X_A}(2 \pi)$ and $\alpha_{X_B}(2 \pi)$ corresponding to these movements. Note that $(X_A - Q_{A})$ and $(X_B - Q_{B})$ are almost local observables. Let us denote by $X^r_{A}$ and $X^r_{B}$  the regularized operators $(Q^r_{A'_r}-K_{(A-E)(B-C+D)})$ and $(Q_{B'_r}-K_{(B-C)(A+D-F)})$, correspondingly. Then we have an identity:
\beq\label{eq:composition of transport}
e^{2 \pi i X_B} e^{2 \pi i X_A} |0\ral  
= \l e^{\pi i \lal M_{ABD} \ral} + \OL \r  e^{2 \pi i(X_A+X_B) } |0\ral.
\eeq
where $\OL$ is in an observable with the norm less then some $\OL$ function, which can be chosen the same for any configuration under consideration, $|0\ral$ is the vacuum vector in the ground-state GNS representation, and observables are interpreted as operators using the GNS representation.
 
\begin{proof} 
Let $\CI$ be the set of all labeled configurations of regions and admissible graphs as in Fig. \ref{fig:transport}. To any such a configuration we can attach a self-adjoint almost local observable $2\pi i [X_B,X_A]$. It is easy to see that 
\beq\label{eq:XXM}
2 \pi i [X_{B},X_{A}] = M_{ABD} + \OL ,
\eeq
and thus this collection of observables is approximately localized on the points $ABD$.  The Lieb-Robinson bound implies
\begin{multline}
-i \frac{d}{d \phi} \alpha_{X_{A,B}}(\phi)([X_B,X_A]) = \ad_{X_{A,B}} \l \alpha_{X_{A,B}}(\phi)([X_B,X_A]) \r = \\ =  
\ad_{\tilde{Q}_{A,B}} \l \alpha_{X_{A,B}}(\phi)([X_B,X_A])] \r + \OL ,
\end{multline}
and therefore
\beq
\alpha_{X_{A,B}}(\phi)([X_B,X_A]) = \alpha_{\tilde{Q}_{A,B}}(\phi)([X_B,X_A]) + \OL .
\eeq
This implies that 
\begin{multline}
\l \alpha_{X_B}(\phi)(X_A) - X_A \r =  i \int_{0}^{\phi} ds \:  \alpha_{X_B}(s)([X_B,X_A]) = \\ = 
i \int_{0}^{\phi} ds \:  \alpha_{\tilde{Q}_B}(s)([X_B,X_A]) + \OL
\end{multline}
defines an infinite collection of almost local observables labeled by $\CI$ which is localized at the points $ABD$ and satisfies 
\beq\label{eq:eadXX}
\l \alpha_{X_B}(\phi)(X_A) - X_A \r | 0 \ral = \l \frac{\phi}{2 \pi} \lal M_{ABD} \ral + \OL \r |0\ral.
\eeq
Let $V(\phi)$ and $W(\phi)$ be almost local unitaries
\beq
V(\phi) = \lim_{r \to \infty} e^{- i \phi (Q_{A'_r} + Q_{B'_r})} e^{i \phi (X^r_A + X^r_B)}
\eeq
\beq
W(\phi) = \lim_{r \to \infty} e^{- i \phi (Q_{A'_r} + Q_{B'_r})} e^{i \phi (X^r_B)} e^{i \phi (X^r_A)}
\eeq
which satisfy 
\beq\label{eq:VdV}
V^{\dagger}(\phi) \l -i \frac{d}{d \phi} \r V(\phi) = \alpha^{-1}_{X_A+X_B}(\phi) \l (X_A-Q_{A})+(X_B-Q_{B})\r, 
\eeq
\begin{multline}\label{eq:WdW}
W^{\dagger}(\phi) \l -i \frac{d}{d \phi} \r W(\phi) = \\ = \alpha^{-1}_{X_A}(\phi) \circ \alpha^{-1}_{X_B}(\phi) \l (X_A-Q_{A})+(X_B-Q_{B})+(\alpha_{X_B}(\phi)(X_A)-X_A) \r .
\end{multline}
By comparing eq. (\ref{eq:WdW}) and eq. (\ref{eq:VdV}) and using
\begin{multline}
\l \alpha^{-1}_{X_A}(\phi) \circ \alpha^{-1}_{X_B}(\phi) \l \alpha_{X_B}(\phi)(X_A)-X_A \r \r |0\ral = \\ = \l \alpha^{-1}_{\tilde{Q}_A}(\phi) \circ \alpha^{-1}_{\tilde{Q}_B}(\phi) \l \alpha_{X_B}(\phi)(X_A)-X_A \r  + \OL \r |0\ral = \\ =
\l  \frac{\phi}{2 \pi} \lal M_{ABD} \ral + \OL \r |0\ral ,
\end{multline}
we conclude
\begin{multline}
W^{\dagger}(\phi) \l -i \frac{d}{d \phi} \r W(\phi) |0\ral = \\ =
V^{\dagger}(\phi) \l -i \frac{d}{d \phi} \r V(\phi) |0\ral + \l  \frac{\phi}{2 \pi} \lal M_{ABD}\ral + \OL \r |0\ral.
\end{multline}
Since $V(2 \pi) = e^{2 \pi i (X_A + X_B)}$ and $W(2 \pi ) = e^{2 \pi i X_{B}} e^{2 \pi i X_{A}}$, that implies eq. (\ref{eq:composition of transport}).
\end{proof}

We see that vortex-transport operators for large enough paths without "sharp" turns compose in the expected way except for a phase $e^{\pi i \langle M_{ABD} \rangle}$. Similarly, one can show that if one transports a vortex so that shaded regions intersect (see Fig. \ref{fig:transport2}), one gets a phase $e^{\pi i \langle M_{BAD} \rangle}=e^{-\pi i \langle M_{ABD}\rangle}$. 

\begin{figure}
\centering
\begin{tikzpicture}[scale=0.4]

\draw [color=red!50, fill=red!20, very thick] (0,0) -- (-3.4641,2) -- (-3.4641,3) -- (-3.4641-0.86602,2-0.5) -- (-0.86602,-0.5) -- (-0.86602,-4-0.5) -- (0.86602,-4-0.5) -- (0.86602,-0.5) -- (3.4641+0.86602,2-0.5) -- (3.4641,2) ;
\draw[red, very thick, ->] (-3.4641/2-0.86602/2,2/2-0.5/2) -- (-0.86602/2,-0.5/2) -- (-0.86602/2,-0.5/2-2);
\draw[red, very thick, ->] (0.86602/2,-0.5/2-2) -- (0.86602/2,-0.5/2) -- (3.4641/2+0.86602/2,2/2-0.5/2) ;
\draw[gray, very thick] (0,0) -- (3.4641,2);
\draw[gray, very thick] (3.4641,2) -- (3.4641,4);
\draw[gray, very thick] (3.4641,2) -- (1.5*3.4641, 2-1);
\draw[gray, very thick] (0,0) -- (-3.4641,2);
\draw[gray, very thick] (-3.4641,2) -- (-3.4641,4);
\draw[gray, very thick] (-3.4641,2) -- (-1.5*3.4641,1);
\draw[gray, very thick] (0,0) -- (0,-4);
\draw[gray, very thick] (0,-4) -- (2*0.86602,-4-2*0.5);
\draw[gray, very thick] (0,-4) -- (-2*0.86602,-4-2*0.5);
\node  at (0+0,1.5*2) {$D-F$}; 
\node  at (0-1.5*1.7321,-1.5*1) {$A-E$};
\node  at (0+1.5*1.7321,-1.5*1) {$B-C$};
\node  at (0+0,-6) {$C$};
\node  at (0+1.5*3.4641,3) {$F$};
\node  at (0-1.5*3.4641,3) {$E$};

\draw [color=red!50, fill=red!20, very thick] (15,0) -- (15-3.4641,2) -- (15-3.4641,3) -- (15-3.4641-0.86602,2-0.5) -- (15,-1)  -- (15+3.4641+0.86602,2-0.5) -- (15+3.4641,2) ;
\draw[red, very thick, ->] (15-3.4641/2-0.86602/2,2/2-0.5/2) -- (15,-0.5) -- (15+3.4641/2+0.86602/2,2/2-0.5/2);
\draw[gray, very thick] (15,0) -- (15+3.4641,2);
\draw[gray, very thick] (15+3.4641,2) -- (15+3.4641,4);
\draw[gray, very thick] (15+3.4641,2) -- (15+1.5*3.4641, 2-1);
\draw[gray, very thick] (15,0) -- (15-3.4641,2);
\draw[gray, very thick] (15-3.4641,2) -- (15-3.4641,4);
\draw[gray, very thick] (15-3.4641,2) -- (15-1.5*3.4641,1);
\draw[gray, very thick] (15+0,0) -- (15+0,-4);
\draw[gray, very thick] (15+0,-4) -- (15+2*0.86602,-4-2*0.5);
\draw[gray, very thick] (15+0,-4) -- (15-2*0.86602,-4-2*0.5);
\node  at (15+0,1.5*2) {$D-F$}; 
\node  at (15-1.5*1.7321,-1.5*1) {$A-E$};
\node  at (15+1.5*1.7321,-1.5*1) {$B-C$};
\node  at (15+0,-6) {$C$};
\node  at (15+1.5*3.4641,3) {$F$};
\node  at (15-1.5*3.4641,3) {$E$};

\end{tikzpicture}
\caption{ Transport of vortices.
}
\label{fig:transport}
\end{figure}
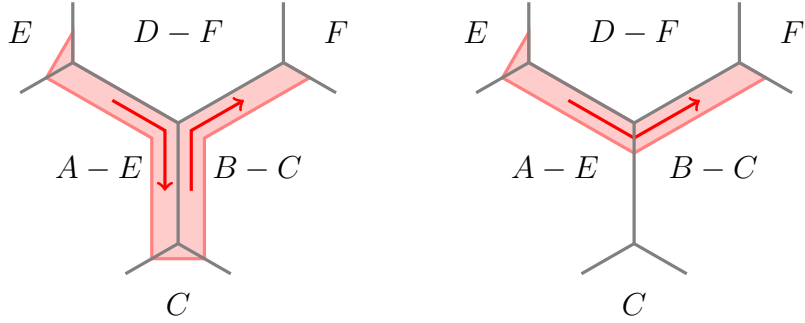

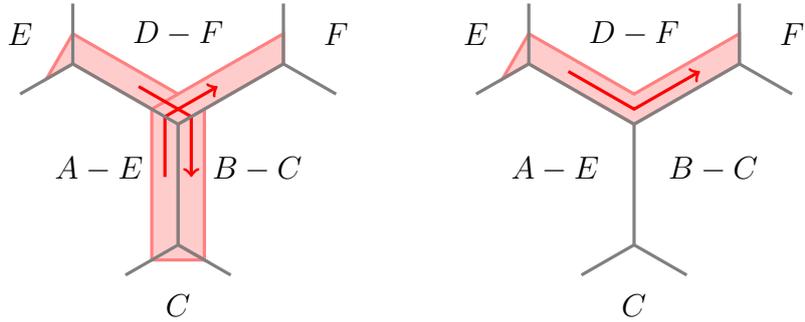
\begin{figure}
\centering
\begin{tikzpicture}[scale=0.4]

\draw [color=red!50, fill=red!20, very thick] (0,0) -- (-3.4641,2) -- (-3.4641-0.86602,2-0.5) -- (-3.4641,3) -- (3.4641/4,2/4)  -- (0.86602,-4-0.5) -- (0,-4) -- (0,0);
\draw [color=red!50, fill=red!20, very thick] (0,0) -- (0,-4) -- (0.86602,-4-0.5) -- (-0.86602,-4-0.5) -- (-3.4641/4,2/4) -- (3.4641,3) -- (3.4641,2) -- (0,0);
\draw[red, very thick, ->] (-3.4641/2+0.86602/2,1+2/2-0.5/2-1/2) -- (+0.86602/2,1-0.5/2-1/2) -- (+0.86602/2,1-0.5/2-2-1/2);
\draw[red, very thick, ->] (-0.86602/2,-0.5/2-2+1/2) -- (-0.86602/2,-0.5/2+1/2) -- (3.4641/2-0.86602/2,2/2-0.5/2+1/2) ;
\draw[gray, very thick] (0,0) -- (3.4641,2);
\draw[gray, very thick] (3.4641,2) -- (3.4641,4);
\draw[gray, very thick] (3.4641,2) -- (1.5*3.4641, 2-1);
\draw[gray, very thick] (0,0) -- (-3.4641,2);
\draw[gray, very thick] (-3.4641,2) -- (-3.4641,4);
\draw[gray, very thick] (-3.4641,2) -- (-1.5*3.4641,1);
\draw[gray, very thick] (0,0) -- (0,-4);
\draw[gray, very thick] (0,-4) -- (2*0.86602,-4-2*0.5);
\draw[gray, very thick] (0,-4) -- (-2*0.86602,-4-2*0.5);
\node  at (0+0,1.5*2) {$D-F$}; 
\node  at (0-1.5*1.7321,-1.5*1) {$A-E$};
\node  at (0+1.5*1.7321,-1.5*1) {$B-C$};
\node  at (0+0,-6) {$C$};
\node  at (0+1.5*3.4641,3) {$F$};
\node  at (0-1.5*3.4641,3) {$E$};

\draw [color=red!50, fill=red!20, very thick] (15,0) -- (15-3.4641,2) -- (15-3.4641-0.86602,2-0.5) -- (15-3.4641,3) -- (15,1) -- (15+3.4641,3) -- (15+3.4641,2) -- (15,0);
\draw[red, very thick, ->] (15-3.4641/2-0.86602/2,2/2-0.5/2+1) -- (15,-0.5+1) -- (15+3.4641/2+0.86602/2,2/2-0.5/2+1);
\draw[gray, very thick] (15,0) -- (15+3.4641,2);
\draw[gray, very thick] (15+3.4641,2) -- (15+3.4641,4);
\draw[gray, very thick] (15+3.4641,2) -- (15+1.5*3.4641, 2-1);
\draw[gray, very thick] (15,0) -- (15-3.4641,2);
\draw[gray, very thick] (15-3.4641,2) -- (15-3.4641,4);
\draw[gray, very thick] (15-3.4641,2) -- (15-1.5*3.4641,1);
\draw[gray, very thick] (15+0,0) -- (15+0,-4);
\draw[gray, very thick] (15+0,-4) -- (15+2*0.86602,-4-2*0.5);
\draw[gray, very thick] (15+0,-4) -- (15-2*0.86602,-4-2*0.5);
\node  at (15+0,1.5*2) {$D-F$}; 
\node  at (15-1.5*1.7321,-1.5*1) {$A-E$};
\node  at (15+1.5*1.7321,-1.5*1) {$B-C$};
\node  at (15+0,-6) {$C$};
\node  at (15+1.5*3.4641,3) {$F$};
\node  at (15-1.5*3.4641,3) {$E$};

\end{tikzpicture}
\caption{Transport of vortices along intersecting paths.
}
\label{fig:transport2}
\end{figure}

\begin{figure}
\centering
\begin{tikzpicture}[scale=.5]
  \node[rotate=90] at (0,0) {
\begin{tikzpicture}[scale=.5]
\draw [color=red!50, fill=red!20, very thick] (0,3) -- (-3.4641,2+3) -- (-3.4641,2+3+1) -- (-1.25*3.4641,+2+3-0.5) -- (0, 3-1) -- (1.25*3.4641,+2+3-0.5) -- (3.4641,2+3);
\draw [color=red!50, fill=red!20, very thick] (0,-3) -- (-3.4641,-2-3) -- (-1.25*3.4641,-2-3+0.5) -- (-3.4641,-2-3-1) -- (0, -4) -- (3.4641,-2-3-1) -- (3.4641,-2-3);
\draw[red, very thick, ->] (-3.4641/2-0.86602/2, 3+2/2-0.5/2) -- (0,3-0.5) -- (3.4641/2+0.86602/2, 3+2/2-0.5/2);
\draw[red, very thick, ->] (-3.4641/2-0.86602/2, -4-2/2+0.5/2) -- (0,-4+0.5) -- (3.4641/2+0.86602/2, -4-2/2+0.5/2);

\draw [color=blue!50, fill=blue!20, very thick, opacity=.5] (0,0) -- (0,3) -- (0+3.4641,2+3) -- (5*0.866025,+2+3-0.5) -- (3.4641,+2+3+1) -- (-0.25*3.4641,3+.5) -- (-0.25*3.4641,-3+.5) -- (-1.25*3.4641,-2-3+0.5) -- (0-3.4641,-2-3) -- (0,-3);
\draw [color=blue!50, fill=blue!20, very thick, opacity=.5] (0,0) -- (0,-3) -- (0+3.4641,-2-3) -- (3.4641,-2-3-1) -- (5*0.866025,-2-3+0.5) --  (0.25*3.4641,-3+.5) -- (0.25*3.4641,3+.5) -- (-3.4641,+2+3+1) --(0-3.4641,2+3) -- (0,3);
\draw[blue, very thick, ->, opacity=.5]  (3.4641/2+0.86602/2+3.4641/8, -4-2/2+0.5/2+1/2+2/8) -- (0+3.4641/8,-4+0.5+1/2+2/8) -- (0+3.4641/8,0) -- (0+3.4641/8,2/8+3) -- (0-3.4641/2,2/2+3+0.5);
\draw[blue, very thick, <-, opacity=.5]  (-3.4641/2-0.86602/2-3.4641/8, -4-2/2+0.5/2+1/2+2/8) -- (0-3.4641/8,-4+0.5+1/2+2/8) -- (0-3.4641/8,0) -- (0-3.4641/8,2/8+3) -- (0+3.4641/2,2/2+3+0.5);

\draw[gray, very thick] (0,-3) -- (0+3.4641,-2-3);
\draw[gray, very thick] (0+3.4641,-2-3) -- (3.4641,-2-3-1);
\draw[gray, very thick] (0+3.4641,-2-3) -- (1.25*3.4641,-2-3+0.5);

\draw[gray, very thick] (0,-3) -- (0-3.4641,-2-3);
\draw[gray, very thick] (-3.4641,-2-3) -- (-3.4641,-2-3-1);
\draw[gray, very thick] (-3.4641,-2-3) -- (-1.25*3.4641,-2-3+0.5);

\draw[gray, very thick] (0,-3) -- (0,6-3);
\draw[gray, very thick] (0,-3) -- (0,4-3);

\draw[gray, very thick] (0,3) -- (0+3.4641,2+3);
\draw[gray, very thick] (3.4641,2+3) -- (3.4641,+2+3+1);
\draw[gray, very thick] (3.4641,2+3) -- (1.25*3.4641,+2+3-0.5);

\draw[gray, very thick] (0,3) -- (0-3.4641,2+3);
\draw[gray, very thick] (-3.4641,2+3) -- (-3.4641,2+3+1);
\draw[gray, very thick] (-3.4641,2+3) -- (-1.25*3.4641,+2+3-0.5);

\end{tikzpicture}
};
\node  at (-5,0) {$C$}; 
\node  at (5,0) {$D$}; 
\node  at (0,2) {$A$};
\node  at (0,-2) {$B$};

\node  at (-5,5) {$2$}; 
\node  at (5,5) {$4$}; 
\node  at (5,-5) {$3$}; 
\node  at (-5,-5) {$1$}; 
\end{tikzpicture}

\caption{One first creates vortex/anti-vortex pairs by operators $12$ and $34$ (shaded in red), and then annihilates them by first applying operator $23$, and then $41$ (shaded in blue).
}
\label{fig:c+cc+c}
\end{figure}
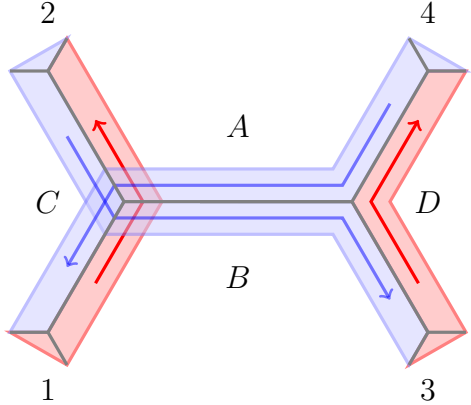

One can create vortices and anti-vortices at the corners of a rectangle by applying transport operators $12$ and $34$ to the vacuum vector $|0\ral$  (see Fig. \ref{fig:c+cc+c}) and then annihilating them in a different order by applying transport operators $23$ followed by $41$. After the application of the operator $23$ one gets the inverse of the application of the operator $41$ times a phase factor $e^{\pi i (\langle M_{ABD} \rangle + \langle M_{ACB} \rangle)}$. Therefore the net result of this operation is multiplication of the vacuum vector by $e^{\pi i \sigma}$ plus $\OL$ corrections.

\subsection{Hall conductance for systems in an invertible phase}\label{HallInv}

In general, one does not expect that one can create a single vortex by applying some almost local or even quasi-local observable to the ground state vector, i.e. the single-vortex state and the ground state can belong to different superselection sectors. However, systems in an invertible phase are special in this regard. 

Let $\left(\SA^{(+)},H^{(+)},\psi^{(+)}_0,Q^{(+)}\right)$ be a $U(1)$-invariant gapped lattice system in an invertible phase. Let $\left(\SA^{(-)},H^{(-)},\psi^{(-)}_0\right)$ be its inverse. Note that we do not require the inverse to have a non-trivial $U(1)$ symmetry. Formally, we may say that it has a $U(1)$ symmetry whose charge is zero. The composite system has a $U(1)$ symmetry whose charge is $Q=Q^{(+)}\otimes 1^{(-)}$. Vortex states for the composite system are defined in the usual manner. We will show that in the composite system vortex states can be obtained from the ground state by conjugation with an almost local unitary. Using that and the relation between the transport properties of vortices and the Hall conductance, we will show that for the original system the Hall conductance is quantized, $\sigma \in \ZZ$. Moreover, we show that for an bosonic system in an invertible phase  $\sigma \in 2 \ZZ$, while for a fermionic system in an invertible phase $\sigma$ is even (odd) if and only if vortices are bosons (fermions).

We start with the bosonic case. Recall that the composite system is defined as follows. Its algebra of observables is  $\SA=\SA^{(+)}\otimes\SA^{(-)}$, its Hamiltonian is  $H=H^{(+)}\otimes 1^{(-)}+1^{(+)}\otimes H^{(-)}$, and its ground state $\Psi$ is a state is defined by $\Psi(\CA^{(+)}\otimes\CA^{(-)})=\psi^{(+)}_0(\CA^{(+)})\psi^{(-)}_0(\CA^{(-)}).$  Let $(\SA,H(s),\Psi(s))$, $s\in [0,1],$ be a path of bosonic  gapped lattice systems connecting the composite system  to a gapped system $(\SA,H(1),\Psi(1)=\tPsi)$ with a factorized ground state $\tPsi$. As discussed in section \ref{QAevolution}, the ground state $\Psi$ for $H(0)$ is automorphically equivalent to $\tPsi$ via an a automorphism $\alpha_G$ locally generated by a 0-chain $G(s) = \sum_{j} G_j (s)$. We denote by $\Pi$ and $\tPi$ the GNS representations of $\SA$ corresponding to the states $\Psi$ and $\tPsi$.

Consider three regions $A$,$B$,$C$ meeting at a point $ABC$ (see Fig. \ref{fig:vortexA}) such that the boundaries $AB,BC,CA$ form an admissible graph. Let $\Ups_{ABC}$ be the vortex inserting automorphism for the composite system $(\SA,H,\Psi,Q)$. It has the form $\Ups_{ABC}=\upsp_{ABC}\otimes 1^{(-)}$.  Let $\tUps_{ABC}$ be the automorphism $\l \alpha_G \circ \Ups_{ABC} \circ \alpha^{-1}_G \r$.

Since the automorphism $\Ups_{ABC}$ is approximately localized on the path $AB$, the same is true about $\tUps_{ABC}$.
Since the superselection sector of the state  $\Ups_{ABC}(\Psi)$ is invariant under re-arranging the paths $AB,BC,CA$, the same is true about the state $\tUps_{ABC}(\tPsi).$ Thus the state $\tUps_{ABC}$ is asymptotically locally indistinguishable from the vacuum state $\tPsi$. By itself, this does not imply $\tUps_{ABC} (\tPsi)$ and $\tPsi$ are in the same superselection sector. For example, topologically non-trivial excitations in a toric code are produced from the ground state precisely in this manner \cite{toriccode}. But since $\tPsi$ is a factorized state, one expects that every pure state which is asymptotically locally indistinguishable from $\tPsi$ is in the same superselection sector. For the states $\tUps_{ABC}(\tPsi)$ and $\tPsi$ this is shown in Appendix \ref{app:trivial} using a  result of T. Matsui \cite{matsui2013boundedness}.

Automorphic equivalence of $\Psi$ and $\tPsi$ (by means of a locally generated automorphism) implies that $\Ups_{ABC}(\Psi)$ is a vector state in the GNS representation for $\Psi$ which can be produced by an almost local unitary $\CV_{ABC}$:
\beq
|\Ups_{ABC} \ral = \Pi(\CV_{ABC}) |0 \ral. 
\eeq
Note that because all bounds on the operators used in the construction of vortex states are uniform, there is a function $f(r) = \Or$, such that for any vortex state $\Ups_{ABC}(\Psi)$ constructed using an admissible graph there is an almost local unitary $\CV_{ABC}$ which is $f$-localized.

\begin{figure}
\centering
    \begin{subfigure}[b]{0.3\textwidth}
        \centering
        \begin{tikzpicture}[scale=.3]

         \draw [color=red!50, fill=red!20, very thick] (0,0) -- (-3.4641,2) -- (-3.4641-0.86602,2-0.5) -- (-0.86602,-0.5) -- (-0.86602,-4)  -- (0, - 4) ;
        \draw[red, very thick, <-] (-3.4641/2-0.86602/2,2/2-0.5/2) -- (-0.86602/2,-0.5/2) -- (-0.86602/2,-0.5/2-2);
        \draw[gray, very thick] (0,0) -- (3.4641,2);
        \draw[gray, very thick] (3.4641,2) -- (3.4641,4);
        \draw[gray, very thick] (3.4641,2) -- (1.5*3.4641, 2-1);
        \draw[gray, very thick] (0,0) -- (-3.4641,2);
        \draw[gray, very thick] (-3.4641,2) -- (-3.4641,4);
        \draw[gray, very thick] (-3.4641,2) -- (-1.5*3.4641,1);
        \draw[gray, very thick] (0,0) -- (0,-4);
        \node  at (-1.4*3.4641,1.4*2) {$1$}; 
        \node  at (1.4*3.4641,1.4*2) {$2$}; 
        \end{tikzpicture}
        \caption{}
        \label{fig:exchA}
    \end{subfigure}
    \begin{subfigure}[b]{0.3\textwidth}
        \centering
        \begin{tikzpicture}[scale=.3]
        \draw [color=red!50, fill=red!20, very thick] (0,0) -- (0,-4)  -- (-0.86602,-4) -- (-3.4641/4,2/4) -- (3.4641,3) -- (3.4641,2) -- (0,0);
        \draw[red, very thick, ->] (-0.86602/2,-0.5/2-2+1/2) -- (-0.86602/2,-0.5/2+1/2) -- (3.4641/2-0.86602/2,2/2-0.5/2+1/2) ;
        \draw[gray, very thick] (0,0) -- (3.4641,2);
        \draw[gray, very thick] (3.4641,2) -- (3.4641,4);
        \draw[gray, very thick] (3.4641,2) -- (1.5*3.4641, 2-1);
        \draw[gray, very thick] (0,0) -- (-3.4641,2);
        \draw[gray, very thick] (-3.4641,2) -- (-3.4641,4);
        \draw[gray, very thick] (-3.4641,2) -- (-1.5*3.4641,1);
        \draw[gray, very thick] (0,0) -- (0,-4);
        \node  at (-1.4*3.4641,1.4*2) {$1$}; 
        \node  at (1.4*3.4641,1.4*2) {$2$}; 
        \end{tikzpicture}
        \caption{}
        \label{fig:exchB}
    \end{subfigure}
    \begin{subfigure}[b]{0.3\textwidth}
        \centering
        \begin{tikzpicture}[scale=.3]

        \draw [color=red!50, fill=red!20, very thick] (15,0) -- (15-3.4641,2) -- (15-3.4641-0.86602,2-0.5) -- (15-3.4641,3) -- (15,1) -- (15+3.4641,3) -- (15+3.4641,2) -- (15,0);
        \draw[red, very thick, ->] (15-3.4641/2-0.86602/2,2/2-0.5/2+1) -- (15,-0.5+1) -- (15+3.4641/2+0.86602/2,2/2-0.5/2+1);
        \draw[gray, very thick] (15,0) -- (15+3.4641,2);
        \draw[gray, very thick] (15+3.4641,2) -- (15+3.4641,4);
        \draw[gray, very thick] (15+3.4641,2) -- (15+1.5*3.4641, 2-1);
        \draw[gray, very thick] (15,0) -- (15-3.4641,2);
        \draw[gray, very thick] (15-3.4641,2) -- (15-3.4641,4);
        \draw[gray, very thick] (15-3.4641,2) -- (15-1.5*3.4641,1);
        \draw[gray, very thick] (15+0,0) -- (15+0,-4);
        \node  at (15-1.4*3.4641,1.4*2) {$1$}; 
        \node  at (15+1.4*3.4641,1.4*2) {$2$}; 
        \end{tikzpicture}
        \caption{}
        \label{fig:exchC}
    \end{subfigure}

\caption{The processes corresponding to $\CV_1$, $\CV_2$ and $\CW_{\bar{1} 2}$.
}
\label{fig:exch}
\end{figure}

Let us consider an admissible graph depicted on Fig. \ref{fig:exch} with segments connecting triple points having length $L$ . Let $\CV_{1}$ and $\CV_{2}$ be almost local unitary observables creating vortices at points $1$ and $2$ as shown on Fig. \ref{fig:exchA} and Fig. \ref{fig:exchB}, and let $\CW_{\bar{1} 2}$ be a transport operator shown on Fig. \ref{fig:exchC}. Then, using the results from section \ref{vortices}, we have
\begin{multline}
|0 \ral = \Pi \l (\CV_2^{-1} \CV_1)( \CV_2 \CV_1^{-1}) \r |0\ral + \OL  = \\ = e^{-\pi i \sigma/2}
\Pi \l \CV_2^{-1} \CV_1 \CW_{\bar{1} 2} \r |0\ral + \OL =
e^{-\pi i \sigma }
|0\ral  + \OL
\end{multline}
Therefore, for a bosonic spin systems $\sigma \in 2 \ZZ$.

For fermionic systems the arguments are the same, but the unitary equivalence $\CU$ relating the vortex state and the ground state can either preserve or flip fermionic parity. In the former case, the almost local observable $\CV_{ABC}$ has even fermionic parity, and the same arguments as above show that $\sigma\in 2\ZZ$. In the latter case, $\CV_{ABC}$ has odd fermionic parity, and thus operators creating vortices at widely separated points  approximately anti-commute. The above argument then shows that $\sigma$ is an odd integer. Thus vortices are bosons or fermions depending on whether $\sigma$ is even or odd, in agreement with \cite{LevinSenthil}.

\begin{remark} Let us say that a pure state $\psi^{(+)}$ on $\SA^{(+)}$ is in an invertible phase if there is another pure state $\psi^{(-)}$ on $\SA^{(-)}$ and a locally generated automorphism $\alpha_F$ of $\SA = \SA^{(+)} \otimes \SA^{(-)}$ such that the state  $\alpha_F(\psi^{(+)} \otimes \psi^{(-)})$ is factorized. For a factorized state we can always choose a gapped local Hamiltonian $H=\sum_j H_j$, such that it is a ground state of this Hamiltonian. Then $\alpha_F(H)$ is gapped and has $\psi^{(+)} \otimes \psi^{(-)}$ as a ground state. If $\psi^{(+)}$ is invariant under a $U(1)$ symmetry with charge $Q$ on $\SA$, then $\psi^{(+)} \otimes \psi^{(-)}$ is also the ground state of a gapped Hamiltonian $(\alpha_Q(\phi) \circ \alpha_F)(H)$ for any $\phi\in \RR/2\pi\ZZ$.  Therefore $\psi^{(+)}$ is also the ground state of a $U(1)$-invariant gapped Hamiltonian 
\beq
H' = \int_0^{2\pi} (\alpha_Q(\phi) \circ \alpha_F)(H) d \phi .
\eeq
Thus to any $U(1)$-invariant invertible state $\psi^{(+)}$ one can associate a quantized invariant, which is the Hall conductance of the composite system. In the case when $\psi^{(+)}$ satisfies no local spontaneous symmetry breaking condition, this invariant coincides with the Hall conductance of $\psi^{(+)}$. Therefore one does not actually have to use the Hamiltonians $H^{(\pm)}$ of the original and the inverse system anywhere in this section.
\end{remark}

\section{Concluding remarks}

We have shown that both the zero-temperature Hall conductance and the Thouless pump invariant of gapped lattice systems are locally computable. Similar results were recently obtained in \cite{kapustin2019thermal,kapustin2020higherA}. This implies that a 2d gapped system with a nonzero Hall conductance cannot have a gapped interface with a trivial 2d gapped system (or equivalently, cannot have a gapped edge). Similarly, a 1d gapped system with a nonzero Thouless pump invariant cannot have a gapped interface with a trivial 1d system. (In the case of the Thouless pump, constructing the interface involves interpolating both the Hamiltonians and the locally-generated automorphism $\alpha$).

In this paper we adopted a definition of a gapped phase of matter (for a fixed lattice and an algebra of observables) as a homotopy equivalence class of gapped Hamiltonians and their ground states. Another attractive possibility is to keep track of just the states and declare two ground states to be equivalent (and thus in the same gapped phase) if they are related by a locally-generated automorphism of $\SA$. Thanks to the results of \cite{bachmann2012automorphic,moon2020automorphic}, if $(H,\psi)$ and $(H',\psi')$ are equivalent in the former sense, then $\psi$ and $\psi'$ are equivalent in the latter sense. Note also that $\sigma_{Hall}$ is unaffected by locally generated $U(1)$ invariant automorphisms and so can be regarded as an invariant of a gapped phase with $U(1)$ symmetry in this new sense. 

We can completely avoid the usage of the Hamiltonian if we restrict our attention to states in the invertible phase. By the results of Section \ref{HallInv}, if the lattice $\Lambda$ is two-dimensional, the Hall conductance of a pure $U(1)$-invariant state in an invertible phase is well-defined and quantized. Such a Hamiltonian-free definition of an invertible phase could be a useful alternative to the one based on finite-depth local unitary quantum circuits \cite{QImeets} since it guarantees that homotopies in the space of invertible states do not affect the phase to which the system belongs. \\

\noindent
{\bf Acknowledgements:}
This research was supported in part by the U.S.\ Department of Energy, Office of Science, Office of High Energy Physics, under Award Number DE-SC0011632. A.K. was also supported by the Simons Investigator Award. N.S. gratefully acknowledges the support of the Dominic Orr Fellowship at Caltech. \\

\noindent
{\bf Data availability statement:}
Data sharing is not applicable to this article as no new data were created or analyzed in this study.

\appendix
\numberwithin{equation}{section}

\section*{Appendices}

\section{Some technical lemmas}\label{app:lemmas}

The following characterization of almost local observables adapted from  \cite{hastings2010quasi} is sometimes useful. 
\begin{lemma}\label{lma:approximation}
Let $\CA$ be an observable, $j\in\Lambda$ be a site, and $f(r)=\Or$ be a positive monotonically decreasing function.  If for any $k\in\Lambda$ and any $\CB\in\SA_k$ one has $ ||[\CA,\CB]||\leq 2 ||\CA||\cdot ||\CB|| f(\dist(j,k))$, then an observable $\CA\in\SA$ is $h$-localized on a site $j$ for $h(r) = \sup_{j \in \Lambda} \sum_{k \in \bar{B}_r(j)} f(\dist(j,k))=\Or$.
In the opposite direction, if $\CA$ is $f$-localized on a site $j$, then for any $k\in\Lambda$ and any $\CB\in\SA_k$ one has $ ||[\CA,\CB]||\leq 2 ||\CA||\cdot ||\CB|| f(\dist(j,k))$.
\end{lemma}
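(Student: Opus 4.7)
The backward direction is immediate: given a local approximation $\CA^{(r)}\in\SA_{B_r(j)}$ with $\|\CA-\CA^{(r)}\|\leq\|\CA\|f(r)$, any $\CB\in\SA_k$ with $\dist(j,k)\geq r$ commutes with $\CA^{(r)}$, so $\|[\CA,\CB]\|=\|[\CA-\CA^{(r)},\CB]\|\leq 2\|\CA\|\|\CB\|f(r)$. Setting $r=\dist(j,k)$ yields the claim. The forward direction is the real content, and my plan is to build $\CA^{(r)}$ as the image of $\CA$ under a conditional expectation onto $\SA_{B_r(j)}$, estimating the error via the hypothesized commutator bound.

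For each site $k\in\Lambda$ let $E_k:\SA\to\SA_{\overline{\{k\}}}$ denote the tracial conditional expectation obtained by averaging over the normalized Haar measure on the unitary group of $\CH_k$:
\begin{equation}
E_k(X)=\int_{U(\CH_k)} U X U^{*}\,dU.
\end{equation}
I will record two elementary properties. First, since $UXU^{*}-X=-[X,U]U^{-1}$ and $\|U\|=1$, one has
\begin{equation}
\|E_k(X)-X\|\leq\sup_{U\in\SA_k,\,\|U\|=1}\|[X,U]\|.
\end{equation}
Second, for $k'\neq k$ the operation $E_k$ commutes with $[\,\cdot\,,\CB]$ for any $\CB\in\SA_{k'}$, so the hypothesized commutator bound propagates: if $\|[X,\CB]\|\leq 2\|\CA\|\|\CB\|f(\dist(j,k''))$ for every $\CB\in\SA_{k''}$ with $k''\neq k$, then the same holds with $X$ replaced by $E_k(X)$.

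Enumerate the sites of $\bar B_r(j)$ as $k_1,k_2,\ldots$ and set $E^{(n)}=E_{k_n}\circ\cdots\circ E_{k_1}$, $E^{(0)}=\Id$. Applying the two properties inductively, $E^{(n)}(\CA)$ satisfies the hypothesized commutator bound at every site $k\notin\{k_1,\ldots,k_n\}$, and
\begin{equation}
\|E^{(n)}(\CA)-E^{(n-1)}(\CA)\|\leq 2\|\CA\|f(\dist(j,k_n)).
\end{equation}
Telescoping and summing over all sites of $\bar B_r(j)$ yields
\begin{equation}
\|\CA-E^{(n)}(\CA)\|\leq 2\|\CA\|\sum_{i=1}^{n} f(\dist(j,k_i))\leq 2\|\CA\|\,h(r).
\end{equation}
The sum on the right is finite and of order $\Or$ because, by uniform discreteness and uniform filling of $\Lambda$, the number of lattice points at distance in $[R,R+1)$ from $j$ is bounded by $CR^{d-1}$ uniformly in $j$, and $f(R)$ decays faster than any polynomial; taking the sup over $j$ produces the $j$-independent function $h(r)=\Or$ in the statement.

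The main subtlety, which is where I expect to spend most effort, is the passage to the limit $n\to\infty$. The telescoping estimate also shows that $\{E^{(n)}(\CA)\}$ is norm-Cauchy, so it converges to some $\CA^{(r)}\in\SA$ with $\|\CA-\CA^{(r)}\|\leq 2\|\CA\|h(r)$. To identify $\CA^{(r)}$ as an element of $\SA_{B_r(j)}$ I will use that for bosons $\SA_{B_r(j)}$ coincides with the set of elements of $\SA$ commuting with $\SA_k$ for every $k\in\bar B_r(j)$; by construction $E^{(n)}(\CA)$ commutes with $\SA_{k_1},\ldots,\SA_{k_n}$, and these commutation relations pass to the norm limit, giving $\CA^{(r)}\in\SA_{B_r(j)}$. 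For the fermionic case the same argument applies with the graded tensor product, provided one notes that the partial traces $E_k$ are defined using only even unitaries in $\SA_k$ (or equivalently the grading-preserving Haar measure), so they still land in the graded commutant and the identification of $\SA_{B_r(j)}$ as a graded commutant still holds. After replacing $2\|\CA\|h(r)$ by $\|\CA\|h(r)$ via an obvious rescaling of $h$ (absorbing the factor of $2$), this produces the required almost-local approximation and completes the proof.
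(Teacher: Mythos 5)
Your proof is correct and is essentially the paper's own argument: the paper likewise constructs the local approximation by averaging $\CA$ over unitaries on the sites outside $B_r(j)$ (the partial trace $\mu_\Gamma$, applied to annuli $B_{r'}(j)\setminus B_r(j)$ with $r'\to\infty$), uses the commutator hypothesis exactly as you do to bound the error by $\|\CA\|\sum_{k\in\bar B_r(j)}f(\dist(j,k))\leq \|\CA\|h(r)$, and proves the converse direction by the same one-line approximation argument. Composing the single-site conditional expectations $E_{k_1},\ldots,E_{k_n}$ is the same operation as the paper's joint average $\mu_{\{k_1,\ldots,k_n\}}$, so the differences are purely organizational (your telescoping versus the paper's direct estimate, your explicit relative-commutant identification of the limit where the paper writes ``obviously'', and the harmless factor of $2$ absorbed into $h$).
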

\begin{proof}
To prove the first statement, note that given any finite $\Gamma\subset\Lambda$ we can define an observable $\mu_\Gamma(\CA)$ by conjugating $\CA$ with $\prod_{k\in\Gamma} U_k$, where $U_k\in\SA_k$ is a unitary, and averaging over all $U_k$. (This operation is also known as partial trace.) The observable $\mu_{\Gamma}(\CA)$ commutes with $\SA_{\Gamma}$ and satisfies $\|\mu_{\Gamma}(\CA)\|\leq \|\CA\|$. Also, it is easy to see that for any local $\CB$ localized outside $\Gamma$ one has $[\mu_\Gamma(\CA),\CB]=\mu_\Gamma([\CA,\CB])$. Let $B_{r,r'}(j)=B_{r'}(j) \backslash B_r(j)$ and let $\CA^{(r,r')}=\mu_{B_{r,r'}(j)}(\CA)$. Using the Cauchy criterion, one can check that $\CA^{(r)}=\lim_{r'\ra\infty} \CA^{(r,r')}$ exists. It is obviously a local operator localized on $B_r(j)$. We have
\begin{multline}
\|\CA-\CA^{(r,r')}\|=\|\int \prod_{k\in B_{r,r'}(j)} dU_k \left({\rm Ad}_{\underset{l\in B_{r,r'}(j)}{\prod} U_l} (\CA)-\CA\right)\|\\
\leq \|\CA\| \sum_{k \in B_{r,r'}(j)} f(\dist(k,j)) \leq \| A \| h(r).
\end{multline}
where $h(r) = \sup_{j \in \Lambda} \sum_{k \in \bar{B}_r(j)} f(\dist(j,k))$. Thus $\|\CA-\CA^{(r)}\| \leq \|\CA\| h(r)$, and therefore $\CA$ is $h$-localized.

To prove the second statement, note that if $\CA$ is $f$-localized on a site $j$, then for any $k \in \Lambda$ with $\dist(j,k)=r$ and any $\CB \in \SA_{k}$ one has $\| [\CA,\CB] \| \leq \| [\CA^{(r)},\CB] \| + 2 \| \CA \| \cdot \| \CB \| f(r) = 2 \| \CA \| \cdot \| \CB \| f(r)$.
\end{proof}

Throughout the paper we use the following version of the Lieb-Robinson bound:

\begin{lemma}\label{lma:LRbound}
Let $F=\sum_j F_j$ be an $f$-local 0-chain bounded by $\|F_j\| \leq C$, and let $\CA$ be an observable  $a$-localized at $p\in\Lambda$. Then $\alpha_F(s)(\CA)$ is $b$-localized at $p$, where the MDP function $b(r)=\Or$ depends on $F$ only through $f(r)$ and $C$ and on $\CA$ only through $a(r)$. 
\end{lemma}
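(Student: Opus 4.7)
The plan is the standard Lieb--Robinson bound, adapted to superpolynomially decaying (rather than exponentially decaying) 0-chains. First I would invoke Lemma \ref{lma:approximation}: to show that $\alpha_F(s)(\CA)$ is $b$-localized at $p$ it suffices to exhibit an MDP function $b(r)=\Or$, depending on $F$ only through $C$ and $f$ and on $\CA$ only through $a$, such that for every $k\in\Lambda$ and every $\CB\in\SA_k$
\begin{equation*}
\|[\alpha_F(s)(\CA),\CB]\| \leq 2\|\CA\|\|\CB\|\, b(\dist(p,k)).
\end{equation*}
Since $\alpha_F(s)$ is a $*$-automorphism (hence isometric) and $\alpha_{-F}(s)=\alpha_F(s)^{-1}$, this left-hand side equals $\|[\CA,\alpha_{-F}(s)(\CB)]\|$, so the task reduces to constructing a local observable $\CB^{(r)}\in\SA_{B_r(k)}$ that well-approximates $\alpha_{-F}(s)(\CB)$ in norm.

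Next I would truncate $F$ near $k$. Pick local approximations $F_j^{(\ell)}\in\SA_{B_\ell(j)}$ with $\|F_j-F_j^{(\ell)}\|\leq Cf(\ell)$, and set $F^{(\mathrm{in})}:=\sum_{j:\dist(j,k)\leq r/2}F_j^{(r/2)}\in\SA_{B_r(k)}$. Define $\CB^{(r)}:=\alpha_{-F^{(\mathrm{in})}}(s)(\CB)$, which lies in $\SA_{B_r(k)}$ by construction. Duhamel's formula gives
\begin{equation*}
\alpha_{-F}(s)(\CB)-\CB^{(r)} = -i\int_0^s \alpha_{-F^{(\mathrm{in})}}(s-u)\bigl([F-F^{(\mathrm{in})},\alpha_{-F}(u)(\CB)]\bigr)\,du,
\end{equation*}
reducing the problem to bounding the integrand. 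The residual $F-F^{(\mathrm{in})}$ splits into truncation errors $F_j-F_j^{(r/2)}$ for $\dist(j,k)\leq r/2$ (each of norm at most $Cf(r/2)$, with only polynomially many sites) and tail interactions $F_j$ with $\dist(j,k)>r/2$.

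The hard part is controlling the tail, because the crude estimate $\|[F_j,\alpha_{-F}(u)(\CB)]\|\leq 2C\|\CB\|$ gives no decay in $\dist(j,k)$. This is the standard point at which Lieb--Robinson iteration is required: I would introduce a monotone envelope $\Phi(d,u)$ bounding $\|[\alpha_{-F}(u)(\CB),\CC]\|/(\|\CB\|\|\CC\|)$ uniformly over $\CC$ supported at distance $\geq d$ from $k$, and combine the Duhamel identity with the $f$-localization of each $F_j$ (via Lemma \ref{lma:approximation}) to extract a Gronwall-type integro-differential inequality for $\Phi$ whose coefficients depend only on $C$ and $f$. Since $f=\Or$, a routine but delicate adaptation of the standard Lieb--Robinson argument (along the lines of \cite{hastings2010quasi,HastingsKoma,NachtergaeleSims}) then forces $\Phi(d,u)=\Or$ in $d$, uniformly for $u\in[0,s]$. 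Integrating over $u$ yields $\|\alpha_{-F}(s)(\CB)-\CB^{(r)}\|\leq \|\CB\|\,h(r,s)$ with $h(\cdot,s)$ MDP and depending on $F$ only through $C$ and $f$. Finally, taking $r=\lfloor\dist(p,k)/2\rfloor$ and invoking Lemma \ref{lma:approximation} once more (with $\CA$ $a$-localized at $p$ against $\CB^{(r)}\in\SA_{B_r(k)}$) to obtain $\|[\CA,\CB^{(r)}]\|\leq 2\|\CA\|\|\CB\|\,a(r)$, the choice $b(r):=a(\lfloor r/2\rfloor)+h(\lfloor r/2\rfloor,s)$ satisfies all the required properties.
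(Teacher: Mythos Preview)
Your approach is correct in outline and would work, but it takes a genuinely different route from the paper's proof. You propose to essentially re-derive a Lieb--Robinson bound from scratch via Duhamel's formula and a Gronwall-type iteration for the envelope $\Phi(d,u)$. The paper instead reduces to a \emph{cited} Lieb--Robinson bound (from \cite{nachtergaele2006propagation}) by first telescoping $\CA=\sum_n\CA_n$ and $F_j=\sum_n F_j^{(n)}$ into strictly local pieces, and then---this is the key step you do not have---constructing a \emph{reproducing} MDP majorant $h(r)=c\,f(r)^\alpha/r^\nu$ for $f$, so that the interaction norm $\|F\|_h$ is finite and the convolution inequality underlying the cited bound closes. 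With that in hand, the commutator estimate drops out in one line and Lemma~\ref{lma:approximation} finishes the job.

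What each buys: the paper's argument is shorter and makes transparent \emph{why} superpolynomial decay suffices---namely, any $f=\Or$ admits a reproducing majorant, which is exactly the structural hypothesis needed for the Nachtergaele--Ogata--Sims bound. Your Duhamel/Gronwall route is more self-contained and avoids external citations, but the step you flag as ``routine but delicate'' is precisely where the reproducing-function idea (or an equivalent convolution closure) would have to reappear to control the iteration; without it, the integral inequality for $\Phi$ does not close for merely superpolynomial $f$. So the sketch is fine as a plan, but the honest work you defer is essentially equivalent to the paper's reproducing-function trick. One small technical point: in your last step you bound $\|[\CA,\CB^{(r)}]\|$ by invoking Lemma~\ref{lma:approximation}, but that lemma's converse direction is stated only for $\CB\in\SA_k$ on a single site; since $\CB^{(r)}\in\SA_{B_r(k)}$, you should instead use the definition of $a$-localization of $\CA$ directly (approximate $\CA$ by $\CA^{(r')}$ with $r'<\dist(p,k)-r$), which gives the same conclusion.
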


\begin{proof}
Since $\CA$ is $a$-localized one can represent it as a sum $\sum_n \CA_n$ of observable $\CA_n$ local on balls $\Gamma_{r_n}$ of radius $r_n = n$ with the center at the point $p$ such that $\|\CA_n\| \leq \|\CA\|a(r_n)$. Similarly, we have $F_j = \sum_n F_j^{(n)}$ with norms $\|F_j^{(n)}\| \leq C\, f(r_n)$.

For any MDP function $f(r)=\Or$ one can choose constants $c>0$ and $0<\alpha <1$ such that an MDP function $h(r)=c f(r)^{\alpha}/r^{\nu}$ upper-bounds $f(r)$ (cf. \cite{hastings2010quasi}). The function $h(r)$ is reproducing for large enough $\nu$, i.e.
\beq
H_f = \sup_{j,k} \frac{\sum_{l} h(\text{dist}(j,l)) h(\text{dist}(l,k)) }{h(\text{dist}(j,k))} < \infty .
\eeq
More precisely, for a $d$-dimensional lattice the function $h(r)$ is reproducing for $\nu>d$. We also have
\begin{multline}
\|F\|_h := \sup_{k,l \in \Lambda} \: \frac{1}{h(\dist(k,l))} \sum_{j} \sum_{\{n: \, k,l \in \Gamma_{r_n} \}} \|F_j^{(n)}\| \leq \\ \leq
M_f = \sup_{k,l \in \Lambda} \: \frac{C}{h(\dist(k,l))} \sum_{j} \sum_{\{n: \, k,l \in \Gamma_{r_n} \}} f(r_n) < \infty
\end{multline}
where the second sum is over $n$, such that the ball $\Gamma_{r_n}$ with the center at $j$ contains $k$ and $l$. 

Then the automorphism $\alpha_F(s)$ generated by $F$ satisfies the requirement for the Lieb-Robinson bound from \cite{nachtergaele2006propagation}, and therefore for a local observable $\CB \in \text{End}(\CH_k)$ we have
\begin{multline}
\|[\alpha_F(s)(\CA_n),\CB]\| \leq 2 g_f(s) \|\CA_n\| \, \|\CB\| H_f^{-1} \, \sum_{j \in \Gamma_{r_n}} h(\text{dist}(j,k)) \leq \\ \leq 2 \|\CA\|
\|\CB\| \l g_f(s) a(r_n) \, H_f^{-1} \sum_{j \in \Gamma_{r_n}} h(\dist(j,k)) \r
\end{multline}
where $g_f(s) \leq e^{2 H_f M_f s}$. Therefore $\|[\alpha_F(s)(\CA),\CB]\| \leq 2 \|\CA\| \, \|\CB \| \, \tilde{b}(\text{dist}(k,p))$ for a function $\tilde{b}(r)$ that can be chosen to be of order $\Or$. That, together with Lemma \ref{lma:approximation}, implies that $\alpha_F(s)(\CA)$ is $b$-localized for $b(r) = \Or$.
\end{proof}

Let $F = \sum_{j} F_j$ be a 0-chain. For any almost local observable $\CO$ we have \cite{bratteli2012operator2}:
\beq
\alpha_{F}(s)(\CO) = \lim_{r \to \infty} e^{i s F_{\Gamma_r}} \CO e^{- i s F_{\Gamma_r}},
\eeq
where $\Gamma_r$ is a disk with the center at some fixed point. 
For a composition of adjoint actions by two different 0-chains $F$ and $X$ we have
\begin{multline}
\ad_X \circ \ad_F (\CO) = \ad_X (\sum_j [F_j,\CO]) = \sum_k \sum_j [X_k,[F_j,\CO]] = \\ = 
\ad_F \circ \ad_X (\CO) + \ad_{\ad_X(F)} (\CO),
\end{multline}
where we have used the fact that we can change the order of summation. Similarly we have
\begin{multline}
\alpha_X(s)\circ \ad_F(\CO) = \alpha_X(s)(\sum_j [F_j,\CO]) = \sum_j [\alpha_X(s)(F_j),\alpha_X(s)(\CO)] = \\ = \ad_{\alpha_X(s)(F)} \circ \alpha_X(s)(\CO).
\end{multline}

\begin{lemma}
\label{lma:FQ}
Let $F = \sum_{j} F_j$ be a 0-chain that is preserved by a  $U(1)$ charge $Q=\sum_{j} Q_j$ (that is, $[Q,F_j]=0$ for all $j$). Then the automorphism $\alpha=\alpha_{2\pi (Q+F)}(1)$ coincides with the automorphism $\alpha'=\alpha_{2\pi F}(1)$.
\end{lemma}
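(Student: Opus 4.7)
\textbf{Proof proposal for Lemma \ref{lma:FQ}.}

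The plan is to show that $\alpha_{2\pi(Q+F)}(s) = \alpha_{2\pi Q}(s)\circ\alpha_{2\pi F}(s)$ for all $s\in[0,1]$, and then evaluate at $s=1$, using $\alpha_{2\pi Q}(1)=\alpha_Q(2\pi)=\mathrm{id}$ (which follows from $\exp(2\pi i Q_j)=1$ for all $j$). Both sides satisfy the same ODE $-i\frac{d}{ds}\beta(s)(\CA)=\beta(s)([2\pi(Q+F),\CA])$ with $\beta(0)=\mathrm{id}$ on $\SAal$, so by the uniqueness statement cited in Section 2 they are equal. The only nontrivial input needed for this ODE verification is that the automorphism families generated by $Q$ and $F$ commute. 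Everything then extends from $\SAal$ to $\SA$ by continuity.

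First I will establish that $\alpha_Q(\phi)$ preserves $F$ as a 0-chain. Since $[Q,F_j]=\sum_k[Q_k,F_j]=0$ by hypothesis, $\ad_Q(F_j)=0$ for each $j$. Because $F_j$ is almost local, the ODE $\frac{d}{d\phi}\alpha_Q(\phi)(F_j)=i\alpha_Q(\phi)([Q,F_j])=0$ gives $\alpha_Q(\phi)(F_j)=F_j$ for every $j$ and $\phi$. By the identity $\alpha_X(s)\circ\ad_F(\CO)=\ad_{\alpha_X(s)(F)}\circ\alpha_X(s)(\CO)$ (quoted just before the lemma, with $X=Q$), this immediately yields
\begin{equation}
\alpha_Q(\phi)\circ\ad_F=\ad_F\circ\alpha_Q(\phi)\qquad\text{on }\SAal.
\end{equation}

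Next I will use this commutation to show $\alpha_Q(\phi)\circ\alpha_F(s)=\alpha_F(s)\circ\alpha_Q(\phi)$. Fix $\phi$ and consider $\eta(s):=\alpha_F(s)\circ\alpha_Q(\phi)\circ\alpha_F(-s)$ on $\SAal$. Then $\eta(0)=\alpha_Q(\phi)$, and applying the product rule together with $-i\frac{d}{ds}\alpha_F(\pm s)=\pm\alpha_F(\pm s)\ad_F$ gives
\begin{equation}
-i\frac{d}{ds}\eta(s)=\alpha_F(s)\,\ad_F\,\alpha_Q(\phi)\,\alpha_F(-s)-\alpha_F(s)\,\alpha_Q(\phi)\,\alpha_F(-s)\,\ad_F.
\end{equation}
Using the commutation relation from the previous paragraph in the first term (and the trivial commutation of $\ad_F$ with $\alpha_F(-s)$ to pull $\ad_F$ to the right), both terms coincide and the derivative vanishes, so $\eta(s)\equiv\alpha_Q(\phi)$ on $\SAal$, i.e., $\alpha_F(s)\alpha_Q(\phi)=\alpha_Q(\phi)\alpha_F(s)$.

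Finally, set $\beta(s):=\alpha_{2\pi Q}(s)\circ\alpha_{2\pi F}(s)=\alpha_Q(2\pi s)\circ\alpha_F(2\pi s)$. For $\CA\in\SAal$, the product rule gives
\begin{equation}
-i\tfrac{d}{ds}\beta(s)(\CA)=2\pi\,\alpha_Q(2\pi s)\,\ad_Q\,\alpha_F(2\pi s)(\CA)+2\pi\,\alpha_Q(2\pi s)\,\alpha_F(2\pi s)\,\ad_F(\CA).
\end{equation}
Commuting $\ad_Q$ past $\alpha_F(2\pi s)$ using the step above (with the roles of $Q$ and $F$ interchanged, which is symmetric in the hypothesis $[Q,F_j]=0$), both terms acquire the form $\beta(s)\ad_{\bullet}(\CA)$, and their sum equals $\beta(s)([2\pi(Q+F),\CA])$. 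Since $\beta(0)=\mathrm{id}$, uniqueness of solutions of the ODE defining $\alpha_{2\pi(Q+F)}$ (Proposition cited from \cite{bratteli2012operator2}) yields $\alpha_{2\pi(Q+F)}(s)=\beta(s)$ on $\SAal$, hence on $\SA$. At $s=1$ the factor $\alpha_{2\pi Q}(1)$ is the identity, giving $\alpha_{2\pi(Q+F)}(1)=\alpha_{2\pi F}(1)$.

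The main conceptual point to be careful about is that $\ad_Q$ and $\ad_F$ are unbounded on $\SA$, so all derivative computations must be performed on the invariant dense subalgebra $\SAal$, with final equalities of automorphisms obtained by continuous extension; this is the one place where the Lieb--Robinson-based Lemma \ref{lma:LRbound} (ensuring $\alpha_F(s),\alpha_Q(\phi)$ preserve $\SAal$) is implicitly used.
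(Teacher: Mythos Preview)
Your overall strategy coincides with the paper's: show that $\alpha_Q(t)\circ\alpha_F(t)$ solves the defining ODE for $\alpha_{Q+F}(t)$, invoke uniqueness, then set $t=2\pi$ and use $\alpha_Q(2\pi)=\mathrm{id}$. The paper does this in one line by writing the derivative as $\ad_{Q+F}\circ(\alpha_Q(t)\circ\alpha_F(t))$, using only $\alpha_Q(t)(F)=F$ to turn $\alpha_Q(t)\circ\ad_F$ into $\ad_F\circ\alpha_Q(t)$.

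There is, however, a genuine slip in your final paragraph. You need $\ad_Q\circ\alpha_F(2\pi s)=\alpha_F(2\pi s)\circ\ad_Q$, and you justify this by saying the hypothesis $[Q,F_j]=0$ is ``symmetric'' in $Q$ and $F$. It is not: $[Q,F_j]=\sum_k[Q_k,F_j]=0$ for every $j$ does \emph{not} imply $[F,Q_k]=\sum_j[F_j,Q_k]=0$ for every $k$. (Take a two-site system with $F_1=a_1^\dagger a_2$, $F_2=0$: then $[Q,F_1]=0$ but $[F,Q_1]=-F_1\neq 0$.) So the ``roles interchanged'' version of your first step is not available.

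The fix is already in your own argument: your second step establishes $\alpha_Q(\phi)\circ\alpha_F(s)=\alpha_F(s)\circ\alpha_Q(\phi)$; differentiating this in $\phi$ at $\phi=0$ gives exactly $\ad_Q\circ\alpha_F(s)=\alpha_F(s)\circ\ad_Q$. Alternatively, and more economically, you can avoid the whole detour through commuting flows by arranging the product rule as the paper does: use $-i\frac{d}{dt}\alpha_Q(t)=\ad_Q\circ\alpha_Q(t)$ (valid since $Q$ is $t$-independent) so the first term is already $\ad_Q\circ\beta(t)$, and then only the commutation $\alpha_Q(t)\circ\ad_F=\ad_F\circ\alpha_Q(t)$ (your first step) is needed to bring the second term to $\ad_F\circ\beta(t)$. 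This matches the paper's proof and makes your second step unnecessary.
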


\begin{proof}
We have
\begin{multline}
-i \frac{d}{d t} (\alpha_{Q}(t) \circ \alpha_{F}(t)) = \ad_Q \circ \alpha_{Q}(t) \circ \alpha_{F}(t) +  \alpha_{Q}(t) \circ \ad_F \circ \alpha_{F}(t) = \\ = \ad_Q \circ \alpha_{Q}(t) \circ \alpha_{F}(t) + \ad_{\alpha_Q(t)(F)} \circ \alpha_{Q}(t) \circ \alpha_{F}(t) = \\ =
\ad_{Q+F} \circ \alpha_{Q}(t) \circ \alpha_{F}(t)
\end{multline}
where we have used $\alpha_{Q}(t)(F) = F$. Therefore the automorphism $\alpha_Q(s) \circ \alpha_F(s)$ coincides with $\alpha_{Q+F}(s)$. Since $\alpha_{Q}(2 \pi)$ is the identity automorphism, this implies $\alpha = \alpha'$.
\end{proof}

\begin{lemma}\label{lma:FGA}
Let $F$ be an $f$-local 0-chain and let $G$ be a $g$-local 0-chain approximately localized on a set $\Gamma$. Then for any $j\in\Lambda$  and any observable $\CA\in\SA$ which is $a$-localized on $j\in\Lambda$ one has
$$
||\alpha_{F+G}(s)(\CA)-\alpha_F(s)(\CA)||\leq ||\CA||h(\dist(j,\Gamma)),
$$
where the MDP function $h(r)=\Or$ does not depend on $\Gamma$ and depends on $F,G,\CA$ only through $f,g,a$.
In particular, $\alpha_G$ is approximately localized on $\Gamma$. 
\end{lemma}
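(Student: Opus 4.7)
The plan is to use an interaction-picture (Duhamel) identity to express the difference $\alpha_{F+G}(s)(\CA) - \alpha_F(s)(\CA)$ as an integral whose integrand can be controlled by combining the Lieb-Robinson bound of Lemma \ref{lma:LRbound} with the approximate localization of $G$ on $\Gamma$.

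First I would establish the interaction-picture identity by setting $\beta(u) := \alpha_{F+G}(u) \circ \alpha_F(s-u)$, so that $\beta(0) = \alpha_F(s)$ and $\beta(s) = \alpha_{F+G}(s)$. Using $\tfrac{d}{ds}\alpha_F(s) = i\,\alpha_F(s)\circ\ad_F$ together with the identity $\alpha_F(s)\circ\ad_F = \ad_F\circ\alpha_F(s)$ (which follows from the formal equality $\alpha_F(s)(F) = F$), a short computation yields
\begin{equation*}
\alpha_{F+G}(s)(\CA) - \alpha_F(s)(\CA) = i\int_0^s \alpha_{F+G}(u)\bigl(\ad_G(\alpha_F(s-u)(\CA))\bigr)\, du.
\end{equation*}
This identity must first be verified for $\CA \in \SAal$, where the various derivations act unambiguously and the sums defining $\ad_G$ converge absolutely; it then extends to all $\CA \in \SA$ by density of $\SAal$ and norm-continuity of the automorphisms.

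Next I would bound the integrand pointwise in $u$. By Lemma \ref{lma:LRbound}, for every $u \in [0,s]$ the observable $\alpha_F(s-u)(\CA)$ is $\tilde{a}$-localized at $j$ for some MDP function $\tilde{a}$ depending only on $f$ and $a$ (with an implicit $s$-dependence coming from the Lieb-Robinson velocity that is uniform for $u \in [0,s]$). Since $G$ is approximately localized on $\Gamma$, the definition of approximate localization gives
\begin{equation*}
\bigl\|\ad_G(\alpha_F(s-u)(\CA))\bigr\| \leq C\,\|\CA\|\,h_0(\dist(j,\Gamma))
\end{equation*}
for some MDP function $h_0(r) = \Or$ depending on $\CA$ only through $\tilde{a}$, hence ultimately only through $a$, $f$, and $g$. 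Since $\alpha_{F+G}(u)$ is an isometry, applying it to the integrand preserves its norm, and integrating over $u \in [0,s]$ yields the bound with $h(r) := sCh_0(r)$. For the final claim, taking $F = 0$ (so that $\alpha_F(s) = \Id$) reduces the estimate to $\|\alpha_G(s)(\CA) - \CA\| \leq \|\CA\|\,h(\dist(j,\Gamma))$, which is exactly the definition of approximate localization of $\alpha_G(s)$ on $\Gamma$.

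The main technical care lies in justifying the Duhamel identity: because $F$, $G$, and $F+G$ are 0-chains rather than elements of $\SA$, the computation must be carried out at the level of unbounded derivations on the dense subalgebra $\SAal$, with the absolute convergence of sums such as $\sum_{j}[G_j, \alpha_F(s-u)(\CA)]$ guaranteed by the $g$-locality of $G$ together with the almost-locality of $\alpha_F(s-u)(\CA)$ supplied by Lemma \ref{lma:LRbound}. Once the identity is proved on $\SAal$, both sides are bounded linear maps of $\CA$, so the extension to all of $\SA$ is automatic.
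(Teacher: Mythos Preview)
Your proof is correct and follows essentially the same approach as the paper: a Duhamel/interaction-picture identity reduces the difference to an integral of $\ad_G$ applied to an evolved observable, which is then bounded by combining Lemma~\ref{lma:LRbound} with the approximate localization of $G$ on $\Gamma$. The paper uses the variant $\alpha_{-(F+G)}(t)\circ\alpha_F(t)$ and ends up bounding $\|\ad_{\alpha_{-F}(t)(G)}(\CA)\|$ (evolving $G$ rather than $\CA$), while you bound $\|\ad_G(\alpha_F(s-u)(\CA))\|$ directly; your version is arguably cleaner since it plugs straight into the definition of approximate localization of $G$ without needing to argue that this property is preserved under $\alpha_{-F}(t)$.
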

\begin{proof}
We compute
\begin{multline}
i\frac{d}{dt} (\alpha_{-(F+G)}(t) \circ \alpha_{F}(t))(\CA)= \alpha_{-(F+G)}(t) \circ \ad_{G} \circ \alpha_{F}(t)(\CA) =\\=
\alpha_{-(F+G)}(t) \circ \alpha_{F}(t) \circ \ad_{\alpha_{-F}(t)(G)}(\CA).
\end{multline}
Integrating this equation from $t=0$ to $t=s$, we get
\beq
\alpha_{-(F+G)}(s) \circ \alpha_{F}(s)(\CA)-\CA=-i\int_0^s dt \l  \alpha_{-(F+G)}(t) \circ \alpha_{F}(t) \circ \ad_{\alpha_{-F}(t)(G)}(\CA) \r.
\eeq
Therefore
\beq
\|\alpha_{-(F+G)}(s)\circ \alpha_{F}(s)(\CA)-\CA\|\leq s\cdot {\rm sup}_{t\in[0,s]} \|\ad_{\alpha_{-F}(t)(G)}(\CA)\|.
\eeq
By Lemma \ref{lma:LRbound} a 0-chain $\alpha_{-F}(t)(G)$ is $b$-local for some $b(r)=\Or$ which depends only on $f,g$ and $t$. This implies the statement of the lemma.
\end{proof}

\begin{lemma}
\label{lma:FA}
Let $A(t)$ be an almost local self-adjoint observable depending on a parameter $t$ and $F = \sum_{j} F_j$ be a 0-chain. Then an automorphism $\alpha_{F+A}(s) \circ \alpha_{-F}(s)$ is a conjugation by an almost local unitary.
\end{lemma}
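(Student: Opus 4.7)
The natural strategy is to realize $\beta(s):=\alpha_{F+A}(s)\circ\alpha_{-F}(s)$ as conjugation by a unitary $U(s)$ obtained from a bounded, almost local generator. The plan is to (i) compute the generator of $\beta(s)$, (ii) solve a bounded ODE to produce $U(s)$, and (iii) show that $U(s)$ is almost local.

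For (i), I differentiate $\beta(s)$ using the Leibniz rule together with the defining ODEs for $\alpha_{F+A}(s)$ and $\alpha_{-F}(s)$. Since $F$ does not depend on $s$, $\alpha_{-F}(s)$ formally commutes with $\ad_F$, and the same manipulation used in the proof of Lemma \ref{lma:FQ} collapses the two contributions to
\begin{equation}
\frac{d}{ds}\beta(s)(\CO)=i\,\alpha_{F+A}(s)\!\left([A(s),\alpha_{-F}(s)(\CO)]\right)=i\,\bigl[\tilde A(s),\beta(s)(\CO)\bigr],
\end{equation}
where $\tilde A(s):=\alpha_{F+A}(s)(A(s))$. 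By Lemma \ref{lma:LRbound}, $\tilde A(s)$ is almost local with a localization function of order $\Or$ depending only on $f$ and on the localization data of $A$, and $\|\tilde A(s)\|=\|A(s)\|\le M:=\sup_{s}\|A(s)\|<\infty$.

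For (ii), since $\tilde A(s)$ is a norm-continuous, self-adjoint, uniformly bounded observable-valued function of $s$, the Cauchy problem $\dot U(s)=i\,\tilde A(s)U(s)$ with $U(0)=1$ has a unique solution $U(s)\in\SA$ given by a norm-convergent Dyson series, and $U(s)$ is unitary. A direct product-rule check shows that $\mathrm{Ad}_{U(s)}(\CO):=U(s)\CO U(s)^{*}$ satisfies the same ODE as $\beta(s)$ with the same initial condition, so by uniqueness $\beta(s)=\mathrm{Ad}_{U(s)}$.

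For (iii), let $j_0$ be a site on which $\tilde A(s)$ is $a'$-localized uniformly in $s$ with $a'(r)=\Or$. For each $r>0$ pick a local approximation $\tilde A^{(r)}(s)\in\SA_{B_r(j_0)}$ with $\|\tilde A(s)-\tilde A^{(r)}(s)\|\le M\,a'(r)$, and let $U^{(r)}(s)\in\SA_{B_r(j_0)}$ be the local unitary solving the corresponding ODE. A Gronwall estimate applied to the linear inhomogeneous ODE satisfied by $U(s)-U^{(r)}(s)$ yields
\begin{equation}
\|U(s)-U^{(r)}(s)\|\le s\,M\,a'(r)\,e^{sM}=\Or,
\end{equation}
so $U(s)$ is almost local at $j_0$, and $\alpha_{F+A}(s)\circ\alpha_{-F}(s)=\mathrm{Ad}_{U(s)}$ as required. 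The main potential obstacle is this last step, namely keeping the local approximation of $U(s)$ uniform along the entire flow interval; but since $M$ is finite and $\tilde A(s)$ inherits its norm bound and almost-locality uniformly from $A(s)$ via Lieb–Robinson, the Gronwall constant is harmless.
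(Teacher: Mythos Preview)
Your proof is correct and follows essentially the same approach as the paper: you compute the generator of $\beta(s)=\alpha_{F+A}(s)\circ\alpha_{-F}(s)$ and identify it as the almost local observable $\tilde A(s)=\alpha_{F+A}(s)(A(s))$, exactly as the paper does. The paper stops after observing that $\beta(s)=\alpha_{A'}(s)$ with $A'(t)=\alpha_{F+A}(t)(A(t))$ almost local, whereas you additionally spell out the construction of the unitary $U(s)$ via the Dyson series and verify its almost locality through a Gronwall estimate; these are the details the paper leaves implicit.
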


\begin{proof}

We have
\begin{multline}
-i \frac{d}{dt} \left(\alpha_{F+A}(t) \circ \alpha_{-F}(t) \right) = \alpha_{F+A}(t) \circ \ad_{A(t)} \circ \alpha_{-F}(t) = \\ = \ad_{\alpha_{F+A}(t) (A(t))} \circ \left( \alpha_{F+A}(t) \circ \alpha_{-F}(t) \right).
\end{multline}
Thus $\alpha_{F+A}(s) \circ \alpha_{-F}(s)$ coincides with an automorphism $\alpha_{A'}(s)$, which is generated by $A'(t)=\alpha_{F+A}(t) (A(t))$.
\end{proof}

\begin{lemma}
\label{lma:FpX}
Let $F = \sum_{j} F_j$ and $X = \sum_j X_j$ be 0-chains, such that $i[F,X]$ is an almost local (self-adjoint) observable. Then an automorphism $\alpha_{F}(s) \circ \alpha_{X}(s) \circ \alpha_{-(F+X)}(s)$ is a conjugation by an almost local unitary. 
\end{lemma}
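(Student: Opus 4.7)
The strategy parallels the proof of Lemma \ref{lma:FA}: I would differentiate $\beta(s) := \alpha_{F}(s) \circ \alpha_{X}(s) \circ \alpha_{-(F+X)}(s)$, identify a self-adjoint almost local observable $W(s)$ such that $-i\dot\beta(s) = \ad_{W(s)}\circ \beta(s)$, and then appeal to uniqueness of ODE solutions together with Lemma \ref{lma:FA} (applied with the zero $0$-chain) to conclude that $\beta(s)$ is conjugation by an almost local unitary.

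A direct application of the product rule to $\beta(s)$, pushing each $\ad_Y$ past the $\alpha_Y$'s on its left via $\alpha_Y \circ \ad_Z = \ad_{\alpha_Y(Z)} \circ \alpha_Y$ and using that each $\alpha_Y$ fixes its own generator, yields the candidate generator
\[
W(s) \;=\; F + \alpha_F(s)(X) - \alpha_F(s) \circ \alpha_X(s)(F+X) \;=\; F - \alpha_F(s) \circ \alpha_X(s)(F),
\]
where the $X$-terms cancel because $\alpha_X(s)(X) = X$. Integrating the defining ODE $\tfrac{d}{dt}\alpha_X(t)(F) = -i\alpha_X(t)([F,X])$ rewrites this as
\[
W(s) \;=\; i \int_0^s \alpha_F(s) \circ \alpha_X(t)\bigl([F,X]\bigr)\, dt.
\]
Although $F$ and $X$ are only $0$-chains, the hypothesis tells us that $i[F,X] \in \SAal$, so $[F,X]$ is already an honest almost local observable; Lemma \ref{lma:LRbound} then ensures that the integrand is almost local with a localization rate uniform in $t \in [0,s]$, so the integral converges to an element of $\SAal$. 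Self-adjointness of $W(s)$ follows from $[F,X]^* = -[F,X]$, which makes the integrand anti-self-adjoint.

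To conclude, I would observe that both $\beta(s)$ and the automorphism generated by the time-dependent almost local observable $W(\cdot)$ satisfy the ODE $-i\dot\gamma = \ad_{W(s)}\circ \gamma$ with $\gamma(0) = \mathrm{id}$, so by uniqueness they coincide; applying Lemma \ref{lma:FA} with the zero $0$-chain in place of $F$ and $A(t) := W(t)$ then exhibits $\beta(s)$ as conjugation by an almost local unitary. The main obstacle is precisely the bookkeeping in the middle step: one must verify that formal manipulations with the $0$-chain $F$ actually produce a bona fide element of $\SAal$. This is exactly where the assumption that $i[F,X]$ is an \emph{observable} (rather than merely a $0$-chain, as $[F,X]$ would be a priori) becomes essential, since it is what lets the telescoping $\alpha_X(s)(F) - F$ collapse into a norm-convergent integral of almost local observables.
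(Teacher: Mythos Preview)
Your proof is correct and uses essentially the same idea as the paper's: the hypothesis that $i[F,X]\in\SAal$ lets one rewrite the relevant difference of evolved 0-chains as an integral of $\alpha$-evolutions of $[F,X]$, yielding a bona fide almost local generator, after which Lemma~\ref{lma:FA} finishes the argument. The only organizational difference is that the paper first shows $\alpha_F(s)\circ\alpha_X(s)=\alpha_{F+X+A}(s)$ with $A(s)=\alpha_F(s)(X)-X=i\int_0^s\alpha_F(t)([F,X])\,dt$ and then applies Lemma~\ref{lma:FA} with the 0-chain $F+X$, whereas you differentiate the full triple composition at once and apply Lemma~\ref{lma:FA} with the zero 0-chain --- a cosmetic rearrangement of the same computation.
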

\begin{proof}
Note that an observable
\beq
A(s) = \alpha_{F}(s)(X)-X=i \int_0^s  d t \, \alpha_F(t)([F,X])
\eeq
is almost local and self-adjoint. We have
\begin{multline}
-i \frac{d}{dt} \left( \alpha_{F}(t) \circ \alpha_{X}(t) \right) = \alpha_{F}(t) \circ \ad_{F+X} \circ \alpha_{X}(t) = \\ = \ad_{F+X+A(t)} \circ \left( \alpha_{F}(t) \circ \alpha_{X}(t) \right).
\end{multline}
Thus $\alpha_{F}(s) \circ \alpha_{X}(s) = \alpha_{F+X+A}(s)$, where $\alpha_{F+X+A}$ is generated by $F+X+A(t)$. By Lemma \ref{lma:FA} the automorphism $\alpha_{F+X+A}(s) \circ \alpha_{-(F+X)}(s)$ is a conjugation by an almost local unitary. 
\end{proof}

\begin{lemma}
\label{lma:Ftilde}
Let $F = \sum_{j} F_j$ be a 0-chain and $X = \sum_{j} X_j$ and $Y=\sum_{j} Y_j$ be 0-chains which do not excite the state $\psi$, i.e each $X_j$ and $Y_j$ does not excite $\psi$. Suppose $(\ad_F \circ \alpha_X(s))(Y)$ is an almost local observable. Then the states $\left( \alpha_{F+X}(s) \circ \alpha_{-(F+X+Y)}(s) \right)(\psi)$ and $\psi$ differ by a conjugation by an almost local unitary.
\end{lemma}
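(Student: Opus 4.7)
The plan is to mimic the pattern of Lemmas \ref{lma:FA} and \ref{lma:FpX}: identify $\beta(s):=\alpha_{F+X}(s)\circ\alpha_{-(F+X+Y)}(s)$ as an automorphism $\alpha_K(s)$ generated by a single $t$-dependent 0-chain $K(t)$, split $K$ into a piece that fixes $\psi$ plus an almost local remainder, and then assemble the two.

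Using $\alpha_{F+X}(t)(F+X)=F+X$, a computation parallel to those in the earlier lemmas gives
\[
-i\,\tfrac{d}{dt}\beta(t) \;=\; \ad_{(F+X)-\alpha_{F+X}(t)(F+X+Y)}\circ\beta(t) \;=\; \ad_{-\alpha_{F+X}(t)(Y)}\circ\beta(t),
\]
so $\beta(s)=\alpha_K(s)$ with $K(t):=-\alpha_{F+X}(t)(Y)$. Introduce the intertwiner $S(t):=\alpha_{-X}(t)\circ\alpha_{F+X}(t)$, so that $\alpha_{F+X}(t)=\alpha_X(t)\circ S(t)$. A calculation parallel to the proof of Lemma~\ref{lma:FpX} shows that $S(t)$ is generated by the 0-chain $\alpha_{-X}(t)(F)$, whence
\[
S(t)(Y)-Y \;=\; i\int_0^t (S(u)\circ\alpha_{-X}(u))\bigl((\ad_F\circ\alpha_X(u))(Y)\bigr)\,du.
\]
The integrand is almost local by the hypothesis of the lemma together with Lemma~\ref{lma:LRbound} applied to $\alpha_{-X}(u)$ and to $S(u)$. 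Thus writing $K(t)=K^{(1)}(t)+K^{(2)}(t)$ with
\[
K^{(1)}(t):=-\alpha_X(t)(Y),\qquad K^{(2)}(t):=-\alpha_X(t)\bigl(S(t)(Y)-Y\bigr),
\]
decomposes the generator as a 0-chain plus an almost local self-adjoint observable.

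Next, factor $\beta(s)=\alpha_{K^{(1)}}(s)\circ\gamma(s)$ with $\gamma(s):=\alpha_{K^{(1)}}(s)^{-1}\circ\beta(s)$. Differentiation yields $-i\,\partial_t\gamma(t)=\ad_{A(t)}\circ\gamma(t)$, where $A(t):=\alpha_{K^{(1)}}(t)^{-1}(K^{(2)}(t))$. Since $K^{(2)}(t)$ is almost local and $\alpha_{K^{(1)}}(t)^{-1}$ preserves almost locality by Lemma~\ref{lma:LRbound}, $A(t)$ is a self-adjoint almost local observable, so $\gamma(s)$ is conjugation by some almost local unitary $\CU$. To verify that $\alpha_{K^{(1)}}(s)$ preserves $\psi$, I first observe $\alpha_X(s)(\psi)=\psi$: for any finite truncation $X_\Gamma=\sum_{j\in\Gamma}X_j$, each self-adjoint $X_j$ acts on the GNS vacuum $|0\rangle$ as the scalar $\langle X_j\rangle_\psi$ (this is what ``$X_j$ does not excite $\psi$'' means), so the phases in $e^{\pm isX_\Gamma}|0\rangle$ cancel in $\alpha_{X_\Gamma}(s)(\psi)=\psi$, and passing to $\Gamma\to\Lambda$ via Lieb-Robinson gives the claim. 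The same reasoning shows each self-adjoint $K^{(1)}_j(t)=-\alpha_X(t)(Y_j)$ does not excite $\psi$, so $\psi([K^{(1)}_j(t),\,\cdot\,])\equiv 0$ and the ODE for $\alpha_{K^{(1)}}(t)(\psi)$ has vanishing derivative. Consequently
\[
\beta(s)(\psi) \;=\; \gamma(s)(\psi) \;=\; \Ad_{\CU}(\psi),
\]
with $\CU\in\SAal$ as required.

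The main obstacle I anticipate is uniformity of bounds: to chain Lemma~\ref{lma:LRbound} through the automorphisms $\alpha_{-X}(u)$, $S(u)$, $\alpha_X(t)$, and $\alpha_{K^{(1)}}(t)^{-1}$, the hypothesis ``$(\ad_F\circ\alpha_X(s))(Y)$ is almost local'' must supply an MDP localization function $f(r)=\Or$ uniform in $s$, so that $A(t)$ has localization uniform in $t$ and the time-ordered exponential defining $\CU$ is a bona fide almost local unitary. A secondary subtlety is that $\sum_j\langle X_j\rangle_\psi$ need not converge, so the identity $\alpha_X(s)(\psi)=\psi$ must be established at the level of the state rather than of the GNS vector, using that within each finite truncation the divergent phase $\exp(is\sum_{j\in\Gamma}\langle X_j\rangle_\psi)$ cancels internally between $e^{isX_\Gamma}$ and $e^{-isX_\Gamma}$ before the Lieb-Robinson limit is taken.
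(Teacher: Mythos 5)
Your overall strategy matches the paper's: identify the instantaneous generator of $\beta(t)=\alpha_{F+X}(t)\circ\alpha_{-(F+X+Y)}(t)$ as $-\alpha_{F+X}(t)(Y)$, split it into the 0-chain $-\alpha_X(t)(Y)$ (whose components do not excite $\psi$) plus a remainder to be shown almost local, and assemble. Your assembly is in fact more explicit than the paper's: the factorization $\beta(s)=\alpha_{K^{(1)}}(s)\circ\gamma(s)$ with $-i\frac{d}{dt}\gamma(t)=\ad_{A(t)}\circ\gamma(t)$, $A(t)=\alpha_{K^{(1)}}(t)^{-1}(K^{(2)}(t))$, is correct, and your GNS-eigenvector argument with cancelling phases is exactly the right way to make precise the paper's one-line appeal to ``$\alpha_X(t)(Y)$ does not excite $\psi$''.

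The genuine gap is the integral identity used to prove that $K^{(2)}$ is almost local, and it is load-bearing rather than a typo. You correctly note that $S(t)=\alpha_{-X}(t)\circ\alpha_{F+X}(t)$ satisfies $-i\frac{d}{dt}S(t)=\ad_{\alpha_{-X}(t)(F)}\circ S(t)$, with the generator standing \emph{outside}; but the formula you then write is the \emph{inside}-convention formula fed with the \emph{outside}-convention generator. The correct identities are
\[
S(t)(Y)-Y \;=\; i\int_0^t \bigl[\alpha_{-X}(u)(F),\,S(u)(Y)\bigr]\,du
\;=\; i\int_0^t S(u)\bigl(\bigl[\alpha_{-(F+X)}(u)(F),\,Y\bigr]\bigr)\,du ,
\]
whereas your expression equals $i\int_0^t S(u)\bigl(\bigl[\alpha_{-X}(u)(F),Y\bigr]\bigr)\,du$; the difference, $i\int_0^t \bigl[(S(u)-\mathrm{id})\bigl(\alpha_{-X}(u)(F)\bigr),\,S(u)(Y)\bigr]\,du$, vanishes only to first order in $F$. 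This matters precisely because the hypothesis controls $\bigl[\alpha_{-X}(u)(F),Y\bigr]=\alpha_{-X}(u)\bigl((\ad_F\circ\alpha_X(u))(Y)\bigr)$, while in both correct formulas either $Y$ is replaced by $S(u)(Y)$ or $\alpha_{-X}(u)(F)$ by $\alpha_{-(F+X)}(u)(F)$, so the hypothesis does not apply to them and your almost-locality conclusion does not follow along this route. The repair is the paper's two-sided interpolation: differentiating $t\mapsto \alpha_{F+X}(t)\bigl(\alpha_X(s-t)(Y)\bigr)$ gives
\[
\alpha_{F+X}(s)(Y)-\alpha_X(s)(Y)\;=\;i\int_0^s \alpha_{F+X}(t)\Bigl((\ad_F\circ\alpha_X(s-t))(Y)\Bigr)\,dt ,
\]
whose integrand is a locally generated automorphism applied to exactly the observable the hypothesis declares almost local, hence almost local by Lemma~\ref{lma:LRbound} (uniformly for $t,s$ in compact sets, granting the uniformity in $s$ of the hypothesis that you flag yourself). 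Substituting this identity for yours, and keeping your treatment of the non-exciting 0-chain and of $\gamma$, the proof is complete.
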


\begin{proof}
We have
\begin{multline}
i \frac{d}{dt} \left( \alpha_{F+X}(t) \circ \alpha_{-(F+X+Y)}(t) \right) = \alpha_{F+X}(t) \circ \ad_{Y} \circ \alpha_{-(F+X+Y)}(t) = \\ = \ad_{\alpha_{F+X}(t)(Y)} \circ \alpha_{F+X}(t) \circ \alpha_{-(F+X+Y)}(t).
\end{multline}
Note that
\begin{multline}
A(s) = \alpha_{F+X}(s)(Y)-\alpha_{X}(s)(Y) = \\ = 
\int_{0}^{s} dt  \frac{d}{dt} \left( \alpha_{F+X}(t) \circ \alpha_{X}(s-t)(Y) \right) = \\ =
\int_{0}^{s} dt \left( \alpha_{F+X}(t) \circ \ad_{F} \circ \alpha_{X}(s-t)(Y)  \right)
\end{multline}
is an almost local observable. Since $\alpha_X(t)(Y)$ doesn't excite the state $\psi$, we have
\begin{multline}
\langle \left( \ad_{\alpha_{F+X}(t)(Y)} \circ \alpha_{F+X}(t) \circ \alpha_{-(F+X+Y)}(t) \right) (\CO)  \rangle_{\psi} = \\ = \langle \left( \ad_{A(t)} \circ \alpha_{F+X}(t) \circ \alpha_{-(F+X+Y)}(t) \right) (\CO)  \rangle_{\psi}
\end{multline}
and therefore $\alpha_{F+X}(s) \circ \alpha_{-(F+X+Y)}(s) (\psi)$ is the same as a conjugation of $\psi$ by an almost local unitary, that corresponds to an automorphism $\alpha_{A}(s)$.
\end{proof}

\section{The analogue of the Laughlin argument}\label{app:AvronSeilerSimon}

In this appendix we give an alternative proof of the fact that $\sigma \in \ZZ$ for $U(1)$-invariant invertible gapped lattice systems using a version of the Laughlin argument. This proof is in the spirit of \cite{avron1994}, where it was shown that for gapped systems of free fermions the threading of a unit of flux through a point leads to an inflow of an integer charge.

As was shown in section \ref{HallInv}, a vortex state $\Ups_{ABC}(\Psi)$ for the composite system $(\SA,H,\Psi)$ can be obtained by a conjugation with an almost local unitary observable $\CB_{ABC}$ localized at the point $ABC$. The automorphism $\Ups_{ABC}$ is locally generated by $U(1)$- invariant 0-chain. Although it does not preserve the state $\Psi$, for any MDP function $f(r)=\Or$ and any observable $\CO$ with $\|\CO\|=1$ which is $f$-localized at a point outside of the disk $D$ of radius $L$ with the center at $ABC$ one has $\lal \Ups_{ABC}(\CO) \ral_{\Psi} = \lal \CO \ral_{\Psi} + \OL$. By the Remark \ref{rmk:alphapreserves}, the results of section \ref{ssec:chargepumping1} are applicable for any region $\Gamma$ outside of the disk $D$.

Let $\Gamma$ be an annulus whose inner boundary is a circle $\CS$ of radius $L$ with the center at $ABC$. Then we have
\beq
\exp \l 2 \pi i \lal T_{\CT \CS} \ral_{\Psi} \r = 1 + \OL  .
\eeq
On the other hand,
\begin{multline}
\lal T_{\CT \CS} \ral_{\Psi} = \\ =
\int_{0}^{2 \pi} \lal \alpha_{Q_A-K_{AB}}(\phi)\l  i[K_{(A \cap \bar{\Gamma})B} , Q_{\Gamma}] - i [K_{(A \cap \Gamma)B} , Q_{\bar{\Gamma}}]  \r \ral_{\Psi} + \OL = \\ =
\int_{0}^{2 \pi} \lal \alpha_{\tilde{Q}_A}(\phi)\l  i[K_{(A \cap \bar{\Gamma})B} , Q_{\Gamma}] - i [K_{(A \cap \Gamma)B} , Q_{\bar{\Gamma}}]  \r \ral_{\Psi} + \OL = \\ =
2 \pi i \lal  [K_{(A \cap \bar{\Gamma})B} , Q_{\Gamma}] - [K_{(A \cap \Gamma)B} , Q_{\bar{\Gamma}}] \ral_{\Psi} + \OL = \\ =
2 \pi i \lal [K_{(A \cap \bar{\Gamma})B} + K_{(A \cap \Gamma)B},K_{\Gamma \bar{\Gamma}}] \ral_{\Psi} + \OL = \\ =
2 \pi i \lal [K_{AB},K_{\Gamma \bar{\Gamma}}] \ral_{\Psi} + \OL = - \sigma + \OL .
\end{multline}
In the second line we have used Lemma \ref{lma:FGA}. Thus, taking the limit $L \to \infty$, we get $\sigma \in \mathbb Z$. This computation also shows that $\sigma=2\pi\sigma_{Hall}$ can be interpreted as the net charge transported through a large circle as one inserts a vortex at its origin.

\section{No non-trivial superselection sectors for a factorized state}\label{app:trivial}
In this appendix we show that pure states created by acting on a factorized pure state $\psi_0$ with certain locally generated automorphisms are in the same superselection sector as $\psi_0$. More precisely, they can be obtained from $\psi_0$ by  conjugation with a unitary element of $\SAal$.

\subsection{1d systems}

Let $\Lambda=\ZZ\subset\RR$ and $\SA$ be a quasi-local algebra for the lattice $\Lambda$. We will assume that $d_j^2={\rm dim}\, \SA_j$ grows at most polynomially for $j\ra\infty$. 
Let $\alpha$ be an automorphism of $\SA$ locally generated by a 0-chain $F$ that preserves a factorized state $\psi_0$. Let $A$ be a region $j<0$ and let us set $B=\bar{A}$. We denote by $AB$ the point $j=0$.

Let us first show that the state $\psi = \alpha_{F_{B}}(\psi_0)$ is unitary equivalent to  $\psi_0$. Let $\psi^{(A)}$ and $\psi^{(B)}$ be restrictions of $\psi$ to $\SA_A$ and $\SA_B$. Similarly, let $\psi^{(A)}_0$ and $\psi^{(B)}_0$ be restrictions of $\psi_0$. The states $\psi^{(A)}_0$ and $\psi^{(B)}_0$ are factorized and pure, while the states $\psi^{(A)}$ and $\psi^{(B)}$, in general, are neither. Let $\Gamma_r$ be a disk of radius $r$ with the center at the point $AB$. Lemma \ref{lma:FGA} implies that for any observable $\CA$ which is localized on $A \cap \Gamma_r$ we have $|\lal \CA \ral_{\psi} - \lal \CA \ral_{\psi_0}|=|\lal \alpha_{F_B}(\CA) \ral_{\psi_0}-\lal \CA \ral_{\psi_0}| \leq h(r) \|\CA\|$, where $h(r)=\Or$ is an MDP function which depends only on $F$. Therefore, for any $\epsilon>0$ there is $r$ such that for any observable $\CA$ localized on $A \cap \bar{\Gamma}_r$ we have
\beq\label{BRcorr}
|\lal \CA \ral_{\psi^{(A)}} - \lal \CA \ral_{\psi_0^{(A)}}| \leq \epsilon \|\CA\| .
\eeq
One can express this by saying that the states $\psi^{(A)}$ and $\psi^{(A)}_0$ agree at infinity. By Cor. 2.6.11 of \cite{bratteli2012operator}, the state $\psi^{(A)}$ is quasi-equivalent to $\psi^{(A)}_0$.

A similar argument shows that $\psi^{(B)}$ is quasi-equivalent to $\psi^{(B)}_0$. Indeed, for any observable $\CB$ which is localized on $B\cap\Gamma_r$ we have $|\langle \CB\rangle_\psi-\lal \CB \ral_{\psi_0}|\leq |\langle \alpha_{F_A}^{-1}(\CB)\rangle_{\psi_0}-\lal \CB \ral_{\psi_0}|+ h_1(r) \|\CB\|\leq h_2(r)  \|\CB\|$, where $h_1(r)=\Or$ and $h_2(r)=\Or$ are some MDP functions. Here we used the invariance of $\psi_0$ under $\alpha=\alpha_F$, as well as the fact that   $\alpha_F$ differs from  $\alpha_{F_A}\circ\alpha_{F_B}$ by a conjugation with an almost local observable (see Lemma \ref{lma:FpX}). Then we again use Cor. 2.6.11 of \cite{bratteli2012operator}.

We have shown that the state $\psi^{(A)} \otimes \psi^{(B)}$ on $\SA=\SA_A\otimes\SA_B$ is quasi-equivalent to $\psi^{(A)}_0 \otimes \psi^{(B)}_0 = \psi_0$. In order to prove unitary equivalence of pure states $\psi$ and $\psi_0$ it remains to show that the states $\psi$ and $\psi^{(A)} \otimes \psi^{(B)}$ are quasi-equivalent. For any site $j$ at a distance $r$ from $AB$ and for any observable $\CA \in \SA_j$ we have $|\lal \CA \ral_{\psi} - \lal \CA \ral_{\psi_0}| \leq \|\CA\| f(r)$ where $f(r)=\Or$ is an MDP function which does not depend on $\CA$. Therefore for on-site density matrices $\rho_j$ and $\rho_{0j}$ for the states $\psi$ and $\psi_0$, correspondingly, we have $\| \rho_j - \rho_{0j} \|_{tr.} \leq f(|j|)$, where $\| \cdot \|_{tr.}$ is a trace norm, which is dual to the spectral norm. If $\eps_j = f(|j|)$ is less than $(1/e)$, Fannes inequality implies that the entanglement entropy $S_j$ of the site $j$ satisfies $S_j \leq \eps_j \log(d_j/\eps_j)$. Since $d_j$ grows at most polynomially with $|j|$, we get
\beq
S = \sum_{j \in \Lambda} S_j < \infty .
\eeq
The entanglement entropy of any region $C\subset\Lambda$ is bounded by $S$. Together with Theorem 1.5 from \cite{matsui2013boundedness}, this implies that the states $\psi$ and $\psi^{(A)} \otimes \psi^{(B)}$ are quasi-equivalent. 

It follows that $\psi$ is a vector state $\lal \ups |...| \ups \ral$ in the GNS Hilbert space corresponding to $\psi_0$. The second step is to show that $| \ups \ral$ can be approximated to an $\Or$ accuracy by a vector of the form $\Pi(\CB^{(r)})|0\ral$, where $|0\ral$ is the vacuum vector in the GNS representation of $\psi_0$ and $\CB^{(r)}$ is a local unitary observable with a localization set $\Gamma_r$. 

To see this, note first that for any local $\CA$ localized on some finite region $C \subset \bar{\Gamma}_r$ with $\|\CA\|=1$ we have $\lal \ups| \CA | \ups \ral = \lal 0|\CA| 0 \ral + \Or$. Indeed, for the entropy $S_C$ of the region $C$ we have $S_C \leq g(r)$ for some MDP function $g(r)=\Or$. Let $\rho^{(C)}$ and $\rho^{(C)}_0$ be density matrices of $\psi$ and $\psi_0$ restricted to $C$, correspondingly. Since $(-x \log x) > x$ for $x<1/e$, for $r>r_c$ we have $\| \rho^{(C)} - \tilde{\rho}^{(C)}_0 \|_{tr.} \leq 2 g(r)$ for a restriction $\tilde{\rho}^{(C)}_0$ of some product state $\tilde{\psi}_0$ and some fixed $r_c$ that depends on $g(r)$. On the other hand we have $\|\rho^{(C \cap A)} - \rho^{(C \cap A)}_0 \|_{tr.} \leq h(r)$ and $\|\rho^{(C \cap B)} - \rho^{(C \cap B)}_0 \|_{tr.} \leq h(r)$ for the restrictions of $\rho^{(C)}$ and $\rho^{(C)}_0$ to $A$ and $B$. Therefore $\| \rho^{(C)} - \rho^{(C)}_0 \|_{tr.} = \Or$.

The GNS Hilbert space $\CH$ of $\psi_0$ is a tensor product $\CH_{\Gamma_r} \otimes \CH_{\bar{\Gamma}_r}$ of Hilbert spaces corresponding to the interior and the exterior of $\Gamma_{r}$, and the vacuum vector $|0 \ral$ is factorized as well: $|0 \ral= |0_{\Gamma_r}\ral \otimes |0_{\bar{\Gamma}_r}\ral$. The argument of the preceding paragraph shows that $|\ups \ral = |a_{\Gamma_r} \ral + \Or$ for some $|a_{\Gamma_r} \ral \in \CH_{\Gamma_r} \otimes \text{Span}(|0_{\bar{\Gamma}_{r}})$. This implies that the vector $|\ups \ral$ can be written as follows:
\beq
|\ups \ral = \Pi(\CB^{(r)}) |0 \ral + \Or ,
\eeq
where $\CB^{(r)}$ is a local unitary with a localization set inside $\Gamma_r$. The sequence of vectors $|a_n\ral = |a_{\Gamma_{r_n}} \ral$ for $r_n=n$ converges to $|\ups\ral$. There is a unique special unitary $U_n$ that transforms $|0\ral$ to $|a_{n} \ral$ by performing a rotation in the plane spanned by $|0\ral$ and $|a_{n}\ral$. Clearly, $U_n$ is an image of some local unitary observable localized on $\Gamma_{r_n}$. Since $\lal a_{n+1}|a_{n} \ral = 1 - \Or$, we have $\| U_{n+1}-U_{n} \|=\Or$. Therefore $\lim_{n \to \infty} U_{n}$ converges to an image of an element of some $\CB \in \SAal$.

\subsection{2d systems}

Let $\Lambda\subset\RR^2$ be a lattice and $\SA$ be a quasi-local algebra for $\Lambda$ such that $d_j^2={\rm dim}\, \SA_j$ grows at most polynomially at infinity. Let $\psi_0$ be a factorized pure state on $\SA$. Let us consider three regions $A$,$B$,$C$ meeting at a point $ABC$ (see Fig. 6a). Let $\alpha$ be a locally generated automorphism which is approximately localized on the path $AB$, and such that the superselection sector of a state $\psi = \alpha(\psi_0)$ is invariant under rearrangement of the paths as discussed in section \ref{vortices}. We will first show  that $\psi$ and $\psi_0$ are unitarily equivalent. 
Since both states are pure, it is sufficient to show that they are quasi-equivalent.

Let us fix a cone $\Sigma$ with the center at $ABC$ that contains the region $A$, and let $\bar{\Sigma}$ be its complementary cone. By $\psi^{(\Sigma)}$ and $\psi^{(\bar{\Sigma})}$ we denote the restrictions of the state $\psi$ to $\SA_{\Sigma}$ and $\SA_{\bar{\Sigma}}$, respectively. Let $\Gamma_r$ be a disk of radius $r$ with the center at $ABC$. Similarly to the 1d case, for any $\epsilon>0$ there is an $r>0$ such that for any observable $\CA$ localized on $\bar{\Sigma} \cap \bar{\Gamma}_r$ we have
\beq
|\lal \CA \ral_{\psi^{(\bar{\Sigma})}} - \lal \CA \ral_{\psi_0^{(\bar{\Sigma})}}| \leq \epsilon \|\CA\| .
\eeq 
Invariance of the superselection sector under rearrangements of the paths implies that the same is true for $\psi^{(\Sigma)}$, $\psi^{(\Sigma)}_0$ and observables localized on $\Sigma \cap \bar{\Gamma}_r$. Therefore the states $\psi^{(\Sigma)} \otimes \psi^{(\bar{\Sigma})}$ and $\psi_0$ are quasi-equivalent.

In order to show the quasi-equivalence of states $\psi$ and $\psi^{(\Sigma)} \otimes \psi^{(\bar{\Sigma})}$ we use the same reasoning as in the previous subsection. The same argument shows that the entropy $S_j$ of the restriction of $\psi$ to $j\in\Lambda$ satisfies $S_j \leq g(r)$ for a function $g(r)=\Or$. Here $r$ is the distance between $j$ and the point $ABC$. This implies that the entanglement entropy of any region $\Gamma$ is bounded from above by $S = \sum_{j \in \Lambda} S_j < \infty$. The proof of Theorem 1.5 from \cite{matsui2013boundedness} applies word for word to this setup, with the replacement of left and right half-lines by $\Sigma$ and $\bar{\Sigma}$, respectively, and of intervals $[0,r]$ with $\Sigma \cap \Gamma_r$. Therefore $\psi$ and $\psi^{(\Sigma)} \otimes \psi^{(\bar{\Sigma})}$ are quasi-equivalent.

Finally, having a unitary equivalence between $\psi$ and $\psi_0$, the same reasoning as in the previous subsection shows that $\psi$ is a vector state of the form $\lal 0 |\Pi(\CB^*) ... \Pi(\CB) |0\ral$ for some unitary $\CB \in \SAal$. This is the desired result.

\end{document}